\newcolumntype{Y}{>{\centering\arraybackslash}X} 
\DeclarePairedDelimiter\floor{\lfloor}{\rfloor}
\newcommand{\RR}{\mathbb{R}}
\DeclareMathOperator{\E}{\mathbb{E}}
\DeclareMathOperator{\Prob}{\mathbb{P}}
\DeclareMathOperator{\Ind}{\mathds{1}}
\DeclareMathOperator{\sgn}{\text{sgn}}
\newtheorem{theorem}{Theorem}[section]
\newtheorem{proposition}[theorem]{Proposition}
\newtheorem{lemma}[theorem]{Lemma}
\newtheorem{remark}[theorem]{Remark}
\newtheorem{assumption}[theorem]{Assumption}
\newtheorem{corollary}[theorem]{Corollary}
\numberwithin{equation}{section}
\title{Strong convergence rates for Euler approximations to\\ a class of stochastic path-dependent volatility models}
\author{\scshape{andrei cozma}\thanks{\footnotesize\scshape{Mathematical Institute, University of Oxford, Oxford OX2 6GG, United Kingdom}\protect\\ \footnotesize\scshape{The first author gratefully acknowledges financial support from the Engineering and Physical Sciences Research Council (research grant EPSRC EP/N509711/1).}} \and \scshape{christoph reisinger\footnotemark[1]}}
\date{}
\begin{document}
\maketitle

\begin{abstract}
\noindent
We consider a class of stochastic path-dependent volatility models where the stochastic volatility, whose square follows the Cox--Ingersoll--Ross model, is multiplied by a (leverage) function of the spot process, its running maximum, and time. We propose a Monte Carlo simulation scheme which combines a log-Euler scheme for the spot process with the full truncation Euler scheme or the backward Euler--Maruyama scheme for the squared stochastic volatility component. Under some mild regularity assumptions and a condition on the Feller ratio, we establish the strong convergence with order 1/2 (up to a logarithmic factor) of the approximation process up to a critical time. The model studied in this paper contains as special cases Heston-type stochastic-local volatility models, the state-of-the-art in derivative pricing, and a relatively new class of path-dependent volatility models. The present paper is the first to prove the convergence of the popular Euler schemes with a positive rate, which is moreover consistent with that for Lipschitz coefficients and hence optimal.

\vspace{1em}
\noindent
\textbf{Keywords:} Path-dependent volatility, running maximum, Cox--Ingersoll--Ross process, Euler scheme, Monte Carlo simulation, strong convergence order

\vspace{.6em}
\noindent
\textbf{Mathematics Subject Classification (2010):} 60H35, 65C05, 65C30
\end{abstract}

\section{Introduction}\label{sec:intro}

The two major families of option pricing models are local volatility (LV) (e.g., \cite{Dupire:1994}) and stochastic volatility (SV) (e.g., \cite{Heston:1993}). LV models are flexible enough to perfectly replicate the market prices of vanilla options, whereas SV models generate much richer and more realistic spot-vol dynamics. The class of stochastic-local volatility (SLV) models introduced in \cite{Jex:1999,Lipton:2002a,Lipton:2002b} contain a stochastic volatility component and a local volatility component (the leverage function), and combine advantages of the two. According to \cite{Ren:2007,Tian:2015,Stoep:2014}, they allow for a better calibration to vanilla options and improve the pricing and risk-management performance. SLV models were recently referred to in \cite{Lipton:2014} as the de facto standard for pricing foreign exchange (FX) options.

European options are actively traded on many asset classes, including FX, equities and commodities. Barrier options are also actively traded in these markets, and especially in FX markets. Their popularity can be explained by two key factors. First, they are useful in limiting the risk exposure of an investor. Second, they offer additional flexibility and can match an investor's view on the market for a lower price than a European option. Barrier options, and in particular no-touch options, are so heavily traded that they are no longer considered exotic options. Hence, a pricing model that allows a perfect calibration to both European and no-touch options is desirable. While the prices of European options depend only on the final distribution of the underlying spot process (e.g., a stock price or a spot FX rate), the prices of no-touch options depend on the entire distribution of the underlying throughout the duration of the contract. Path-dependent volatility (PDV) models (see, e.g., \cite{Guyon:2014} and the references therein) assume that the volatility depends on the path of the underlying through the current value of the spot and a finite number of path-dependent variables, like the running or the moving average, the running maximum or minimum etc. PDV models are complete, can be perfectly calibrated to both vanilla and no-touch options (e.g., \cite{Mariapragassam:2017}), and can produce rich implied volatility dynamics. Furthermore, according to \cite{Brunick:2013}, the joint distribution of the spot process and any path-dependent quantity of any SV or SLV model agrees with that of a suitably chosen PDV model. As a consequence, there is always a PDV model that produces the same prices of both vanilla and exotic options and that can reproduce SLV spot-vol dynamics.

Stochastic path-dependent volatility (SPDV) models were briefly discussed in \cite{Guyon:2014} as a generalization of PDV models. Although incomplete, they generate richer spot-vol dynamics, and include as special cases both SLV and PDV models. In this paper, we consider a Heston-type SPDV model because the Cox--Ingersoll--Ross (CIR) process \cite{Cox:1985} for the squared stochastic volatility is widely used in the industry due to its desirable properties, such as mean-reversion, non-negativity and analytical tractability, which allows for a fast calibration of the stochastic volatility parameters. Furthermore, we introduce path-dependency only through the running maximum of the underlying spot process, which allows for an exact calibration to both vanilla and no-touch options \cite{Mariapragassam:2017}.

The SPDV model is non-affine and hence a closed-form solution to the European option valuation problem is not available. Therefore, we use Monte Carlo simulation methods \cite{Glasserman:2004} -- which can handle path-dependent features easily -- and approximate the solution to the stochastic differential equation (SDE) using an explicit or implicit Euler or Milstein discretization. Weak convergence is important when estimating expectations of (discounted) payoffs. Strong convergence plays a crucial role in multilevel Monte Carlo methods \cite{Giles:2008,Giles:2009,Kebaier:2005} and may be required for some complex path-dependent derivatives. Furthermore, pathwise convergence follows automatically \cite{Kloeden:2007}.

The usual theorems in \cite{Kloeden:1999} on the convergence of numerical simulations assume that the drift and diffusion coefficients are globally Lipschitz continuous and satisfy a linear growth condition, whereas \cite{Higham:2002} extended the analysis to locally Lipschitz SDEs. The standard convergence theory does not apply to the present work because of the explicit dependence of the drift and diffusion coefficients on the running maximum and also since the square-root diffusion coefficient of the CIR process is not Lipschitz. Strong and weak divergence of Euler approximations to SDEs with superlinearly growing coefficients was proved in \cite{Hutzenthaler:2011}. Furthermore, a considerable amount of research has recently been devoted to proving that approximation schemes for some multi-dimensional SDEs with infinitely often differentiable and globally bounded coefficients converge arbitrarily slowly \cite{Gerencser:2017,Hairer:2015,Jentzen:2016,Muller-Gronbach:2017,Yaroslavtseva:2016}. In particular, it was shown in \cite{Gerencser:2017} that for any arbitrarily slow speed of convergence, there exists a 2-dimensional SDE with smooth and bounded coefficients such that no approximation method based on finitely many observations of the driving Brownian motion can converge in $L^{1}$ to the solution faster than the given speed of convergence. These slow convergence phenomena raise the natural question of whether Euler approximations to the class of SPDV models considered in this paper also converge in the strong sense arbitrarily slowly, if at all.

The literature on the convergence of Monte Carlo methods under stochastic volatility is scarce. The Heston stochastic volatility model was considered in \cite{Higham:2005} and the strong convergence without a rate as well as the weak convergence for bounded payoffs were derived for a stopped Euler scheme with a reflection fix in the diffusion coefficient. For the log-Heston model, the convergence in $L^{p}$ with order 1/2 up to a logarithmic factor was established in \cite{Kloeden:2012} when the Euler scheme and the backward (drift-implicit) Euler--Maruyama (BEM) scheme are employed in the discretization of the (log-)spot process and its squared volatility, respectively. For the Heston model, \cite{Altmayer:2017} proved the weak convergence with order 1 of a log-Euler (LE) scheme for the spot process and a drift-implicit Milstein scheme for its squared volatility. Moreover, using the full truncation Euler (FTE) or the BEM scheme instead to discretize the squared volatility, the convergence in $L^{p}$ with order 1/2 up to a logarithmic factor can easily be deduced by using some strong convergence results of \cite{Cozma:2017c,Dereich:2012} together with a recent moment bound result of \cite{Cozma:2016a}. A hybrid Heston-type stochastic-local volatility model with stochastic short interest rates was considered in \cite{Cozma:2016a} and the strong convergence without a rate as well as the weak convergence for vanilla and exotic options were derived when the spot process is discretized via the LE scheme and its squared volatility and the short rates are discretized via the FTE scheme. The convergence rate, however, remained an open question and this paper is the first to address it.

In this work, we prove the strong convergence in $L^{p}$ with order 1/2 up to a logarithmic factor of the Monte Carlo method with the LE scheme for the spot process and the (explicit) FTE scheme proposed in \cite{Lord:2010} or the (implicit) BEM scheme proposed in \cite{Alfonsi:2005} for the squared volatility. The FTE scheme is arguably the most widely used scheme in practice because it preserves the positivity of the original process, is easy to implement and, perhaps most importantly, is found empirically to produce the smallest bias of all explicit Euler schemes with different fixes at the boundary \cite{Lord:2010}. The BEM scheme is often encountered in the finance literature and its convergence properties are well-understood \cite{Alfonsi:2013,Dereich:2012,Hutzenthaler:2014,Neuenkirch:2014}. Hence, we obtain the optimal strong convergence rate for the numerical approximation of SDEs with globally Lipschitz coefficients \cite{Hofmann:2001,Muller-Gronbach:2002}. As a consequence, the Euler discretization of the spot process also converges with weak order 1/2 (up to a logarithmic factor), which is optimal because the Euler scheme for the running maximum converges with weak order of at most 1/2 (see, e.g., \cite{Asmussen:1995,Glasserman:2004}) rather than the weak order 1 typical for SDEs with smooth coefficients.

In summary, to the best of our knowledge, this paper is the first to establish a positive strong convergence rate for Euler approximations to models with: (1) path-dependent volatility dynamics; (2) local and stochastic volatility dynamics, even without the path-dependency.

The remainder of this paper is structured as follows. In Section~\ref{sec:setup}, we discuss the model and the postulated assumptions. Then, we define the simulation scheme and discuss the main theorem. In Section~\ref{sec:vol}, we present some auxiliary integrability and convergence results for the CIR process and its FTE and BEM discretizations. In Section~\ref{sec:spot}, we prove the strong convergence with a rate of the approximated spot process. In Section~\ref{sec:numerics}, we conduct numerical tests for the strong and weak convergence rates that validate and complement our theoretical findings. Section~\ref{sec:conclusion} contains a short discussion. Finally, detailed proofs of some technical results are given in the Appendix.

\section{Set-up and main result}\label{sec:setup}

\subsection{Model assumptions}\label{subsec:model}

Consider a filtered probability space $\big(\Omega,\mathcal{F},(\mathcal{F}_{t})_{t\geq0},\mathbb{P}\big)$ satisfying the usual conditions and let $W^{s}=\big(W^{s}_{t}\big)_{t\geq0}$ and $W^{v}=\big(W^{v}_{t}\big)_{t\geq0}$ be one-dimensional standard $\mathcal{F}_{t}$-adapted Brownian motions. We study the model
\begin{align}\label{eq2.1.1}
	\begin{dcases}
	dS_{t} = \mu(t,S_t,M_t)S_{t}dt + \sqrt{v_{t}}\hspace{.5pt}\sigma(t,S_t,M_t)S_{t}\hspace{1pt}dW^{s}_{t},\hspace{.75em} S_{0}>0, \\[2pt]
	dv_{t} \hspace{1pt} = k(\theta-v_{t})dt + \xi\sqrt{v_{t}}\,dW^{v}_{t},\hspace{.75em} v_{0}>0, \\[1pt]
	M_{t} \hspace{1.5pt}= \sup_{u\in[0,t]} S_u,
	\end{dcases}
\end{align}
where $d\langle W^{s},W^{v}\rangle_{t}=\rho\hspace{.5pt}dt$ with $\rho\in(-1,1)$. For a fixed time horizon $T>0$, let
\begin{equation}\label{eq2.1.2}
\mu:[0,T]\!\times\!\big\{(x,y)\in\mathbb{R}^{2}_{+} \,|\, S_{0}\vee x\leq y\big\}\rightarrow\mathbb{R}
\end{equation}
and
\begin{equation}\label{eq2.1.3}
\sigma:[0,T]\!\times\!\big\{(x,y)\in\mathbb{R}^{2}_{+} \,|\, S_{0}\vee x\leq y\big\}\rightarrow\mathbb{R}_{+}.
\end{equation}
On the one hand, when the running maximum component vanishes, i.e., when $\mu(t,S_t,M_t)=\mu(t,S_t)$ and $\sigma(t,S_t,M_t)=\sigma(t,S_t)$, the SPDV model \eqref{eq2.1.1} collapses to a SLV model. The SPDV model further reduces to a LV model if the stochastic volatility component also vanishes, i.e., if we take $\xi$ to be zero, or to a SV model if we take $\mu$ and $\sigma$ to be constant.

On the other hand, when the stochastic volatility component vanishes, i.e., when $\xi$ is zero, the SPDV model collapses to a PDV model, which can also be regarded as the Markovian projection of an It\^o process onto the spot and its running maximum \cite{Brunick:2013}.

In this paper, we work under the following model assumptions:
\begin{assumption}\label{Asm2.1}
The drift and diffusion functions $\mu$ and $\sigma$ are bounded, i.e., there exist non-negative constants $\mu_{max}$ and $\sigma_{max}$ such that, for all $t\in [0,T]$ and $0\leq x\leq S_{0}\vee x\leq y$, we have
\begin{equation}\label{eq2.0.1}
\left|\mu(t,x,y)\right| \leq \mu_{max}
\end{equation}
and
\begin{equation}\label{eq2.0.2}
0 \leq \sigma(t,x,y) \leq \sigma_{max}.
\end{equation}
\end{assumption}
\begin{assumption}\label{Asm2.2}
The drift and diffusion functions $\mu$ and $\sigma$ are bounded and piecewise 1/2-H\"older continuous in time, respectively, and Lipschitz continuous in log-spot and log-running maximum, i.e., there exist $N_{T}\in\mathbb{N}$ and non-negative constants $C_{\mu,t}$, $C_{\mu,x}$, $C_{\mu,m}$, $C_{\sigma,t}$, $C_{\sigma,x}$, $C_{\sigma,m}$ and $(C_{\sigma,t,j})_{1\leq j\leq N_{T}}$ such that, for all $t_{1}, t_{2} \in [0,T]$, $0<x_{1}\leq S_{0}\vee x_{1}\leq y_{1}$ and $0<x_{2}\leq S_{0}\vee x_{2}\leq y_{2}$, we have
\begin{align}\label{eq2.0.3}
\left|\mu(t_{1},x_{1},y_{1})-\mu(t_{2},x_{2},y_{2})\right| &\leq C_{\mu,t}\Ind_{t_{1}\neq t_{2}} +\hspace{2pt} C_{\mu,x}\left|\log(x_{1})-\log(x_{2})\right| \nonumber\\[3pt]
&+ C_{\mu,m}\left|\log(y_{1})-\log(y_{2})\right|
\end{align}
and
\begin{align}\label{eq2.0.4}
\left|\sigma(t_{1},x_{1},y_{1})-\sigma(t_{2},x_{2},y_{2})\right| &\leq C_{\sigma,t}\sqrt{\left|t_{1}-t_{2}\right|} + \sum_{j=1}^{N_{T}}{C_{\sigma,t,j}\Ind_{t_{1}\wedge\hspace{.5pt}t_{2}\hspace{.5pt}<\frac{jT}{N_{T}}\leq\hspace{.5pt}t_{1}\vee\hspace{.5pt}t_{2}}} \nonumber\\[1pt]
&+ C_{\sigma,x}\left|\log(x_{1})-\log(x_{2})\right| + C_{\sigma,m}\left|\log(y_{1})-\log(y_{2})\right|.
\end{align}
\end{assumption}

As an aside, note that jumps in the function $\sigma$ do not have to be equally spaced as long as they occur at the time discretization nodes.

\begin{remark}\label{Rem2.3}
If Assumptions \ref{Asm2.1} and \ref{Asm2.2} hold, then for the purpose of this paper we choose the smallest non-negative constants possible. In particular,
\begin{align}
\sigma_{max} &= \sup\Big\{\sigma(t,x,y) \,\big|\; t\in[0,T],\, 0\leq x\leq S_{0}\vee x\leq y\Big\}, \label{eq4.1.23}\\[4pt]
C_{\sigma,x} &= \sup\bigg\{\frac{\left|\sigma(t,x_{1},y)-\sigma(t,x_{2},y)\right|}{\left|\log(x_{1})-\log(x_{2})\right|} \,\big|\; t\in[0,T],\, 0<x_{1}<x_{2}\leq S_{0}\vee x_{2}\leq y\bigg\}, \label{eq4.1.27}\\[4pt]
C_{\sigma,m} &= \sup\bigg\{\frac{\left|\sigma(t,x,y_{1})-\sigma(t,x,y_{2})\right|}{\left|\log(y_{1})-\log(y_{2})\right|} \,\big|\; t\in[0,T],\, 0<x\leq S_{0}\vee x\leq y_{1}<y_{2}\bigg\}. \label{eq4.1.29}
\end{align}
\end{remark}

We present a set of stronger assumptions and discuss them from a practical point of view.
\begin{assumption}\label{Asm2.4}
The drift and diffusion functions $\mu$ and $\sigma$ are constant outside a bounded interval, i.e., we can find $0\leq S_{min}<S_{0}<S_{max}$ such that, for all $t \in [0,T]$ and $0\leq x\leq S_{0}\vee x\leq y$, we have
\begin{equation}\label{eq2.1.4}
\mu(t,x,y)=\mu(t,S_{min}\vee x\wedge S_{max},S_{min}\vee y\wedge S_{max})
\end{equation}
and
\begin{equation}\label{eq2.1.5}
\sigma(t,x,y)=\sigma(t,S_{min}\vee x\wedge S_{max},S_{min}\vee y\wedge S_{max}).
\end{equation}
\end{assumption}
\begin{assumption}\label{Asm2.5}
The drift and diffusion functions $\mu$ and $\sigma$ are bounded and piecewise 1/2-H\"older continuous in time, respectively, and Lipschitz continuous in spot and running maximum, i.e., there exist non-negative constants $C_{\mu,S}$, $C_{\mu,M}$, $C_{\sigma,S}$ and $C_{\sigma,M}$ such that, for all $t_{1}, t_{2} \in [0,T]$, $0\leq x_{1}\leq S_{0}\vee x_{1}\leq y_{1}$ and $0\leq x_{2}\leq S_{0}\vee x_{2}\leq y_{2}$, we have
\begin{equation}\label{eq2.1.6}
\left|\mu(t_{1},x_{1},y_{1})-\mu(t_{2},x_{2},y_{2})\right| \leq C_{\mu,t}\Ind_{t_{1}\neq t_{2}} +\hspace{2pt} C_{\mu,S}\left|x_{1}-x_{2}\right| + C_{\mu,M}\left|y_{1}-y_{2}\right|
\end{equation}
and
\begin{align}\label{eq2.1.7}
\left|\sigma(t_{1},x_{1},y_{1})-\sigma(t_{2},x_{2},y_{2})\right| &\leq C_{\sigma,t}\sqrt{\left|t_{1}-t_{2}\right|} + \sum_{j=1}^{N_{T}}{C_{\sigma,t,j}\Ind_{t_{1}\wedge\hspace{.5pt}t_{2}\hspace{.5pt}<\frac{jT}{N_{T}}\leq\hspace{.5pt}t_{1}\vee\hspace{.5pt}t_{2}}} \nonumber\\[1pt]
&+ C_{\sigma,S}\left|x_{1}-x_{2}\right| + C_{\sigma,M}\left|y_{1}-y_{2}\right|.
\end{align}
\end{assumption}
When modelling asset prices or spot FX rates, the drift function $\mu$ is usually a combination of deterministic short interest rates and dividend yields, such that $\mu(t,S_t,M_t)=\mu(t)$ satisfies the above assumptions. For the diffusion function $\sigma$ to be consistent with European call and put prices, it has to be given by the ratio between a calibrated Brunick--Shreve volatility and the square-root of the conditional expectation of the squared stochastic volatility \cite{Brunick:2013,Mariapragassam:2017}. In case of no running maximum component, the leverage function $\sigma$ that is consistent with vanilla prices is given by the ratio between a calibrated Dupire local volatility and the square-root of the conditional expectation of the squared stochastic volatility \cite{Guyon:2014,Guyon:2011,Ren:2007}. In practice, the leverage function is defined on a grid of points (20 points per year in time and 30 points in space usually suffice for an acceptable calibration error \cite{Cozma:2017a,Guyon:2011}), interpolated flat-forward in time and using cubic splines in spot, and extrapolated flat outside an interval. Hence, there is no significant loss of generality from a practical point of view in making these two assumptions.

\begin{proposition}\label{Prop2.6}
Suppose that Assumptions \ref{Asm2.4} and \ref{Asm2.5} are satisfied. Then Assumptions \ref{Asm2.1} and \ref{Asm2.2} are also satisfied.
\end{proposition}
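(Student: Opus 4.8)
The plan is to transfer both conclusions from the clamped spatial variables back to the original ones using Assumption~\ref{Asm2.4}. Write $\phi(z):=S_{min}\vee z\wedge S_{max}$; then $\phi$ is nondecreasing, maps $\mathbb{R}_+$ into $[S_{min},S_{max}]$, satisfies $\phi(S_0)=S_0$ because $S_{min}<S_0<S_{max}$, and by \eqref{eq2.1.4}--\eqref{eq2.1.5} we have $\mu(t,x,y)=\mu(t,\phi(x),\phi(y))$ and $\sigma(t,x,y)=\sigma(t,\phi(x),\phi(y))$ on the admissible set $\mathcal{D}:=\{(x,y):0\le x\le S_0\vee x\le y\}$. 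A preliminary remark is that $\mathcal{D}$ is stable under clamping: if $(x,y)\in\mathcal{D}$ then $\phi(x)\ge0$, $\phi(x)\le\phi(y)$ by monotonicity, and $\phi(y)\ge\phi(S_0)=S_0$ since $y\ge S_0$, so $(\phi(x),\phi(y))\in\mathcal{D}$. The only analytic (as opposed to bookkeeping) ingredient is the elementary bound $|\phi(z_1)-\phi(z_2)|\le S_{max}\,|\log z_1-\log z_2|$ for all $z_1,z_2>0$: for $0<z_1\le z_2$, since $\phi$ is absolutely continuous with $\phi'=\Ind_{(S_{min},S_{max})}$ a.e.,
\[
\phi(z_2)-\phi(z_1)=\int_{z_1}^{z_2}\Ind_{(S_{min},S_{max})}(u)\,du\le\int_{z_1\wedge S_{max}}^{z_2\wedge S_{max}}du\le S_{max}\int_{z_1}^{z_2}\frac{du}{u}=S_{max}\left(\log z_2-\log z_1\right).
\]
This is precisely where Assumption~\ref{Asm2.4} is needed: without constancy of $\mu,\sigma$ for large spot, Lipschitz continuity in $S$ does not imply Lipschitz continuity in $\log S$.

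For Assumption~\ref{Asm2.1} I would compare to the reference point $(S_0,S_0)\in\mathcal{D}$. For $t\in[0,T]$ and $(x,y)\in\mathcal{D}$, applying \eqref{eq2.1.6} to the admissible pair $(\phi(x),\phi(y))$ and $(S_0,S_0)$ and using $\phi(x),\phi(y),S_0\in[S_{min},S_{max}]$ gives $|\mu(t,x,y)|=|\mu(t,\phi(x),\phi(y))|\le|\mu(0,S_0,S_0)|+C_{\mu,t}+(C_{\mu,S}+C_{\mu,M})(S_{max}-S_{min})=:\mu_{max}$. The same computation with \eqref{eq2.1.7}, together with $\sigma\ge0$ from \eqref{eq2.1.3}, gives $0\le\sigma(t,x,y)\le\sigma(0,S_0,S_0)+C_{\sigma,t}\sqrt{T}+\sum_{j=1}^{N_T}C_{\sigma,t,j}+(C_{\sigma,S}+C_{\sigma,M})(S_{max}-S_{min})=:\sigma_{max}$, which is \eqref{eq2.0.1}--\eqref{eq2.0.2}.

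For Assumption~\ref{Asm2.2}, fix $t_1,t_2\in[0,T]$ and admissible $(x_i,y_i)$ with $x_i>0$, and set $\hat{x}_i:=\phi(x_i)$, $\hat{y}_i:=\phi(y_i)$, $\hat{x}_0:=\hat{x}_1\wedge\hat{x}_2$. The four points $(\hat{x}_1,\hat{y}_1)$, $(\hat{x}_0,\hat{y}_1)$, $(\hat{x}_0,\hat{y}_2)$, $(\hat{x}_2,\hat{y}_2)$ all lie in $\mathcal{D}$ — each first coordinate is bounded above by the second coordinate of the same point, and each second coordinate is at least $S_0$ — so telescoping through them, applying \eqref{eq2.1.6} to consecutive pairs, and using $|\hat{x}_1-\hat{x}_0|+|\hat{x}_0-\hat{x}_2|=|\hat{x}_1-\hat{x}_2|$ yields
\[
|\mu(t_1,x_1,y_1)-\mu(t_2,x_2,y_2)|\le C_{\mu,t}\Ind_{t_1\ne t_2}+C_{\mu,S}\,|\hat{x}_1-\hat{x}_2|+C_{\mu,M}\,|\hat{y}_1-\hat{y}_2|.
\]
The bound from the first paragraph applied to the last two terms (valid since $x_i>0$ and $y_i\ge S_0>0$) then gives \eqref{eq2.0.3} with $C_{\mu,x}=C_{\mu,S}S_{max}$ and $C_{\mu,m}=C_{\mu,M}S_{max}$. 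The identical telescoping with \eqref{eq2.1.7}, the time and jump terms passing through unchanged (only the last step in the chain involves a change of time argument), gives \eqref{eq2.0.4} with $C_{\sigma,x}=C_{\sigma,S}S_{max}$ and $C_{\sigma,m}=C_{\sigma,M}S_{max}$. Together with the preceding paragraph this proves Assumption~\ref{Asm2.2}.

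Everything here is routine; the one step deserving a little care is the choice of intermediate point in the telescoping, since $\mu$ and $\sigma$ are defined only on $\mathcal{D}$ and the naive path that varies $x$ and then $y$ through $(\hat{x}_2,\hat{y}_1)$ can leave the admissible set. Routing through $\hat{x}_0=\hat{x}_1\wedge\hat{x}_2$ — or, equivalently, observing that at least one of $(\hat{x}_2,\hat{y}_1)$ and $(\hat{x}_1,\hat{y}_2)$ belongs to $\mathcal{D}$ and invoking the symmetry of the inequality to be proved under exchange of the two argument triples — removes this difficulty.
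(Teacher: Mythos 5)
Your proof is correct and follows essentially the same route as the paper's: reduce to the clamped arguments via Assumption~\ref{Asm2.4}, establish the key bound $|S_{min}\vee z_1\wedge S_{max}-S_{min}\vee z_2\wedge S_{max}|\leq S_{max}|\log z_1-\log z_2|$ (you via a pointwise integrand bound, the paper via the Mean-Value Theorem plus monotonicity of $f_s(a)$ and $a^{-1}f_s(a)$), and then substitute into \eqref{eq2.1.6}--\eqref{eq2.1.7}. Two small remarks: the telescoping through $(\hat{x}_0,\hat{y}_1),(\hat{x}_0,\hat{y}_2)$ is superfluous, since \eqref{eq2.1.6} and \eqref{eq2.1.7} are already stated as two-point bounds between arbitrary admissible triples and can be applied directly to $(t_1,\phi(x_1),\phi(y_1))$ and $(t_2,\phi(x_2),\phi(y_2))$; and your choice of reference point $(S_0,S_0)$ for the boundedness part is in fact cleaner than the paper's $(S_{min},S_{min})$, which does not lie in the domain \eqref{eq2.1.2} when $S_{min}<S_0$.
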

\begin{proof}
See Appendix \ref{sec:aux1}.
\end{proof}

\subsection{Simulation scheme}\label{subsec:scheme}

Let $N\in\mathbb{N}$ be a multiple of $N_{T}$ and consider a uniform grid
\begin{equation}\label{eq2.2.1}
T = N\delta t,\hspace{1em} t_{n}=n\delta t,\hspace{1em} \forall\hspace{1pt} n\in\{0,1,...,N\}.
\end{equation}
We use either the full truncation Euler scheme from \cite{Lord:2010} or the drift-implicit (square-root) Euler scheme, also known as the backward Euler--Maruyama scheme, proposed in \cite{Alfonsi:2005} in order to discretize the squared volatility $v$. For the FTE discretization, we introduce the discrete-time auxiliary process
\begin{equation}\label{eq2.2.2}
\tilde{v}_{t_{n+1}} = \tilde{v}_{t_{n}} + k(\theta-\tilde{v}_{t_{n}}^{+})\delta t + \xi\sqrt{\tilde{v}_{t_{n}}^{+}}\,\delta W^{v}_{t_{n}},\hspace{.75em} \tilde{v}_{0}=v_{0},
\end{equation}
where $v^{+} = \max\left(0,v\right)$ and $\delta W^{v}_{t_{n}} = W^{v}_{t_{n+1}} - W^{v}_{t_{n}}$, its continuous-time interpolation
\begin{equation}\label{eq2.2.3}
\tilde{v}_{t} = \tilde{v}_{t_{n}} + k(\theta-\tilde{v}_{t_{n}}^{+})(t-t_{n}) + \xi\sqrt{\tilde{v}_{t_{n}}^{+}}\big(W^{v}_{t}-W^{v}_{t_{n}}\big)
\end{equation}
as well as the non-negative piecewise constant process
\begin{equation}\label{eq2.2.4}
\bar{v}_{t}=\tilde{v}_{t_{n}}^{+},
\end{equation}
whenever $t \in [t_{n},t_{n+1})$. For the BEM discretization, assuming the boundary point 0 is inaccessible (i.e., $2k\theta\geq\xi^{2}$) and using a Lamperti transformation $y=\sqrt{v}$, we deduce that
\begin{equation}\label{eq2.2.5}
dy_{t} = \big(\alpha y_{t}^{-1}+\beta y_{t}\big)dt + \gamma\hspace{1pt}dW^{v}_{t},
\end{equation}
where
\begin{equation}\label{eq2.2.6}
\alpha = \frac{4k\theta-\xi^{2}}{8}\hspace{1pt},\hspace{.5em} \beta = -\hspace{1pt}\frac{k}{2} \hspace{.5em}\text{ and }\hspace{.5em} \gamma = \frac{\xi}{2}\hspace{1pt}.
\end{equation}
We introduce the discrete-time auxiliary process
\begin{equation}\label{eq2.2.7}
\tilde{y}_{t_{n+1}} = \tilde{y}_{t_{n}} + \big(\alpha\tilde{y}_{t_{n+1}}^{-1} + \beta\tilde{y}_{t_{n+1}}\big)\delta t + \gamma\delta W^{v}_{t_{n}},\hspace{.75em} \tilde{y}_{0}=y_{0},
\end{equation}
as well as the piecewise constant processes
\begin{equation}\label{eq2.2.8}
\bar{y}_{t} = \tilde{y}_{t_{n}} \hspace{.5em}\text{ and }\hspace{.5em} \bar{v}_{t} = \tilde{y}_{t_{n}}^{2},
\end{equation}
whenever $t \in [t_{n},t_{n+1})$. If $4k\theta>\xi^{2}$, then $\alpha>0$ and, since $\beta<0$, \eqref{eq2.2.7} has the unique positive solution
\begin{equation}\label{eq2.2.9}
\tilde{y}_{t_{n+1}} = \frac{\tilde{y}_{t_{n}}+\gamma\hspace{1pt}\delta W^{v}_{t_{n}}}{2(1-\beta\delta t)} + \sqrt{\frac{(\tilde{y}_{t_{n}}+\gamma\hspace{1pt}\delta W^{v}_{t_{n}})^{2}}{4(1-\beta\delta t)^{2}}+\frac{\alpha\hspace{.5pt}\delta t}{1-\beta\delta t}}\hspace{1pt}.
\end{equation}
Note that unlike in \cite{Alfonsi:2013,Dereich:2012,Neuenkirch:2014}, it is critical that we employ a piecewise constant continuous-time interpolation $\bar{v}$ for the squared volatility $v$ because we only simulate increments of the Brownian driver $W^{s}$ and hence the diffusion coefficient of the spot process needs to be constant in between time nodes. We employ an Euler--Maruyama scheme to discretize the log-spot process $x=(x_{t})_{t\geq0}$, where $x_{t}=\log(S_{t})$, and we define for convenience the log-running maximum $m=(m_{t})_{t\geq0}$, where $m_{t}=\log(M_{t})=\sup_{u\in[0,t]}x_{u}$. Let $\bar{x}$ be the approximated log-spot process, then the discrete method reads:
\begin{align}
\bar{x}_{t_{n+1}} &= \bar{x}_{t_{n}} + \int_{t_{n}}^{t_{n+1}}{\mu\big(u,e^{\bar{x}_{t_{n}}},e^{\bar{m}_{t_{n}}}\big)du} - \frac{1}{2}\hspace{1pt}\sigma^{2}\big(t_{n},e^{\bar{x}_{t_{n}}},e^{\bar{m}_{t_{n}}}\big)\bar{v}_{t_{n}}\delta t \nonumber\\[2pt]
&+ \sigma\big(t_{n},e^{\bar{x}_{t_{n}}},e^{\bar{m}_{t_{n}}}\big)\sqrt{\bar{v}_{t_{n}}}\hspace{1pt}\delta W^{s}_{t_{n}},\hspace{.75em} \bar{x}_{0}=x_{0}, \label{eq2.2.10}\\[3pt]
\bar{m}_{t_{n+1}} &= \max_{0\leq i\leq n+1}\bar{x}_{t_{i}},\hspace{.75em} \bar{m}_{0}=x_{0}. \label{eq2.2.10b}
\end{align}
The continuous-time approximation is
\begin{equation}\label{eq2.2.11}
\bar{x}_{t} = \bar{x}_{t_{n}} + \int_{t_{n}}^{t}{\bar{\mu}\big(u,e^{\bar{x}_{u}},e^{\bar{m}_{u}}\big)du} - \frac{1}{2}\hspace{1pt}\bar{\sigma}^{2}\big(t,e^{\bar{x}_{t}},e^{\bar{m}_{t}}\big)\bar{v}_{t}(t-t_{n})+ \bar{\sigma}\big(t,e^{\bar{x}_{t}},e^{\bar{m}_{t}}\big)\sqrt{\bar{v}_{t}}\hspace{1pt}\big(W^{s}_{t}-W^{s}_{t_{n}}\big),
\end{equation}
whenever $t\in[t_{n},t_{n+1})$, where $\bar{\mu}\big(t,e^{\bar{x}_{t}},e^{\bar{m}_{t}}\big)=\mu\big(t,e^{\bar{x}_{t_{n}}},e^{\bar{m}_{t_{n}}}\big)$ and $\bar{\sigma}\big(t,e^{\bar{x}_{t}},e^{\bar{m}_{t}}\big)=\sigma\big(t_{n},e^{\bar{x}_{t_{n}}},e^{\bar{m}_{t_{n}}}\big)$. Hence,
\begin{equation}\label{eq2.2.12}
\bar{x}_{t} = x_{0} + \int_{0}^{t}{\bar{\mu}\big(u,e^{\bar{x}_{u}},e^{\bar{m}_{u}}\big)du} - \frac{1}{2}\int_{0}^{t}{\bar{\sigma}^{2}\big(u,e^{\bar{x}_{u}},e^{\bar{m}_{u}}\big)\bar{v}_{u}\hspace{1pt}du} + \int_{0}^{t}{\bar{\sigma}\big(u,e^{\bar{x}_{u}},e^{\bar{m}_{u}}\big)\sqrt{\bar{v}_{u}}\hspace{1pt}dW^{s}_{u}}.
\end{equation}
Let $\bar{S}=(\bar{S}_{t})_{t\geq0}$, where $\bar{S}_{t}=e^{\bar{x}_{t}}$, be the continuous-time approximation of $S$, and let $\bar{M}_{t_{n}}=e^{\bar{m}_{t_{n}}}=\max_{0\leq i\leq n}\bar{S}_{t_{i}}$, for all $0\leq n\leq N$. Using It\^o's formula, we obtain
\begin{equation}\label{eq2.2.13}
\bar{S}_{t} = S_{0} + \int_{0}^{t}{\bar{\mu}(u,\bar{S}_{u},\bar{M}_{u})\bar{S}_{u}\hspace{1pt}du} + \int_{0}^{t}{\bar{\sigma}(u,\bar{S}_{u},\bar{M}_{u})\sqrt{\bar{v}_{u}}\,\bar{S}_{u}\hspace{1pt}dW^{s}_{u}}.
\end{equation}
We prefer the log-Euler scheme to the standard Euler scheme to discretize the spot process because the former preserves positivity and produces no discretization bias in the spot direction when $\mu$ is deterministic and $\sigma$ is constant, which is desirable because the drift function may be discontinuous.

\subsection{The main theorem}\label{subsec:main}

Before we state the main result, we introduce some necessary notations. Throughout this paper, we use a superscript $*\in\{\text{FTE},\text{BEM}\}$ to differentiate between the two discretization schemes for the squared volatility process. Let the Feller ratio be
\begin{equation}\label{eq2.3.1}
\nu = \frac{2k\theta}{\xi^{2}}\hspace{1pt}.
\end{equation}
For brevity, define
\begin{equation}\label{eq2.3.2a}
\nu^{\scalebox{0.6}{\text{FTE}}}=2+\sqrt{3},\hspace{1em} \nu^{\scalebox{0.6}{\text{BEM}}}=2,
\end{equation}
and also
\begin{equation}\label{eq2.3.2b}
p^{\scalebox{0.6}{\text{FTE}}}(\nu)=\nu^{-1}(\nu-1)^{2},\hspace{1em} p^{\scalebox{0.6}{\text{BEM}}}(\nu)=\nu.
\end{equation}
Moreover, let $\beta_{0}\approx1.307$ be the unique positive root of
\begin{equation}\label{eq2.3.2c}
\phi_{0}(s) = -e^{\frac{s^{2}}{2}} + s\int_{0}^{s}{e^{\frac{u^{2}}{2}}\,du}\hspace{1pt}.
\end{equation}
First, define
\begin{equation}\label{eq2.3.3}
T^{*}_{x}(p) = \frac{2}{\sqrt{\smash[b]{(\varphi^{*}(p)-k^{2})^{+}}}}\Bigg[\frac{\pi}{2} + \arctan\Bigg(\frac{k}{\sqrt{\smash[b]{(\varphi^{*}(p)-k^{2})^{+}}}}\Bigg)\Bigg],
\end{equation}
where
\begin{align}\label{eq2.3.4}
\varphi^{*}(p) &= \inf_{q\in(p,\hspace{.5pt}p^{*}\hspace{-.5pt})}\tilde{\varphi}(p,q)
\end{align}
and
\begin{align}\label{eq2.3.5}
\tilde{\varphi}(p,q) &= \frac{pq\xi^{2}}{2(q-p)}\Big(\sqrt{(2+\beta_{0}^{2})(C_{\sigma,x}+C_{\sigma,m})^{2}q+2(C_{\sigma,x}+C_{\sigma,m})(2\sigma_{max}-C_{\sigma,x}-C_{\sigma,m})} \nonumber\\[2pt]
&+ \beta_{0}(C_{\sigma,x}+C_{\sigma,m})\sqrt{q}\,\Big)^{2}.
\end{align}
Second, define
\begin{align}\label{eq2.3.6}
T^{\scalebox{0.6}{\text{FTE}}}_{S}(p) =
\begin{dcases}
\frac{4k}{\phi(p)} &\text{if } \phi(p)<4k^{2} \\[2pt]
\frac{1}{\sqrt{\phi(p)}-k} &\text{if } \phi(p)\geq4k^{2}
\end{dcases}
\end{align}
and
\begin{equation}\label{eq2.3.7}
T^{\scalebox{0.6}{\text{BEM}}}_{S}(p) = \frac{1}{\sqrt{\phi(p)}}\hspace{1pt},
\end{equation}
where
\begin{equation}\label{eq2.3.8}
\phi(p)=\xi^{2}\sigma_{max}^{2}\big(p+\sqrt{(p-1)p}\,\big)^{2}.
\end{equation}
Third, define
\begin{equation}\label{eq2.3.9}
T^{*}(p) = \sup_{q\in(2\vee p,\hspace{.5pt}p^{*}\hspace{-.5pt})}\Big[\hspace{1pt}T_{x}^{*}(q) \wedge\hspace{.5pt} T_{S}^{*}\big(pq(q-p)^{-1}\big)\Big],
\end{equation}
with $T_{x}^{*}$ given in \eqref{eq2.3.3} and $T^{\scalebox{0.6}{\text{FTE}}}_{S}$ and $T^{\scalebox{0.6}{\text{BEM}}}_{S}$ in \eqref{eq2.3.6} and \eqref{eq2.3.7}, respectively.

To the best of our knowledge, Theorem \ref{Thm2.7} below is the first result to establish a positive strong convergence rate for Euler approximations to models with local and stochastic volatility dynamics, even without the path-dependency. The proof is postponed to Section \ref{sec:spot}. In Section \ref{sec:numerics}, we briefly examine the critical time $T^{*}$ defined in \eqref{eq2.3.9} with respect to the model parameters in a realistic scenario.
\begin{theorem}\label{Thm2.7}
Suppose that Assumptions \ref{Asm2.1} and \ref{Asm2.2} hold and that $\nu>\nu^{*}$, with $\nu^{*}$ defined in \eqref{eq2.3.2a}. Then for all $1\leq p<p^{*}(\nu)$ and $T<T^{*}(p)$, with $p^{*}$ defined in \eqref{eq2.3.2b} and $T^{*}$ given in \eqref{eq2.3.9}, there exists a constant $C$ such that, for all $N\geq1$,
\begin{equation}\label{eq2.3.10}
\sup_{t\in[0,T]}\E\Big[\big|S_{t}-\bar{S}_{t}\big|^{p}\Big]^{\frac{1}{p}} \leq C\sqrt{\frac{\log(2N)}{N}}\hspace{1pt}.
\end{equation}
\end{theorem}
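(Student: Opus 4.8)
## Proof proposal

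\textbf{Overall strategy.} The plan is to prove the estimate \eqref{eq2.3.10} by working with the log-spot process $x_t = \log S_t$ and its approximation $\bar x_t$, and then transferring back to $S$ via the elementary bound $|S_t - \bar S_t| = |e^{x_t} - e^{\bar x_t}| \leq (e^{x_t} + e^{\bar x_t})\,|x_t - \bar x_t|$ followed by Hölder's inequality, which is where the exponents $q \in (p, p^*)$ and the conjugate weight $pq/(q-p)$ entering $T^*(p)$ in \eqref{eq2.3.9} come from: we need uniform control of $\E[e^{qx_t}] = \E[S_t^q]$ and $\E[\bar S_t^q]$ over $[0,T]$ and simultaneously an $L^{pq/(q-p)}$-rate for $x_t - \bar x_t$. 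So the real work is (a) moment bounds for $S_t^q$, $\bar S_t^q$, $M_t^q$, $\bar M_t^q$ up to a maximal time, and (b) a strong rate $\E[|x_t-\bar x_t|^{r}]^{1/r} \lesssim \sqrt{\log(2N)/N}$ for the relevant $r$.

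\textbf{Step 1: moment bounds (the spot side).} I would first establish that $\sup_{t\le T}\E[S_t^q] < \infty$ and $\sup_{t\le T}\E[\sup_{u\le t}S_u^q] = \sup_{t\le T}\E[M_t^q] < \infty$, and the analogous bounds for $\bar S$ and $\bar M$, for $q$ up to the stated threshold and $T$ below the corresponding critical time. Writing the SDE for $S^q$ (or $\bar S^q$) via Itô, the drift contributes a bounded term (Assumption \ref{Asm2.1}) but the martingale part has quadratic variation involving $\int_0^t q^2 \sigma^2 v_u S_u^{2q}\,du$; controlling the exponential moment of $\int_0^t v_u\,du$ is exactly what forces a finite-time horizon and a Feller-ratio condition, since $\E[\exp(\lambda \int_0^t v_u\,du)]$ blows up for $t$ large. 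This is where $T^*_x(p)$ from \eqref{eq2.3.3} and the constants $\varphi^*, \tilde\varphi$ in \eqref{eq2.3.4}--\eqref{eq2.3.5} enter — they encode the best exponential-integrability window for the CIR occupation time against the $\sigma$-dependent coefficients. For the discrete side I would use the auxiliary CIR results of Section~\ref{sec:vol} (integrability of $\bar v$ and of negative/positive moments) together with a discrete Gronwall / Doob argument; the moment bound from \cite{Cozma:2016a} referenced in the introduction is presumably the template.

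\textbf{Step 2: the strong rate for $x_t - \bar x_t$.} Subtracting \eqref{eq2.2.12} from the exact log-dynamics $x_t = x_0 + \int_0^t(\mu - \tfrac12\sigma^2 v_u)\,du + \int_0^t \sigma\sqrt{v_u}\,dW^s_u$, the difference splits into: (i) a drift term handled by the Lipschitz-in-log-spot/log-max property \eqref{eq2.0.3} plus the piecewise $1/2$-Hölder-in-time bound, giving $\int_0^t(|x_u-\bar x_u| + |m_u - \bar m_u|)\,du$ plus an $O(\sqrt{\delta t})$ term; (ii) the $-\tfrac12\sigma^2 v$ term, which contributes $|\sigma^2 v_u - \bar\sigma^2 \bar v_u|$, split into the coefficient difference ($\le C(|x_u-\bar x_u| + |m_u-\bar m_u| + \sqrt{\delta t}) \cdot v_u$ using \eqref{eq2.0.4} and boundedness) and the volatility-path difference $\sigma_{max}^2 |v_u - \bar v_u|$; (iii) the stochastic integral, treated with Burkholder--Davis--Gundy, again producing a coefficient-difference piece and an $\sigma_{max}\sqrt{|v_u - \bar v_u|}$-type piece. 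Crucially $\sup_{u\le t}|x_u - \bar x_u| \ge \sup_{u\le t}|m_u - \bar m_u|$ pathwise, so the running-maximum terms are absorbed into the log-spot error and no separate treatment of $m - \bar m$ is needed — this is the structural reason the running maximum costs nothing extra. After taking $\sup_{s\le t}$, raising to the $r$-th power and taking expectations, I would apply Hölder in the time integrals to decouple the error from the CIR occupation-time exponential moments (again the $q$-decomposition), insert the Step-1 moment bounds and the known strong rate for $v - \bar v$ from Section~\ref{sec:vol} — $\E[\sup_{u\le t}|v_u - \bar v_u|^{r}]^{1/r} \lesssim \sqrt{\log(2N)/N}$, valid on the window encoded by $T^*_S$ and $\phi$ in \eqref{eq2.3.6}--\eqref{eq2.3.8} — and close with a Gronwall inequality to get $\E[\sup_{s\le t}|x_s - \bar x_s|^{r}]^{1/r} \lesssim \sqrt{\log(2N)/N}$. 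Finally I optimize over the free parameter $q$, which yields precisely the $\sup_{q}$ in the definition \eqref{eq2.3.9} of $T^*(p)$.

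\textbf{Main obstacle.} The hard part will be the bookkeeping of exponential integrability of the CIR occupation time $\int_0^t v_u\,du$ (and its discrete analogue) simultaneously with the polynomial moments of $S$, $\bar S$, $M$, $\bar M$: every term where a coefficient difference multiplies $v_u$ or $\sqrt{v_u}$ and is then paired against $S_u$-powers needs a Hölder split whose exponents must all stay within the admissible window, and it is the \emph{joint} feasibility of these splits — together with the BDG constants $\beta_0$ in \eqref{eq2.3.2c} and the Feller-ratio constraint $\nu > \nu^*$ ensuring the relevant negative moments of $\bar v$ exist — that determines the exact critical time. Getting the logarithmic factor rather than a polynomial loss requires the sharp CIR rate from Section~\ref{sec:vol} and care that the Gronwall constant does not itself depend on $N$; everything else (BDG, Hölder, discrete Gronwall, the $e^x - e^{\bar x}$ transfer) is routine.
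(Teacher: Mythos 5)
Your overall architecture matches the paper's: the final transfer from $x-\bar x$ to $S-\bar S$ via a mean-value/H\"older argument with the conjugate exponent $pq/(q-p)$, the moment bounds for $S^{pq/(q-p)}$ and $\bar S^{pq/(q-p)}$ obtained from exponential integrability of $\int_0^T v_u\,du$ and $\int_0^T\bar v_u\,du$ (this is exactly Lemma~\ref{Lem4.6} and its proof in Appendix~\ref{sec:aux3}), the absorption of the running-maximum error into the log-spot error (Lemma~\ref{Lem4.3}), and the optimization over $q$ producing \eqref{eq2.3.9}. Two small misattributions: the logarithmic factor does not come from the CIR rate (Propositions~\ref{Prop3.5} and \ref{Prop3.8} give a clean $N^{-1/2}$ with no log and no time restriction); it comes from the $L^{q}$ modulus of continuity of the log-spot over one time step, $\E[\sup_t|x_t-x_{\bar t}|^{q}]^{1/q}\lesssim\sqrt{\log(2N)/N}$, which also means your pathwise claim $\sup_u|m_u-\bar m_u|\le\sup_u|x_u-\bar x_u|$ must be corrected to include this extra modulus term, since $\bar m$ is a maximum over grid values only.

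The genuine gap is in your closing sentence of Step 2: ``close with a Gronwall inequality.'' After the BDG maximal inequality and the Lipschitz/H\"older splits, the error inequality one actually obtains has the form
\begin{equation*}
\E\Big[\sup_{t\le T}|e^{x}_{t}|^{q}\Big] \;\le\; C\Big(\tfrac{N}{\log(2N)}\Big)^{-q/2} \;+\; \E\bigg[\int_{0}^{T}\sup_{t\le u}|e^{x}_{t}|^{q}\;\big(a+b\,v_{u}+c\,v_{u}^{-1}\big)\,du\bigg],
\end{equation*}
i.e.\ the Gronwall weight is the \emph{random} measure $dg(u)=(a+b v_u+c v_u^{-1})du$. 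A deterministic Gronwall does not apply, and the H\"older decoupling you propose (separating $v_u$ from $\sup|e^x|^q$ inside the time integral) destroys the Gronwall structure: it produces $\E[\sup|e^x|^{qr}]^{1/r}$ on the right while the left-hand side carries only the $q$-th moment, so the inequality never closes. The paper resolves this with a stochastic time-change in the spirit of Berkaoui--Bossy--Diop: one introduces the stopping times $\tau^{\kappa}=\inf\{t: g(t)\ge\kappa\}$, applies Gronwall along the clock $s=g(u)$ to get $\E[\sup_t|e^{x}_{t\wedge\tau^{\kappa}}|^{q}]\le C e^{\kappa}(N/\log(2N))^{-q/2}$, and then removes the stopping time by integrating over $\kappa$ against $\Prob(s\le g(T))^{1-p/q}$, the tail being controlled by Markov's inequality and the exponential integrability of $\int_0^T v_u\,du$ and $\int_0^T v_u^{-1}\,du$ from Lemma~\ref{Lem3.2}. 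It is precisely this last step that forces $T<T^{*}_{x}(p)$, requires working at the higher exponent $q>p$ and interpolating down to $p$, and consumes the Feller-ratio condition $\nu>\nu^{*}$ (for the negative-moment and inverse-CIR exponential bounds). Without this mechanism, or an equivalent one, your Step 2 does not yield \eqref{eq4.1.36}, and hence the theorem does not follow.
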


If the stochastic volatility component vanishes (e.g., take $v_{0}=\theta=1$ and $\xi=0$), then the SPDV model \eqref{eq2.1.1} collapses to a path-dependent volatility model
\begin{align}\label{eq2.3.11}
	\begin{dcases}
	dS^{\scalebox{0.6}{\text{PDV}}}_{t} = \mu(t,S^{\scalebox{0.6}{\text{PDV}}}_t,M^{\scalebox{0.6}{\text{PDV}}}_t)S^{\scalebox{0.6}{\text{PDV}}}_{t}dt + \sigma(t,S^{\scalebox{0.6}{\text{PDV}}}_t,M^{\scalebox{0.6}{\text{PDV}}}_t)S^{\scalebox{0.6}{\text{PDV}}}_{t}\hspace{1pt}dW^{s}_{t},\hspace{.75em} S^{\scalebox{0.6}{\text{PDV}}}_{0}>0, \\[2pt]
	M^{\scalebox{0.6}{\text{PDV}}}_{t} \hspace{1.5pt}= \sup_{u\in[0,t]} S^{\scalebox{0.6}{\text{PDV}}}_u.
	\end{dcases}
\end{align}
Upon noticing from \eqref{eq2.3.1}, \eqref{eq2.3.2b} and \eqref{eq2.3.3} -- \eqref{eq2.3.9} that $\nu=p^{*}(\nu)=\infty$ and $T^{*}(p)=\infty$, for all $p\geq1$, the same argument ensures the strong convergence in $L^{p}$ with order 1/2 (up to a logarithmic factor), for all $p\geq1$, of the corresponding approximation process $\bar{S}^{\scalebox{0.6}{\text{PDV}}}$ defined in \eqref{eq2.2.13}. Corollary \ref{Cor2.8} below is the first result to establish a positive strong convergence rate for Euler approximations to models with path-dependent volatility dynamics, to the best of our knowledge.

\begin{corollary}\label{Cor2.8}
Suppose that Assumptions \ref{Asm2.1} and \ref{Asm2.2} hold. Then for all $p\geq1$, there exists a constant $C$ such that, for all $N\geq1$,
\begin{equation}\label{eq2.3.12}
\sup_{t\in[0,T]}\E\Big[\big|S^{\scalebox{0.6}{\emph{\text{PDV}}}}_{t}-\bar{S}^{\scalebox{0.6}{\emph{\text{PDV}}}}_{t}\big|^{p}\Big]^{\frac{1}{p}} \leq C\sqrt{\frac{\log(2N)}{N}}\hspace{1pt}.
\end{equation}
\end{corollary}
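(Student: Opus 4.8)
The plan is to obtain Corollary \ref{Cor2.8} as a direct specialisation of Theorem \ref{Thm2.7}. First I would set $v_{0}=\theta=1$ and $\xi=0$ in the model \eqref{eq2.1.1}; then the CIR component degenerates to the constant $v_{t}\equiv1$, the running-maximum dynamics are unchanged, and the spot equation becomes exactly \eqref{eq2.3.11}. Since $\xi=0$, the Feller ratio \eqref{eq2.3.1} satisfies $\nu=2k\theta/\xi^{2}=+\infty$, so the hypothesis $\nu>\nu^{*}$ of Theorem \ref{Thm2.7} holds trivially (with either choice of discretisation scheme), and from \eqref{eq2.3.2b} we get $p^{*}(\nu)=+\infty$ for the FTE case (and also $\nu$ itself is $+\infty$ for BEM), so the admissible range of exponents is all $p\geq1$.

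Next I would verify that the critical time is no longer a constraint. Inspecting \eqref{eq2.3.8}, $\phi(p)=\xi^{2}\sigma_{max}^{2}(p+\sqrt{(p-1)p})^{2}=0$ when $\xi=0$, hence from \eqref{eq2.3.6}--\eqref{eq2.3.7} both $T_{S}^{\scalebox{0.6}{\text{FTE}}}(p)$ and $T_{S}^{\scalebox{0.6}{\text{BEM}}}(p)$ equal $+\infty$ for every $p$. Similarly, in \eqref{eq2.3.5} the factor $\xi^{2}$ makes $\tilde{\varphi}(p,q)=0$, so $\varphi^{*}(p)=0$ by \eqref{eq2.3.4}, whence $(\varphi^{*}(p)-k^{2})^{+}=0$ and the bracketed expression in \eqref{eq2.3.3} is multiplied by $2/\sqrt{0}$; reading $2/\sqrt{0}=+\infty$ (as the formula is designed to be interpreted), $T_{x}^{*}(p)=+\infty$. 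Combining these in \eqref{eq2.3.9} gives $T^{*}(p)=+\infty$ for all $p\geq1$, so the constraint $T<T^{*}(p)$ is vacuous for any fixed finite horizon $T$. Finally, Assumptions \ref{Asm2.1} and \ref{Asm2.2} on $\mu$ and $\sigma$ are assumed in the corollary exactly as in the theorem, and the approximation $\bar{S}^{\scalebox{0.6}{\text{PDV}}}$ is precisely $\bar{S}$ from \eqref{eq2.2.13} with the degenerate volatility (note $\bar{v}_{u}\equiv1$ for both schemes when $\xi=0$ and $v_{0}=1$), so the scheme coincides with the one analysed in Theorem \ref{Thm2.7}.

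Therefore the bound \eqref{eq2.3.10} applies verbatim and yields \eqref{eq2.3.12}, completing the proof. The only point requiring a sentence of care is the interpretation of the degenerate cases of the formulas \eqref{eq2.3.3}--\eqref{eq2.3.9}, i.e.\ making explicit that $\phi(p)=0$ and $(\varphi^{*}(p)-k^{2})^{+}=0$ force all the component times to be $+\infty$; I would state this as a short computation rather than re-derive anything. I do not anticipate a genuine obstacle here, since the corollary is deliberately stated as the $\xi\to0$ limit of the main theorem and the parameter ranges $p\in[1,\infty)$ and $T\in(0,\infty)$ both sit strictly inside the admissible region once $\xi=0$; the substantive work is all contained in Theorem \ref{Thm2.7}, whose proof in Section \ref{sec:spot} we invoke.
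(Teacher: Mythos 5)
Your proposal is correct and follows essentially the same route as the paper, which likewise obtains the corollary by setting $v_{0}=\theta=1$, $\xi=0$ and observing that $\nu=p^{*}(\nu)=\infty$ and $T^{*}(p)=\infty$ so that the argument of Theorem \ref{Thm2.7} applies for all $p\geq1$ and all $T>0$. Your explicit check that $\phi(p)=0$ and $\tilde{\varphi}(p,q)=0$ force every component critical time to be $+\infty$, and that $\bar{v}\equiv1$ under either scheme, is exactly the degenerate-case bookkeeping the paper leaves implicit.
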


We know from Theorem 10.2.2 in \cite{Kloeden:1999} the strong convergence in $L^{1}$ with order 1/2 of Euler approximations to the LV model
\begin{equation}\label{eq2.3.13}
dS^{\scalebox{0.6}{\text{LV}}}_{t} = \mu(t,S^{\scalebox{0.6}{\text{LV}}}_t)S^{\scalebox{0.6}{\text{LV}}}_{t}dt + \sigma(t,S^{\scalebox{0.6}{\text{LV}}}_t)S^{\scalebox{0.6}{\text{LV}}}_{t}\hspace{1pt}dW^{s}_{t},\hspace{.75em} S^{\scalebox{0.6}{\text{LV}}}_{0}>0,
\end{equation}
when the drift and diffusion coefficients (i.e., $\mu(t,x)x$ and $\sigma(t,x)x$) satisfy a linear growth condition, are 1/2-H\"older continuous in time and Lipschitz continuous in spot. Hence, Corollary \ref{Cor2.8} extends the strong order 1/2 convergence of numerical simulations for LV models to allow dependence on the running maximum under somewhat different model assumptions.

We have thus shown that the Euler discretization of the spot process in \eqref{eq2.1.1} attains the optimal strong convergence order of 1/2 up to a logarithmic factor that is characteristic of approximations of SDEs with globally Lipschitz coefficients \cite{Hofmann:2001,Muller-Gronbach:2002}. As a consequence, the Euler discretization of the spot process also converges with weak order 1/2 (up to a logarithmic factor), which is optimal because the Euler scheme for the running maximum converges with weak order of at most 1/2 (see \cite{Asmussen:1995,Glasserman:2004}) instead of the weak order 1 typical for SDEs with smooth coefficients.

\section{The squared volatility process}\label{sec:vol}

\subsection{The Cox--Ingersoll--Ross process}\label{subsec:CIR}

For the convergence analysis, we need to control both the polynomial and the exponential moments of the CIR process.
\begin{lemma}\label{Lem3.1}
The CIR process $v$ from \eqref{eq2.1.1}:
\begin{enumerate}[(1)]
\item{has bounded moments, i.e.,
\begin{equation}\label{eq3.0.2}
\sup_{t\in[0,T]}\E\big[v_{t}^{p}\big] < \infty,\hspace{1em} \forall\hspace{.5pt}p>-\nu;
\end{equation}}
\item{has uniformly bounded moments, i.e.,
\begin{equation}\label{eq3.0.3}
\E\bigg[\sup_{t\in[0,T]}v_{t}^{p}\bigg] < \infty,\hspace{1em} \forall\hspace{.5pt}p\geq1.
\end{equation}}
\end{enumerate}
\end{lemma}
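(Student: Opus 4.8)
The plan is to treat the two claims separately, both resting on the classical representation of the CIR process. Recall that $v$ is a squared Bessel-type diffusion: for $\nu = 2k\theta/\xi^2$, the process $v$ admits the well-known non-central chi-square transition law, and in particular the marginal $v_t$ has a density whose behaviour near $0$ is governed by the exponent $\nu-1$. For part (1), I would start from the explicit formula for the (conditional) Laplace transform of $v_t$,
\begin{equation*}
\E\big[e^{-\lambda v_t} \,\big|\, v_0\big] = \Big(1 + \tfrac{\xi^2(1-e^{-kt})}{2k}\lambda\Big)^{-\nu} \exp\!\Big(-\tfrac{v_0 e^{-kt}\lambda}{1 + \xi^2(1-e^{-kt})\lambda/(2k)}\Big),
\end{equation*}
together with the standard fact that negative moments of order $p$ of a random variable $X>0$ can be recovered from $\E[X^{-p}] = \Gamma(p)^{-1}\int_0^\infty \lambda^{p-1}\E[e^{-\lambda X}]\,d\lambda$ when $p>0$; finiteness of this integral holds precisely when the $\lambda\to\infty$ decay of the Laplace transform beats $\lambda^{-p}$, i.e. when $\nu > p$, which is the claimed range $p > -\nu$ after relabelling. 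For positive moments one can either differentiate the Laplace transform, use the non-central chi-square moment formulas, or simply invoke the elementary ODE satisfied by $t\mapsto \E[v_t^n]$ for integer $n$ and interpolate; either way $\sup_{t\in[0,T]}\E[v_t^p]<\infty$ for every $p>-\nu$. I would quote a reference (e.g. Cox--Ingersoll--Ross \cite{Cox:1985}, or the moment results used in \cite{Cozma:2016a}) rather than reproducing the computation.

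For part (2), the pathwise supremum, I would use the Yamada--Watanabe / comparison argument or a direct martingale estimate. Write $v_t = v_0 + k\theta t - k\int_0^t v_s\,ds + \xi\int_0^t \sqrt{v_s}\,dW^v_s$. Since the drift is bounded above by $k\theta$ pathwise (the $-kv_s$ term is non-positive as $v\geq0$), one gets $\sup_{t\in[0,T]} v_t \leq v_0 + k\theta T + \xi \sup_{t\in[0,T]}\big|\int_0^t\sqrt{v_s}\,dW^v_s\big|$. Raising to the $p$-th power, taking expectations, and applying the Burkholder--Davis--Gundy inequality to the martingale term bounds $\E[\sup_{t\leq T}v_t^p]$ by a constant times $1 + \E\big[(\int_0^T v_s\,ds)^{p/2}\big] \leq 1 + C\int_0^T \E[\sup_{u\leq s} v_u^{p/2}]\,ds$ (Jensen in the time integral). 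For $p\geq 2$ this closes directly by Gronwall after using part (1) to control lower-order moments; for $1\leq p<2$ one first runs the argument for $p=2$ (or any even integer) and then applies Jensen's inequality $\E[\sup v_t^p] \leq \E[\sup v_t^2]^{p/2}$.

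The main obstacle, such as it is, lies in part (1): ensuring the negative-moment bound is \emph{uniform in $t\in[0,T]$} and does not blow up — but the Laplace-transform computation shows the relevant constant $\xi^2(1-e^{-kt})/(2k)$ stays bounded on $[0,T]$ and, crucially, stays bounded \emph{away from $0$} only needs $t$ bounded away from $0$; near $t=0$ the mass concentrates at $v_0>0$, so negative moments are in fact controlled there as well, and one handles a neighbourhood of $0$ separately using $v_0>0$. I would organize the proof so that this small-time issue is dispatched first, then the Laplace-transform estimate covers $t\in[\epsilon,T]$, and finally part (2) follows by the BDG/Gronwall scheme above. Everything else is routine and I would keep it brief, citing \cite{Cox:1985} and \cite{Cozma:2016a} for the standard facts.
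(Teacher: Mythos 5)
Your proposal is correct: part (1) via Laplace-transform inversion for the negative moments (with the small-time uniformity rescued by $v_0>0$) together with the standard positive-moment formulas, and part (2) via the pathwise drift bound $v_t\leq v_0+k\theta T+\xi\sup_u|\int_0^u\sqrt{v_s}\,dW^v_s|$ followed by BDG and Jensen/Gronwall, are exactly the classical arguments. The paper proves this lemma purely by citation (Theorem 3.1 in \cite{Hurd:2008} or \cite{Dereich:2012} for part (1), Proposition 3.7 in \cite{Cozma:2016a} or Lemma 3.2 in \cite{Dereich:2012} for part (2)), and those references rest on essentially the computations you sketch, so your route matches the paper's in substance while supplying the details it omits.
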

\begin{proof}
The first part follows from \cite{Dereich:2012} or Theorem 3.1 in \cite{Hurd:2008} whereas the second part follows from Proposition 3.7 in \cite{Cozma:2016a} or Lemma 3.2 in \cite{Dereich:2012}.
\end{proof}

\begin{lemma}\label{Lem3.2}
Let $\lambda>0$.
\begin{enumerate}[(1)]
\item{If
\begin{equation}\label{eq3.0.4}
T < \frac{2}{\sqrt{\smash[b]{(2\lambda\xi^{2}-k^{2})}^{+}}}\Bigg[\frac{\pi}{2}+\arctan\Bigg(\frac{k}{\sqrt{\smash[b]{(2\lambda\xi^{2}-k^{2})^{+}}}}\Bigg)\Bigg],
\end{equation}
then
\begin{equation}\label{eq3.0.5}
\E\bigg[\exp\bigg\{\lambda\int_{0}^{T}{v_{t}\hspace{1pt}dt}\bigg\}\bigg] < \infty.
\end{equation}}
\item{If
\begin{equation}\label{eq3.0.6}
\lambda \leq \frac{1}{8}\hspace{1pt}\xi^{2}(\nu-1)^{2},
\end{equation}
then
\begin{equation}\label{eq3.0.7}
\E\bigg[\exp\bigg\{\lambda\int_{0}^{T}{v_{t}^{-1}\hspace{1pt}dt}\bigg\}\bigg] < \infty.
\end{equation}}
\end{enumerate}
\end{lemma}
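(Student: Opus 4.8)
The plan is to obtain both exponential-moment bounds by reducing the problem to the well-understood question of when an exponential functional of a squared Bessel / CIR process is finite, via a change of measure and an explicit ODE (Riccati) computation. For part (1), I would start from the affine structure of the CIR process: for a deterministic function $\Lambda(t)$, the process
\begin{equation*}
Z_t = \exp\Big\{\psi(t) v_t + \int_0^t \chi(s)\,ds - \int_0^t \Lambda(s) v_s\,ds\Big\}
\end{equation*}
is a local martingale provided $\psi$ solves the Riccati ODE $\psi'(t) = \tfrac{1}{2}\xi^2\psi(t)^2 - k\psi(t) + \Lambda(t)$ with $\psi(T)=0$ (here $\chi(t) = k\theta\psi(t)$). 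Taking $\Lambda \equiv \lambda$ constant, the relevant Riccati equation $\psi' = \tfrac12\xi^2\psi^2 - k\psi + \lambda$ has solutions that blow up in finite time precisely when the discriminant $k^2 - 2\lambda\xi^2$ is negative, and the explosion time of the backward solution starting from $\psi(T)=0$ is exactly the right-hand side of \eqref{eq3.0.4} — the $\arctan$ expression is the standard closed form for the time a solution of $y' = a y^2 + b y + c$ with complex-conjugate roots takes to reach a given value. So as long as $T$ is strictly below that threshold, $\psi$ stays finite and bounded on $[0,T]$, $Z$ is a genuine martingale (one checks the local martingale is a true martingale, e.g. by a localization/Fatou argument together with $\E[Z_T]\le 1$ and, conversely, using the finiteness of $\psi$ to get uniform integrability), and then
\begin{equation*}
\E\Big[\exp\Big\{\lambda\int_0^T v_t\,dt\Big\}\Big] = \exp\{\psi(0)v_0 + k\theta\textstyle\int_0^T \psi(s)\,ds\} \cdot \E[Z_T] < \infty,
\end{equation*}
after rearranging $Z_T$ (note $\psi(T)=0$ kills the $v_T$ term). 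If the discriminant is non-negative the Riccati solution never blows up and the expectation is finite for all $T$, consistent with the convention $(\cdot)^+$ and $\tfrac{2}{0}=\infty$ in \eqref{eq3.0.4}.

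For part (2), the natural route is the Lamperti transform $y=\sqrt v$ already introduced in \eqref{eq2.2.5}, under which $v^{-1}=y^{-2}$ and $dy_t = (\alpha y_t^{-1} + \beta y_t)\,dt + \gamma\,dW^v_t$ with $\alpha,\beta,\gamma$ as in \eqref{eq2.2.6}; the condition $2k\theta\ge\xi^2$, i.e. $\nu\ge1$, guarantees $\alpha\ge 0$ so $0$ is inaccessible and $y_t^{-2}$ is integrable along paths. I would apply Itô to $y_t^{c}$ for a suitable negative power $c$, or more directly to $\log y_t$, to extract a drift term proportional to $y_t^{-2}$: from $d(\log y_t) = (\alpha - \tfrac12\gamma^2) y_t^{-2}\,dt + \beta\,dt + \gamma y_t^{-1}\,dW^v_t$ one gets, after rearranging,
\begin{equation*}
\Big(\alpha - \tfrac12\gamma^2\Big)\int_0^T y_t^{-2}\,dt = \log y_T - \log y_0 - \beta T - \gamma\int_0^T y_t^{-1}\,dW^v_t .
\end{equation*}
Hence $\exp\{\lambda\int_0^T v_t^{-1}dt\}$ is controlled, up to the bounded factor $\exp\{-\lambda\beta T/(\alpha-\tfrac12\gamma^2)\}$ and the a.s.-finite factor $y_T^{\lambda/(\alpha-\tfrac12\gamma^2)}$ (negative moments of $y_T$, equivalently of $v_T$, are finite by Lemma \ref{Lem3.1}(1) since $\nu>1$), by a stochastic exponential $\exp\{-\tfrac{\lambda\gamma}{\alpha-\gamma^2/2}\int_0^T y_t^{-1}dW^v_t\}$. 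The cleanest way to bound its expectation is to note that $\int_0^T y_t^{-1}dW^v_t$ has quadratic variation $\int_0^T y_t^{-2}dt$, so the exponential of a constant multiple of the martingale, times the exponential of the same constant's square over two times the quadratic variation, is a local martingale of mean $\le 1$; choosing the multiplicative constant so that the quadratic-variation correction can be absorbed back into the left-hand side forces the algebraic bound $\lambda \le \tfrac12\gamma^2 c^2 (\alpha - \tfrac12\gamma^2 - \tfrac12\gamma^2 c^2)$-type inequality, which optimized over the free parameter reproduces exactly $\lambda \le \tfrac18\xi^2(\nu-1)^2$ after substituting $\alpha = \tfrac{\xi^2}{8}(2\nu-1)$ and $\gamma=\xi/2$. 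A Hölder-inequality splitting of the two exponential factors (one genuinely a martingale, the other absorbed) together with the finiteness of negative moments of $v_T$ then finishes it.

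I expect the main obstacle in both parts to be the bookkeeping that upgrades the naive local martingale to a true martingale / integrable object: in part (1) one must verify that the candidate exponential martingale $Z$ does not lose mass (so that $\E[Z_T]=1$, or at least that the stated expectation is finite rather than merely bounded below), which is where the strict inequality $T<T^{\mathrm{threshold}}$ is used — it gives a uniform bound on $\psi$ and hence a Novikov-type or uniform-integrability control; and in part (2) the delicate point is choosing the free exponent so that the quadratic-variation term is fully reabsorbed while keeping $\lambda$ as large as the stated bound, i.e. checking the optimization is tight and that the leftover deterministic and terminal-value factors are genuinely integrable (using $\nu>1$, equivalently $\alpha > \tfrac12\gamma^2$, so the coefficient $\alpha-\tfrac12\gamma^2$ is strictly positive). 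Both are standard for CIR/Bessel functionals — the results are essentially classical (cf. the references to \cite{Dereich:2012,Hurd:2008}) — so I would keep the exposition brief and cite those where a detailed computation would otherwise be reproduced.
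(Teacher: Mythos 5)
Your plan is sound and matches the paper's approach, which is simply to cite the classical results (Proposition 3.1 in \cite{Andersen:2007} or Proposition 3.5 in \cite{Cozma:2016a} for part (1), and Lemma A.2 in \cite{Bossy:2007} or Theorem 3.1 in \cite{Hurd:2008} for part (2)); your Riccati/affine computation for part (1) and your Lamperti-plus-exponential-martingale absorption for part (2) are exactly the arguments behind those citations, and your optimization does reproduce the constant $\tfrac18\xi^2(\nu-1)^2$ via the $r=2$ Cauchy--Schwarz split. The only loose end is the boundary case $\lambda=\tfrac18\xi^2(\nu-1)^2$ in part (2): your H\"older splitting needs $q>1$ and hence only yields strict inequality, so the equality case requires the exact Bessel computation of the cited references (though the paper only ever uses the lemma with strict inequalities and a continuity argument).
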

\begin{proof}
The first part follows from Proposition 3.1 in \cite{Andersen:2007} or Proposition 3.5 in \cite{Cozma:2016a} and the second part follows from Lemma A.2 in \cite{Bossy:2007} or Theorem 3.1 in \cite{Hurd:2008}.
\end{proof}

\subsection{The full truncation Euler scheme}\label{subsec:FTE}

Throughout this subsection, $\tilde{v}$ and $\bar{v}$ are the processes defined in \eqref{eq2.2.3} and \eqref{eq2.2.4}. First, we include some auxiliary results on the polynomial and exponential integrability of the FTE approximation.

\begin{lemma}\label{Lem3.3}
The FTE scheme has uniformly bounded moments, i.e.,
\begin{equation}\label{eq3.1.1}
\sup_{N\geq1}\hspace{1.5pt}\E\bigg[\sup_{t\in[0,T]}|\tilde{v}_{t}|^p\bigg] < \infty,\hspace{1em} \forall\hspace{.5pt}p\geq1.
\end{equation}
\end{lemma}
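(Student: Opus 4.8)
The plan is to establish the uniformly bounded moment estimate \eqref{eq3.1.1} for the full truncation Euler scheme by a discrete Gr\"onwall argument applied to the maximal process. First I would work with the discrete-time recursion \eqref{eq2.2.2}. Fix $p\geq1$; since moments are increasing in $p$ it suffices to treat $p$ large, say $p$ an even integer or at least $p\geq2$. Denote $\tilde v_{t_n}^+ = \max(0,\tilde v_{t_n})$. Observe that the continuous interpolation \eqref{eq2.2.3} is, on each subinterval $[t_n,t_{n+1}]$, an It\^o process with \emph{bounded} (indeed $\mathcal F_{t_n}$-measurable) drift and diffusion coefficients $k(\theta-\tilde v_{t_n}^+)$ and $\xi\sqrt{\tilde v_{t_n}^+}$. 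Hence $\sup_{t\in[t_n,t_{n+1}]}|\tilde v_t|$ is controlled by $|\tilde v_{t_n}|$ plus the running maximum of a scaled Brownian motion over an interval of length $\delta t$, and a Burkholder--Davis--Gundy (BDG) estimate gives, for $t\in[t_n,t_{n+1}]$,
\begin{equation*}
\E\Big[\sup_{t_n\le s\le t}|\tilde v_s|^p \,\big|\, \mathcal F_{t_n}\Big] \le C_p\Big(|\tilde v_{t_n}|^p + (k\theta\,\delta t)^p + \xi^p (\tilde v_{t_n}^+)^{p/2}(\delta t)^{p/2}\Big),
\end{equation*}
with $C_p$ depending only on $p$ (and we use $\delta t\le T$, $k,\theta,\xi$ fixed). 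Since $(\tilde v_{t_n}^+)^{p/2}(\delta t)^{p/2}\le \tfrac12(\tilde v_{t_n}^+)^p + \tfrac12(\delta t)^p \le \tfrac12|\tilde v_{t_n}|^p + \tfrac12 T^p$, this bound is of the form $C_p'(1+|\tilde v_{t_n}|^p)$.

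Next I would set up the Gr\"onwall iteration on the discrete skeleton. Taking $t=t_{n+1}$ in the recursion \eqref{eq2.2.2} and raising to the $p$-th power, I would isolate the martingale increment: conditionally on $\mathcal F_{t_n}$, the increment $\xi\sqrt{\tilde v_{t_n}^+}\,\delta W^v_{t_n}$ is centred Gaussian with variance $\xi^2\tilde v_{t_n}^+\delta t$. Using the elementary inequality $(a+b)^p\le (1+\epsilon)^{p-1}a^p + C_{p,\epsilon}b^p$ applied to $a=\tilde v_{t_n}+k(\theta-\tilde v_{t_n}^+)\delta t$ and $b=\xi\sqrt{\tilde v_{t_n}^+}\,\delta W^v_{t_n}$, together with $|\tilde v_{t_n}+k(\theta-\tilde v_{t_n}^+)\delta t|\le (1+k\delta t)|\tilde v_{t_n}| + k\theta\delta t$, I obtain
\begin{equation*}
\E\big[|\tilde v_{t_{n+1}}|^p \,\big|\, \mathcal F_{t_n}\big] \le (1+C_p\delta t)|\tilde v_{t_n}|^p + C_p\delta t,
\end{equation*}
after absorbing the Gaussian moment $\E|\delta W^v_{t_n}|^p = C_p(\delta t)^{p/2}$ and again using $(\tilde v_{t_n}^+)^{p/2}(\delta t)^{p/2}\le \tfrac12|\tilde v_{t_n}|^p\delta t + \tfrac12(\delta t)^{1+p/2}/\delta t$... more carefully, $(\delta t)^{p/2}=(\delta t)^{p/2-1}\delta t\le T^{p/2-1}\delta t$ when $p\ge2$, which is exactly where the factor $\delta t$ (rather than a worse power) comes out. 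Taking expectations and iterating, the discrete Gr\"onwall lemma yields $\max_{0\le n\le N}\E[|\tilde v_{t_n}|^p]\le e^{C_pT}(|v_0|^p+C_pT) =: K_p$, \emph{uniformly in $N$}. Then, combining with the per-subinterval BDG bound above and summing over $n$,
\begin{equation*}
\E\Big[\sup_{t\in[0,T]}|\tilde v_t|^p\Big] \le \sum_{n=0}^{N-1}\E\Big[\sup_{t_n\le s\le t_{n+1}}|\tilde v_s|^p\Big]
\end{equation*}
would overcount; instead I would bound $\sup_{t\in[0,T]}|\tilde v_t|^p \le \max_{0\le n\le N}|\tilde v_{t_n}|^p + \max_{0\le n\le N-1}\sup_{t_n\le s\le t_{n+1}}\big|\xi\sqrt{\tilde v_{t_n}^+}(W^v_s-W^v_{t_n}) + k(\theta-\tilde v_{t_n}^+)(s-t_n)\big|^p$, take expectations, and for the second term use the union-type estimate $\E[\max_n Z_n]\le \sum_n \E[Z_n]$ only on the \emph{fluctuation} $Z_n$, each of which has expectation $O(\delta t^{p/2}(1+\E|\tilde v_{t_n}|^{p/2})) = O(\delta t \cdot T^{p/2-1}K_p^{1/2})$ — wait, this is $O(\delta t)$ per term and there are $N=T/\delta t$ terms, giving $O(1)$; but for the maximum of non-negative increments this crude sum is acceptable and bounded uniformly in $N$. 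This gives \eqref{eq3.1.1}.

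I expect the main obstacle to be handling the maximum \emph{over the continuum} $[0,T]$ rather than over the grid, while keeping the bound uniform in $N$. The subtlety is that $\E[\max_n \sup_{[t_n,t_{n+1}]}(\cdots)]$ cannot simply be bounded by $\sum_n\E[\sup_{[t_n,t_{n+1}]}(\cdots)^p]$ if the latter contains the (non-vanishing) term $|\tilde v_{t_n}|^p$ — that sum would blow up like $N$. The resolution, sketched above, is to split off the deterministic-in-$\mathcal F_{t_n}$ part $\tilde v_{t_n}$ and apply the crude union bound only to the genuine Brownian fluctuation on each cell, whose $L^p$ norm scales like $(\delta t)^{1/2}$, so that even after summing $N$ of them (with the extra care that the union bound over a max of $N$ terms costs at most a factor accounted for by the $(\delta t)^{1/2}$ being raised to power $p\ge1$, hence the $\sqrt{\log N}$-type loss is in fact avoidable here because we only need \emph{finiteness}, not a rate) the total stays bounded uniformly in $N$. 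An alternative, cleaner route — which I would mention as a remark — is to cite Lemma 3.2 or Proposition 3.7 of \cite{Cozma:2016a} (already invoked in Lemma \ref{Lem3.1} for the exact CIR process), whose proof technique for the uniform-in-time moment bound of the FTE scheme transfers verbatim; but I would prefer to give the self-contained discrete Gr\"onwall argument since it is short and makes the constants transparent.
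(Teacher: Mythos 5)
Your overall strategy (grid moment bounds by discrete Gr\"onwall, then splitting the continuum supremum into the grid maximum plus per-cell Brownian fluctuations) is sound, and the ``alternative cleaner route'' you mention at the end is in fact exactly the paper's proof: it consists of the single line that the result ``follows from a simple application of the BDG inequality and Proposition 3.7 in \cite{Cozma:2016a}'', i.e.\ it takes the pointwise-in-time bounds $\sup_{N}\sup_{t\in[0,T]}\E\big[|\tilde v_t|^p\big]<\infty$ from that reference and applies BDG to the martingale part of \eqref{eq2.2.3} over the whole of $[0,T]$. Your self-contained version, however, has a gap in each of its two halves. In the first half, the one-step recursion $\E\big[|\tilde v_{t_{n+1}}|^p\,\big|\,\mathcal F_{t_n}\big]\le(1+C_p\delta t)|\tilde v_{t_n}|^p+C_p\delta t$ does not follow from the inequality $(a+b)^p\le(1+\epsilon)^{p-1}a^p+C_{p,\epsilon}b^p$ that you invoke. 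With $\epsilon$ fixed, iterating produces the factor $(1+\epsilon)^{(p-1)N}\to\infty$; with $\epsilon\sim\delta t$, the constant $C_{p,\epsilon}\sim(\delta t)^{-(p-1)}$ turns $\E\big[|b|^p\,\big|\,\mathcal F_{t_n}\big]\sim(\tilde v_{t_n}^+)^{p/2}(\delta t)^{p/2}$ into a per-step contribution of order $(\tilde v_{t_n}^+)^{p/2}(\delta t)^{1-p/2}$, which is not $O(\delta t)$ even for $p=2$ and destroys the Gr\"onwall iteration. You note that the increment is conditionally centred but your chosen inequality throws that information away; you must actually use it, e.g.\ for even integer $p$ expand $(a+b)^p$ binomially, kill the $k=1$ term with $\E\big[\delta W^v_{t_n}\,\big|\,\mathcal F_{t_n}\big]=0$, and observe that every surviving term with $k\ge2$ carries $(\delta t)^{k/2}\le T^{k/2-1}\delta t$. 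Only then does the recursion close.

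In the second half, your final decomposition requires $\E\big[\max_{0\le n\le N}|\tilde v_{t_n}|^p\big]$ to be bounded uniformly in $N$, whereas the Gr\"onwall step only delivers $\max_{0\le n\le N}\E\big[|\tilde v_{t_n}|^p\big]$. The union bound that you correctly reject for the per-cell fluctuations is equally unavailable here: $\sum_{n}\E\big[|\tilde v_{t_n}|^p\big]$ grows like $N$. This is precisely the point at which a \emph{global} maximal inequality is needed: write $\tilde v_{t_n}=v_0+\sum_{j<n}k(\theta-\tilde v_{t_j}^+)\delta t+\sum_{j<n}\xi\sqrt{\smash[b]{\tilde v_{t_j}^+}}\,\delta W^v_{t_j}$, apply BDG to the (discrete, or continuously interpolated) martingale over all of $[0,T]$, and bound $\E\big[\big(\sum_{j}\xi^2\tilde v_{t_j}^+\delta t\big)^{p/2}\big]$ by Jensen together with the grid moment bounds. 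Your per-cell conditional BDG estimates do not substitute for this. The remaining piece of your argument --- the union bound over the $N$ cell fluctuations, each of $L^1$-size $O\big((\delta t)^{p/2}\big)$, totalling $O\big((\delta t)^{p/2-1}\big)$, which is bounded for $p\ge2$ --- is correct, and reducing to $p\ge2$ by Jensen is fine.
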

\begin{proof}
Follows from a simple application of the Burkholder--Davis--Gundy (BDG) inequality and Proposition 3.7 in \cite{Cozma:2016a}.
\end{proof}

The following lemma, which was proved in \cite{Cozma:2016a}, is concerned with the exponential integrability of the FTE approximation, which is an important ingredient for proving the finiteness of higher moments and the strong convergence of the approximation process $\bar{S}$ defined in \eqref{eq2.2.13}.

\begin{lemma}[Theorem 3.6 in \cite{Cozma:2016a}]\label{Lem3.4}
Let $\lambda>0$ and $N_{0}=\floor{kT}$. If $\lambda<\tfrac{2k^{2}}{\xi^{2}}$ and
\begin{equation}\label{eq3.1.2}
T \leq \frac{2k}{\lambda\xi^{2}}\hspace{1pt},
\end{equation}
or otherwise if $\lambda\geq\tfrac{2k^{2}}{\xi^{2}}$ and
\begin{equation}\label{eq3.1.3}
T \leq \frac{1}{\sqrt{2\lambda}\hspace{1pt}\xi-k}\hspace{1pt},
\end{equation}
then
\begin{equation}\label{eq3.1.4}
\sup_{N>N_{0}}\hspace{1.5pt}\E\bigg[\exp\bigg\{\lambda\int_{0}^{T}{\bar{v}_{t}\hspace{1.5pt}dt}\bigg\}\bigg] < \infty.
\end{equation}
\end{lemma}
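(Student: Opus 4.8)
Proof proposal for Lemma~\ref{Lem3.4} (exponential integrability of the piecewise-constant FTE volatility process).

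\medskip

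The plan is to establish uniform-in-$N$ finiteness of $\E\big[\exp\{\lambda\int_0^T\bar v_t\,dt\}\big]$ by a discrete-time martingale argument, exploiting the fact that $\bar v_t=\tilde v_{t_n}^+$ is constant on each subinterval $[t_n,t_{n+1})$, so that $\int_0^T\bar v_t\,dt=\delta t\sum_{n=0}^{N-1}\tilde v_{t_n}^+$. First I would write $\E\big[\exp\{\lambda\,\delta t\sum_{n=0}^{N-1}\tilde v_{t_n}^+\}\big]$ and condition successively on $\mathcal F_{t_{N-1}},\mathcal F_{t_{N-2}},\dots$, using the tower property to peel off one factor at a time. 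The key one-step estimate is a bound of the form
\begin{equation}\label{eq:onestep}
\E\Big[\exp\big\{\mu_{n+1}\tilde v_{t_{n+1}}^+\big\}\,\big|\,\mathcal F_{t_n}\Big] \leq \exp\big\{\mu_n\tilde v_{t_n}^+ + r_n\big\},
\end{equation}
for a suitable decreasing (or at least controlled) sequence of coefficients $\mu_n$ and a summable remainder $r_n$, where at the final step $\mu_N=0$ and each peeling absorbs one $\lambda\,\delta t\,\tilde v_{t_n}^+$ term. Since $\tilde v_{t_{n+1}}=\tilde v_{t_n}+k(\theta-\tilde v_{t_n}^+)\delta t+\xi\sqrt{\tilde v_{t_n}^+}\,\delta W^v_{t_n}$ and $\delta W^v_{t_n}$ is $\mathcal N(0,\delta t)$ independent of $\mathcal F_{t_n}$, the conditional expectation in \eqref{eq:onestep} is a Gaussian integral: bounding $\tilde v_{t_{n+1}}^+\leq|\tilde v_{t_{n+1}}|$ and then using $\E[e^{aZ}]=e^{a^2/2}$ for the quadratic-in-$\delta W$ exponent (after completing the square), one gets an explicit recursion for $\mu_n$ of Riccati type.

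\medskip

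The heart of the matter is analysing this recursion and showing that, under the stated constraints on $T$, the coefficient sequence stays bounded (indeed stays nonnegative and does not blow up) as we iterate backwards from $n=N$ to $n=0$, uniformly in $N$. Heuristically, the backward recursion for $\mu_n$ is a discretisation of the ODE $\mu'(t) = -c_1\mu(t)^2 - c_2\mu(t) - \lambda$ run backwards from $\mu(T)=0$, whose solution exists on $[0,T]$ precisely when $T$ is below the explosion time of the associated Riccati equation; the two cases $\lambda<2k^2/\xi^2$ and $\lambda\geq 2k^2/\xi^2$ in \eqref{eq3.1.2}--\eqref{eq3.1.3}, together with the role of $N_0=\lfloor kT\rfloor$, correspond to the sign of the discriminant $c_2^2-4c_1\lambda$ (mirroring exactly the case split in $T_x^*$ and in Lemma~\ref{Lem3.2}(1), with the CIR drift sign entering through the $k(\theta-\tilde v^+)\delta t$ term). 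I would quantify the discrete Riccati iteration by an explicit monotone comparison: show that if $\mu_{n+1}\in[0,\mu_\infty]$ for the fixed point $\mu_\infty$ of the one-step map, then $\mu_n\in[0,\mu_\infty]$ as well, so the sequence never escapes a compact interval — this is where the smallness of $T$ (equivalently, $k\delta t<1$ and the explosion-time bound) is used to guarantee the one-step map is a well-defined contraction-like map on $[0,\mu_\infty]$. Controlling $\delta t$-dependent correction terms so that the per-step remainder $r_n$ sums to something $O(1)$ uniformly in $N$ is the routine-but-delicate part.

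\medskip

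Having fixed the backward-stable coefficient sequence, I would then run the telescoping: starting from $\mu_N=0$ and applying \eqref{eq:onestep} repeatedly yields
\begin{equation}\label{eq:telescope}
\E\bigg[\exp\bigg\{\lambda\,\delta t\sum_{n=0}^{N-1}\tilde v_{t_n}^+\bigg\}\bigg] \leq \exp\bigg\{\mu_0\hspace{1pt}v_0 + \sum_{n=0}^{N-1} r_n\bigg\},
\end{equation}
and since $\mu_0$ is bounded by $\mu_\infty$ and $\sum_n r_n$ is bounded uniformly in $N$, the right-hand side is bounded independently of $N$. Finally, since $\int_0^T\bar v_t\,dt=\delta t\sum_{n=0}^{N-1}\tilde v_{t_n}^+$ exactly, \eqref{eq:telescope} is precisely \eqref{eq3.1.4}, and taking the supremum over $N>N_0$ gives the claim. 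The main obstacle, as indicated, is the careful bookkeeping of the backward Riccati recursion — proving it stays in the invariant interval for all $N$ under the sharp threshold \eqref{eq3.1.2}--\eqref{eq3.1.3} rather than some suboptimal constant — and I expect this is exactly why the authors attribute the result to Theorem~3.6 of \cite{Cozma:2016a}, where the computation is carried out in full; here one may simply invoke it.
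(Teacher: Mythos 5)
The paper offers no proof of this lemma at all: it is imported verbatim as Theorem~3.6 of \cite{Cozma:2016a}, and the surrounding text simply records that it ``was proved in \cite{Cozma:2016a}.'' So there is nothing internal to compare against, and your closing remark that one may simply invoke the citation is exactly what the authors do. Your sketch of the underlying argument is the right strategy in outline and is essentially the one used in the cited reference: write $\int_0^T\bar v_t\,dt=\delta t\sum_{n}\tilde v_{t_n}^+$, peel off the last increment by conditioning on $\mathcal F_{t_{N-2}}$ (note all summands are already $\mathcal F_{t_{N-1}}$-measurable, so the first conditioning must go back two nodes), bound $e^{a\tilde v^+}\leq 1+e^{a\tilde v}$, evaluate the Gaussian integral, and obtain the backward recursion $\mu_n=\lambda\delta t+\mu_{n+1}(1-k\delta t)+\tfrac12\xi^2\mu_{n+1}^2\delta t$ for the exponent coefficients. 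One caveat on your stabilisation mechanism: the one-step map has a real fixed point only for $\lambda\leq k^2/(2\xi^2)$, whereas the lemma's first case reaches up to $\lambda<2k^2/\xi^2$ and still carries a finite-horizon constraint $T\leq 2k/(\lambda\xi^2)$; so an invariant-interval/fixed-point argument alone cannot deliver the stated thresholds, and what must actually be controlled is the finite-time growth of the backward Riccati iterates up to the explosion time (together with the summability of the remainders $r_n$, which is where $N_0=\lfloor kT\rfloor$ enters to ensure $k\delta t<1$). That bookkeeping is precisely the content of Theorem~3.6 in \cite{Cozma:2016a}, so deferring to it is both correct and consistent with the paper.
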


Before we can establish the convergence of the approximation process $\bar{S}$, we need the strong convergence of the discretized squared volatility process.

\begin{proposition}[Theorem 1.1 in \cite{Cozma:2017c}]\label{Prop3.5}
Suppose that $\nu>3$ and let $2\leq p<\nu-1$. Then the FTE scheme converges strongly in $L^{p}$ with order 1/2, i.e., there exists a constant $C$ such that, for all $N\geq1$,
\begin{equation}\label{eq3.1.57}
\sup_{t\in\left[0,T\right]}\E\big[|v_{t}-\bar{v}_{t}|^{p}\big]^{\frac{1}{p}} \leq CN^{-\frac{1}{2}}.
\end{equation}
\end{proposition}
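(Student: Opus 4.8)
The plan is to reduce the claim to a bound on the continuous-time error $e_t = v_t - \tilde{v}_t$ and then to estimate the latter by an Itô-based energy argument. First I would write $v_t - \bar{v}_t = (v_t - \tilde{v}_t) + (\tilde{v}_t - \bar{v}_t)$, where on each interval $[t_n,t_{n+1})$ the interpolation error $\tilde{v}_t - \bar{v}_t = \tilde{v}_t - \tilde{v}_{t_n}^{+}$ splits into a drift contribution of order $\delta t$, a Brownian contribution $\xi\sqrt{\tilde{v}_{t_n}^{+}}(W^{v}_{t}-W^{v}_{t_n})$ of order $\sqrt{\delta t}$, and the truncation defect $-(\tilde{v}_{t_n})^{-}$. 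Using the uniform moment bounds of Lemma \ref{Lem3.3}, I would show that $\sup_{t\in[0,T]}\E[|\tilde{v}_t - \bar{v}_t|^{p}]^{1/p} \le C N^{-1/2}$, so that it suffices to prove $\sup_{t\in[0,T]}\E[|e_t|^{p}]^{1/p}\le CN^{-1/2}$.

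Second, from the dynamics $de_t = -k(v_t - \bar{v}_t)\,dt + \xi(\sqrt{v_t} - \sqrt{\bar{v}_t})\,dW^{v}_{t}$ and Itô's formula applied to $x \mapsto |x|^{p}$ (which is $C^{2}$ for $p\ge 2$), I would obtain, after localization and taking expectations,
\[
\frac{d}{dt}\E\big[|e_t|^{p}\big] = -pk\,\E\big[|e_t|^{p-1}\sgn(e_t)(v_t-\bar{v}_t)\big] + \tfrac{1}{2} p(p-1)\xi^{2}\,\E\big[|e_t|^{p-2}(\sqrt{v_t}-\sqrt{\bar{v}_t})^{2}\big].
\]
Writing $v_t - \bar{v}_t = e_t + (\tilde{v}_t - \bar{v}_t)$, the drift yields the dissipative term $-pk\,\E[|e_t|^{p}]$ plus a local-error remainder. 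For the Itô-correction term I would use the identity $(\sqrt{v_t}-\sqrt{\bar{v}_t})^{2} = (v_t-\bar{v}_t)^{2}/(\sqrt{v_t}+\sqrt{\bar{v}_t})^{2}$ together with $(\sqrt{v_t}+\sqrt{\bar{v}_t})^{2} \ge v_t$ (since $\bar{v}_t\ge 0$), which after $(v_t - \bar{v}_t)^{2} \le 2e_t^{2} + 2(\tilde{v}_t - \bar{v}_t)^{2}$ produces a leading term of the form $c\,\E[|e_t|^{p} v_t^{-1}]$ and, again, a local-error remainder.

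The main obstacle is exactly this $v_t^{-1}$ weight: the non-Lipschitz square-root diffusion prevents bounding $\E[|e_t|^{p} v_t^{-1}]$ by a constant multiple of $\E[|e_t|^{p}]$, so the energy estimate does not close by a deterministic Gronwall argument alone. I would resolve this through a stochastic variation-of-constants representation, in which the error is driven by a random Gronwall factor $\exp\!\big(c\int_{s}^{t} v_r^{-1}\,dr\big)$. The expectation of this factor is finite by the exponential integrability of $\int_{0}^{T} v_t^{-1}\,dt$ established in Lemma \ref{Lem3.2}(2), which holds precisely when the effective rate stays below $\tfrac{1}{8}\xi^{2}(\nu-1)^{2}$; a Hölder splitting then separates this exponential weight from the local-error remainder and from lower moments of $e_t$, with the finiteness of the negative moments $\E[v_t^{-q}]$ for $q<\nu$ (Lemma \ref{Lem3.1}(1)) used throughout. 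The admissibility of the Hölder exponents, and hence the restriction $2\le p<\nu-1$ (which in turn forces $\nu>3$), is dictated by requiring both the exponential moment and the negative moments to remain finite.

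Finally, collecting the dissipative term, the controlled $v_t^{-1}$-weighted term, and the local-error remainders — each of the latter being $O(N^{-1/2})$ in the relevant norm by the one-step estimate together with the moment bounds of Lemmas \ref{Lem3.1} and \ref{Lem3.3} — I would close the argument to conclude $\sup_{t\in[0,T]}\E[|e_t|^{p}]\le CN^{-p/2}$. Combined with the interpolation-error bound from the first step, this gives $\sup_{t\in[0,T]}\E[|v_t - \bar{v}_t|^{p}]^{1/p}\le CN^{-1/2}$, as claimed.
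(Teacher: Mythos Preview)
The paper does not prove this statement: Proposition~\ref{Prop3.5} is quoted as Theorem~1.1 of \cite{Cozma:2017c} and used as a black box, so there is no in-paper proof to compare against.

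Your sketch has the right architecture and the right ingredients --- the It\^o energy estimate on $|e_t|^p$, the inequality $(\sqrt{v_t}-\sqrt{\bar v_t})^2 \le v_t^{-1}(v_t-\bar v_t)^2$, and the exponential integrability of $\int_0^T v_t^{-1}\,dt$ from Lemma~\ref{Lem3.2}(2). Two points need sharpening. First, the truncation defect $(\tilde v_{t_n})^{-}$ in your decomposition of $\tilde v_t-\bar v_t$ cannot be bounded to order $N^{-1/2}$ from Lemma~\ref{Lem3.3} alone; since $(\tilde v_{t_n})^{-}\le |v_{t_n}-\tilde v_{t_n}|=|e_{t_n}|$ (because $v_{t_n}\ge 0$), it has to be absorbed into the main Gronwall estimate rather than handled separately up front. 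Second, your ``stochastic variation-of-constants'' step is the crux and is too vague as stated: once you have written $\tfrac{d}{dt}\E[|e_t|^p]$, the weight $v_t^{-1}$ is already inside the expectation and cannot be pulled into a pathwise exponential factor. The mechanism that makes this rigorous --- and the one used both in \cite{Cozma:2017c} and, for the log-spot error, in the proof of Proposition~\ref{Prop4.5} of the present paper --- is to keep the estimate pathwise up to a stopping time, perform a stochastic time change via $\tau^\kappa=\inf\{t:\int_0^t(c_0+c_1 v_u^{-1})\,du\ge\kappa\}$, apply a deterministic Gronwall in the new variable, and then use a H\"older step from an auxiliary exponent $q>p$ back to $p$; this last step is precisely where $\E[\exp(\lambda\int_0^T v_u^{-1}\,du)]<\infty$ enters and where the threshold $p<\nu-1$ is pinned down.
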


\subsection{The backward Euler--Maruyama scheme}\label{subsec:BEM}

Throughout this subsection, $\bar{y}$ and $\bar{v}$ are the processes defined in \eqref{eq2.2.8}. The following lemma is concerned with the finiteness of moments of the BEM approximation.

\begin{lemma}\label{Lem3.6}
Suppose that $\nu\geq1$. Then the BEM scheme has uniformly bounded moments, i.e.,
\begin{equation}\label{eq3.2.1}
\sup_{N\geq1}\hspace{1.5pt}\E\bigg[\sup_{t\in[0,T]}\bar{v}_{t}^p\bigg] < \infty,\hspace{1em} \forall\hspace{.5pt}p\geq1.
\end{equation}
\end{lemma}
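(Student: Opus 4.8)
The plan is to show uniformly (in $N$) bounded $p$-th moments of the BEM approximation $\bar v$ by exploiting the explicit positive representation \eqref{eq2.2.9} of the Lamperti-transformed scheme $\tilde y$. Since $\bar v_t = \tilde y_{t_n}^2$ on $[t_n,t_{n+1})$ and the scheme is monotone in time on the grid, it suffices to bound $\E\big[\max_{0\le n\le N}\tilde y_{t_n}^{2p}\big]$ uniformly in $N$; equivalently, to control $\E\big[\max_n \tilde y_{t_n}^{q}\big]$ for all even integers $q\ge 2$, and then interpolate. First I would record the elementary inequality, valid for $a\ge 0$, $b\in\RR$ and $\delta t$ small enough that $1-\beta\delta t\ge 1$ (recall $\beta=-k/2<0$, so this holds for every $N\ge 1$),
\begin{equation*}
\tilde y_{t_{n+1}} \le \frac{|\tilde y_{t_n}+\gamma\,\delta W^v_{t_n}|}{1-\beta\delta t} + \sqrt{\frac{\alpha\,\delta t}{1-\beta\delta t}} \le \frac{\tilde y_{t_n}+\gamma\,|\delta W^v_{t_n}|}{1-\beta\delta t} + \sqrt{\alpha\,\delta t},
\end{equation*}
where I also used $\sqrt{a+b}\le\sqrt a+\sqrt b$ and $\tilde y_{t_n}>0$. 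Writing $\kappa_n = (1-\beta\delta t)^{-1} = (1+\tfrac{k}{2}\delta t)^{-1}\le 1$, this exhibits $\tilde y_{t_{n+1}}$ as a contraction of $\tilde y_{t_n}$ plus a small noise term.

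Next I would iterate this recursion. Because $\kappa_n\le 1$ and, more precisely, $\prod_{j=m}^{n}\kappa_j \le (1+\tfrac{k}{2}\delta t)^{-(n-m+1)} \le e^{-\frac{k}{2}(t_{n+1}-t_m)}$, solving the linear recursion gives
\begin{equation*}
\tilde y_{t_{n+1}} \le e^{-\frac{k}{2}t_{n+1}}\,y_0 + \sum_{m=0}^{n} e^{-\frac{k}{2}(t_{n+1}-t_m)}\Big(\gamma\,|\delta W^v_{t_m}| + \sqrt{\alpha\,\delta t}\Big).
\end{equation*}
The deterministic part is bounded by $y_0+\sqrt{\alpha T}$ uniformly in $N$ (a geometric sum of $\sqrt{\alpha\,\delta t}$ with ratio $e^{-\frac{k}{2}\delta t}$ telescopes to something $O(\sqrt{\delta t}/\delta t)\cdot\delta t = O(1)$ after one more $\sqrt{\delta t}$ factor — more carefully, $\sum_m e^{-\frac{k}{2}(t_{n+1}-t_m)}\sqrt{\delta t}\le \sqrt{\delta t}\sum_{j\ge 0}e^{-\frac{k}{2}j\delta t} = \sqrt{\delta t}\,(1-e^{-\frac k2\delta t})^{-1} \le C\delta t^{-1/2}$, hmm, which is not bounded). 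So the cleaner route is to keep the factor $\delta t$ inside: rewrite $\sqrt{\alpha\,\delta t} = \sqrt\alpha\cdot\delta t/\sqrt{\delta t}$ is awkward, so instead bound the deterministic accumulation directly by comparison with the ODE $\dot z = -\tfrac k2 z + \sqrt\alpha$ — but the forcing $\sqrt{\alpha\,\delta t}$ per step of length $\delta t$ corresponds to forcing rate $\sqrt{\alpha/\delta t}\to\infty$. This is a genuine subtlety, so what I would actually do is \emph{not} bound $\sqrt{a+b}$ that crudely; instead keep $\tilde y_{t_{n+1}}^2 \le \kappa_n^2(\tilde y_{t_n}+\gamma\delta W^v_{t_n})^2 + \frac{\alpha\,\delta t}{1-\beta\delta t}$ exactly (from \eqref{eq2.2.9}, $\tilde y^2 = (\text{first term})^2\cdot 4 \cdots$ — expanding the square of the sum in \eqref{eq2.2.9} gives precisely $\tilde y_{t_{n+1}}^2 = \kappa_n^2(\tilde y_{t_n}+\gamma\delta W^v_{t_n})^2 + \alpha\kappa_n\,\delta t + 2\cdot(\cdots)\sqrt{(\cdots)}$, the cross term being nonnegative, so we get a clean \emph{lower} bound; for an \emph{upper} bound use $2\sqrt{AB}\le \epsilon A + \epsilon^{-1}B$). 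This yields, for any $\epsilon>0$,
\begin{equation*}
\tilde y_{t_{n+1}}^2 \le (1+\epsilon)\kappa_n^2(\tilde y_{t_n}+\gamma\delta W^v_{t_n})^2 + (1+\epsilon^{-1})\alpha\kappa_n\,\delta t.
\end{equation*}
Choosing $\epsilon = c\,\delta t$ makes $(1+\epsilon)\kappa_n^2 = (1+c\delta t)(1+\tfrac k2\delta t)^{-2} \le 1 - (\tfrac k2 - \tfrac c2)\delta t + O(\delta t^2) \le 1$ for $c<k$ and $N$ large, and the inhomogeneous term becomes $(1+(c\delta t)^{-1})\alpha\kappa_n\delta t \le C(1+\delta t^{-1})\delta t \le C(1+T)$ — still not summable. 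The resolution is the standard one: take expectations first, use $\E[(\tilde y_{t_n}+\gamma\delta W^v_{t_n})^2\,|\,\mathcal F_{t_n}] = \tilde y_{t_n}^2 + \gamma^2\delta t$, so $\E[\tilde y_{t_{n+1}}^2] \le (1+\epsilon)\kappa_n^2(\E[\tilde y_{t_n}^2] + \gamma^2\delta t) + (1+\epsilon^{-1})\alpha\kappa_n\delta t$; with $\epsilon=c\delta t$ this is $\E[\tilde y_{t_{n+1}}^2] \le (1-c'\delta t)\E[\tilde y_{t_n}^2] + C\delta t$ for some $c'>0$, and Grönwall on the grid gives $\sup_n\E[\tilde y_{t_n}^2] \le e^{-c't_n}y_0^2 + C/c' \le C$ uniformly in $N$. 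This is the right bookkeeping.

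With uniform control of $\E[\tilde y_{t_n}^2]$ in hand, I would bootstrap to all even powers $q=2r$. Raising the recursion $\tilde y_{t_{n+1}}^2 \le (1+c\delta t)\kappa_n^2(\tilde y_{t_n}+\gamma\delta W^v_{t_n})^2 + C\delta t$ to the $r$-th power, expanding by the binomial theorem, taking conditional expectations and using the moments of the Gaussian increment $\delta W^v_{t_n}$ (each factor of $(\delta W)^{2j}$ contributes $(2j-1)!!\,\delta t^j$, and lower powers of $\tilde y_{t_n}$ are controlled by the already-established lower-order moment bounds and Young's inequality $ab^{r-1}\le \frac1r a^r + \frac{r-1}{r}b^r$), one obtains $\E[\tilde y_{t_{n+1}}^{2r}] \le (1-c_r\delta t)\E[\tilde y_{t_n}^{2r}] + C_r\delta t$, and Grönwall again gives $\sup_{N}\sup_n\E[\tilde y_{t_n}^{2r}]<\infty$. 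This handles all $p$ that are powers of $2$, and hence by Jensen / Lyapunov all real $p\ge 1$, giving $\sup_N\sup_n\E[\bar v_{t_n}^p] = \sup_N\sup_n\E[\tilde y_{t_n}^{2p}]<\infty$. Finally, to upgrade the pointwise-in-$n$ bound to the \emph{supremum over $t\in[0,T]$} as claimed in \eqref{eq3.2.1} — note $\bar v$ is piecewise constant, $\bar v_t=\tilde y_{t_n}^2$ for $t\in[t_n,t_{n+1})$, so $\sup_{t\in[0,T]}\bar v_t^p = \max_{0\le n\le N}\tilde y_{t_n}^{2p}$, and I would bound $\E[\max_n\tilde y_{t_n}^{2p}]$ via $\E[\max_n\tilde y_{t_n}^{2p}] \le \E\big[\big(\sum_n \tilde y_{t_n}^{2p}\big)\big]$? — no, that loses the uniformity in $N$. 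Instead, the honest way: apply the Doob/Grönwall argument not to $\E[\tilde y_{t_n}^{q}]$ but to $\E[\max_{0\le m\le n}\tilde y_{t_m}^q]$ directly — the recursion $\tilde y_{t_{m+1}}^q \le (1+c\delta t)\kappa_m^q(\tilde y_{t_m}+\gamma\delta W^v_{t_m})^q + C\delta t$ gives, after taking the max over $m\le n$ and using a discrete Burkholder–Davis–Gundy / Doob's inequality for the martingale part $\sum_m (\cdots)\delta W^v_{t_m}$ arising from the binomial cross-terms, a bound of the form $\E[\max_{m\le n+1}\tilde y_{t_m}^q] \le C + C\delta t\sum_{m\le n}\E[\max_{j\le m}\tilde y_{t_j}^q]$, whence discrete Grönwall yields $\sup_N\E[\max_{0\le n\le N}\tilde y_{t_n}^q] \le Ce^{CT}<\infty$. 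The hypothesis $\nu\ge 1$, i.e. $4k\theta\ge\xi^2$ hence $\alpha\ge 0$, is used precisely to guarantee the explicit solution \eqref{eq2.2.9} is well-defined, real and positive, so that all the square roots above make sense and $\tilde y_{t_n}>0$ throughout.

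I expect the main obstacle to be exactly the bookkeeping flagged above: the naive pathwise iteration produces deterministic accumulation terms that are $O(1)$ per unit time but whose per-step size $\sqrt{\alpha\,\delta t}$ does not combine into a uniformly-in-$N$ bound unless one is careful to either (i) take expectations before iterating so that the $\sqrt{\delta t}$-scale noise contributes only at $\delta t$-scale via variance, or (ii) split off a discrete martingale and handle it by Doob/BDG. A secondary technical point is the binomial expansion for higher even moments, where the cross terms pairing an odd power of the Gaussian increment with $\tilde y_{t_m}$-powers must be recognized as (conditionally) mean-zero martingale increments rather than being bounded crudely — doing so both keeps the drift of the recursion negative and makes the $\max$-over-$n$ estimate work. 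Once the linear-in-$\delta t$ recursion $\E[\Phi_{n+1}] \le (1+C\delta t)\E[\Phi_n] + C\delta t$ (with $\Phi_n = \max_{m\le n}\tilde y_{t_m}^q$) is in place, the conclusion is immediate from discrete Grönwall and is manifestly uniform in $N$.
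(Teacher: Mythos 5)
Your overall architecture (control the Lamperti scheme $\tilde y$, get per-step recursions with additive terms of size $\delta t$, bootstrap to higher even moments, then upgrade to the running maximum via Doob/BDG on the martingale part) is sound, and it is genuinely different from the paper, whose proof of this lemma is a one-line citation to Lemma~2.5 of Neuenkirch and Szpruch \cite{Neuenkirch:2014}. However, the central estimate as you derive it does not close. Expanding the square of \eqref{eq2.2.9} with $a=\kappa_n(\tilde y_{t_n}+\gamma\,\delta W^v_{t_n})$ and $b=\alpha\kappa_n\delta t$ gives $\tilde y_{t_{n+1}}^2=\tfrac{a^2}{2}+b+a\sqrt{\tfrac{a^2}{4}+b}$, and your treatment of the cross term (first peeling off $\sqrt b$, then Young with parameter $\epsilon$) yields $\tilde y_{t_{n+1}}^2\le(1+\tfrac{\epsilon}{2})a^2+(1+\tfrac{1}{2\epsilon})b$. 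There is \emph{no} admissible choice of $\epsilon$ here: to keep the multiplicative factor $(1+\tfrac{\epsilon}{2})\kappa_n^2\le 1$ you need $\epsilon=O(\delta t)$, but then $(1+\tfrac{1}{2\epsilon})b\approx \alpha/(2c)$ is a \emph{constant} per step, and $N$ such steps diverge. You notice this ("still not summable") but the proposed resolution — taking conditional expectations first — does not help, because the offending term $(1+\epsilon^{-1})\alpha\kappa_n\delta t$ is deterministic; conditioning only cleans up the Gaussian cross term $2\gamma\tilde y_{t_n}\delta W^v_{t_n}$, which was never the problem. So the claimed recursion $\E[\tilde y_{t_{n+1}}^2]\le(1-c'\delta t)\E[\tilde y_{t_n}^2]+C\delta t$ is not established by your argument.

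The gap is easily repaired, and the repair is exactly what the cited Lemma~2.5 of \cite{Neuenkirch:2014} does. Multiply the implicit equation \eqref{eq2.2.7} by $\tilde y_{t_{n+1}}$ (equivalently, use $\bigl(\tilde y_{t_{n+1}}-\tfrac{a}{2}\bigr)^2=\tfrac{a^2}{4}+b$) to get the exact identity $\tilde y_{t_{n+1}}^2=a\,\tilde y_{t_{n+1}}+b$, and then apply $uv\le\tfrac12 u^2+\tfrac12 v^2$ to $a\,\tilde y_{t_{n+1}}$, absorbing $\tfrac12\tilde y_{t_{n+1}}^2$ into the left-hand side; this gives the clean bound
\begin{equation*}
\tilde y_{t_{n+1}}^2 \;\le\; \kappa_n^2\bigl(\tilde y_{t_n}+\gamma\,\delta W^v_{t_n}\bigr)^2 + 2\alpha\kappa_n\,\delta t \;\le\; \bigl(\tilde y_{t_n}+\gamma\,\delta W^v_{t_n}\bigr)^2 + 2\alpha\,\delta t,
\end{equation*}
with multiplicative factor $\le 1$ \emph{and} additive term $O(\delta t)$ simultaneously — no $\epsilon$-tradeoff is needed. (Equivalently, in your expansion pair $\tfrac{a}{2}$ with the whole radical $\sqrt{\tfrac{a^2}{4}+b}$ rather than splitting off $\sqrt b$ first.) From this point your programme works: conditional expectations and Gaussian moments give $\E[\tilde y_{t_{n+1}}^{2r}]\le(1+C_r\delta t)\E[\tilde y_{t_n}^{2r}]+C_r\delta t$ for every integer $r\ge1$ via the binomial expansion and Young's inequality, discrete Gronwall gives bounds uniform in $N$, and writing $\tilde y_{t_n}^2\le y_0^2+M_n+A_n$ with $M$ the discrete martingale of the increments $2\gamma\kappa_m^2\tilde y_{t_m}\delta W^v_{t_m}$ and $A$ increasing with all moments bounded, Doob's $L^{q}$ inequality yields $\sup_N\E[\max_n\tilde y_{t_n}^{2p}]<\infty$, which is \eqref{eq3.2.1}. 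One small correction in passing: $\nu\ge1$ means $2k\theta\ge\xi^2$, which gives $\alpha=\tfrac{4k\theta-\xi^2}{8}\ge\tfrac{\xi^2}{8}>0$, so positivity of \eqref{eq2.2.9} is indeed guaranteed, but your "i.e.\ $4k\theta\ge\xi^2$" understates the hypothesis.
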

\begin{proof}
Follows from a simple application of Lemma 2.5 in \cite{Neuenkirch:2014} to the CIR process.
\end{proof}

The next lemma is concerned with the exponential integrability of the BEM approximation and is a corollary of Proposition 3.4 in \cite{Cozma:2016b}.

\begin{lemma}\label{Lem3.7}
Suppose that $\nu\geq1$ and let $\lambda>0$. If
\begin{equation}\label{eq3.2.2}
T<\frac{1}{\sqrt{2\lambda}\hspace{1pt}\xi}\hspace{1pt},
\end{equation}
then there exists $N_{0}\in\mathbb{N}$ such that
\begin{equation}\label{eq3.2.3}
\sup_{N>N_{0}}\hspace{1.5pt}\E\bigg[\exp\bigg\{\lambda\int_{0}^{T}{\bar{v}_{t}\hspace{1pt}dt}\bigg\}\bigg] < \infty.
\end{equation}
\end{lemma}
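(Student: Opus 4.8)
The plan is to reduce the exponential integrability of the BEM approximation $\bar v$ directly to Proposition~3.4 of \cite{Cozma:2016b}, which (as the statement signals) controls exponential moments of time-integrals of the drift-implicit square-root Euler scheme for the CIR process. First I would recall that, under the standing assumption $\nu\geq 1$, the Lamperti-transformed scheme \eqref{eq2.2.7} admits the explicit positive solution \eqref{eq2.2.9}, so that $\bar v_t=\tilde y_{t_n}^2$ on $[t_n,t_{n+1})$ is well defined; this is what makes the cited proposition applicable. The key observation is that the piecewise-constant interpolation $\bar v$ differs from the piecewise-linear (or continuous-time) interpolation used in \cite{Cozma:2016b} only through the evaluation point within each subinterval, so $\int_0^T \bar v_t\,dt=\delta t\sum_{n=0}^{N-1}\tilde y_{t_n}^2$, which is precisely the Riemann-sum functional the cited proposition bounds.

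The main steps, in order, would be: (i) fix $\lambda>0$ and rewrite $\lambda\int_0^T\bar v_t\,dt = \lambda\,\delta t\sum_{n=0}^{N-1}\tilde y_{t_n}^2$; (ii) invoke Proposition~3.4 of \cite{Cozma:2016b} with the appropriate identification of constants — note $\gamma=\xi/2$ from \eqref{eq2.2.6}, so the relevant ``variance rate'' is $\gamma^2=\xi^2/4$, and the sufficient condition there should translate into a bound of the form $T<\bigl(\sqrt{2\lambda}\,\xi\bigr)^{-1}$ as stated in \eqref{eq3.2.2}; (iii) extract from that proposition the existence of a threshold $N_0\in\mathbb N$ (the scheme is only guaranteed stable/explicit for $\delta t$ small enough, equivalently $N$ large enough, so that $1-\beta\delta t>0$ and the relevant geometric factors are controlled) beyond which $\sup_{N>N_0}\E[\exp\{\lambda\int_0^T\bar v_t\,dt\}]<\infty$; (iv) conclude. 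I would present this as a short derivation: state the correspondence between \eqref{eq2.2.7}--\eqref{eq2.2.9} and the scheme in \cite{Cozma:2016b}, then cite the proposition to close the argument.

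The main obstacle — and the place where care is needed rather than new ideas — is verifying that the sharp time restriction \eqref{eq3.2.2} is exactly the one delivered by Proposition~3.4 of \cite{Cozma:2016b} after the Lamperti change of variables and the substitution of parameters $\alpha,\beta,\gamma$ from \eqref{eq2.2.6}. One must track how $\lambda$ scales: since $\bar v=\tilde y^2$ and the exponential moment bounds for $\tilde y^2$ typically read in terms of $\exp\{c\int \tilde y_t^2\,dt\}$ with a constant $c$ tied to $\gamma^{-2}$, the admissible range of $c$ translates into $\lambda<\tfrac18\xi^{-2}\cdot(\text{something})$ or, after optimising, into the critical-time form $T<(\sqrt{2\lambda}\,\xi)^{-1}$. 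A secondary subtlety is the role of $N_0$: unlike the FTE case in Lemma~\ref{Lem3.4} where an explicit $N_0=\floor{kT}$ is given, here the implicit scheme's stability and the validity of \eqref{eq2.2.9} only require $\delta t<|\beta|^{-1}=2/k$, so any $N_0\geq kT/2$ works; I would simply state that such an $N_0$ exists, as in the cited proposition, without pinning down its optimal value. No genuinely hard estimate is required — the lemma is a corollary, and the proof is essentially a careful bookkeeping of constants followed by a citation.
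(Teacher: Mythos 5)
Your proposal matches the paper exactly: the paper offers no proof of Lemma~\ref{Lem3.7} beyond declaring it a corollary of Proposition~3.4 in \cite{Cozma:2016b}, which is precisely the reduction you carry out, including the identification $\int_0^T\bar v_t\,dt=\delta t\sum_{n=0}^{N-1}\tilde y_{t_n}^2$ and the translation of the time restriction \eqref{eq3.2.2}. One minor remark: since $\beta=-k/2<0$, the quantity $1-\beta\delta t$ in \eqref{eq2.2.9} is positive for every step size, so no step-size restriction is needed for well-posedness of the implicit scheme under $\nu\geq1$; the threshold $N_0$ is simply inherited from the cited proposition.
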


Before we can establish the convergence of the approximation process $\bar{S}$, we need the strong convergence of the discretized squared volatility process. For convenience of notation, define
\begin{equation}\label{eq4.0.0}
\bar{t}=\delta t\floor*{\frac{t}{\delta t}}
\end{equation}
for all $t\in[0,T]$.

\begin{proposition}\label{Prop3.8}
Suppose that $\nu>1$ and let $1\leq p<\nu$. Then the BEM scheme converges strongly in $L^{p}$ with order 1/2, i.e., there exists a constant $C$ such that, for all $N\geq1$,
\begin{equation}\label{eq3.2.4}
\sup_{t\in\left[0,T\right]}\E\big[|y_{t}-\bar{y}_{t}|^{p}\big]^{\frac{1}{p}} \leq CN^{-\frac{1}{2}}
\end{equation}
and
\begin{equation}\label{eq3.2.5}
\sup_{t\in\left[0,T\right]}\E\big[|v_{t}-\bar{v}_{t}|^{p}\big]^{\frac{1}{p}} \leq CN^{-\frac{1}{2}}.
\end{equation}
\end{proposition}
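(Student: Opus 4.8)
The plan is to reduce the statement for $\bar v$ to the statement for $\bar y$, and then to prove the latter by a standard drift-implicit-Euler error analysis adapted to the Lamperti-transformed SDE \eqref{eq2.2.5}, where the difficulty is entirely due to the $\alpha y_t^{-1}$ term, which is locally Lipschitz but blows up near the origin. First I would observe that since $\bar v_t = \bar y_{\bar t}^{\hspace{1pt}2}$ and $v_t = y_t^2$, we have
\begin{equation*}
|v_t - \bar v_t| = |y_t^2 - \bar y_{\bar t}^{\hspace{1pt}2}| = |y_t - \bar y_{\bar t}|\cdot|y_t + \bar y_{\bar t}|,
\end{equation*}
so by the Cauchy--Schwarz inequality it suffices to control $\E[|y_t - \bar y_{\bar t}|^{2p}]$ together with $\E[|y_t + \bar y_{\bar t}|^{2p}]$; the latter is finite and uniformly bounded in $N$ by Lemma~\ref{Lem3.6} (applied to $\bar v = \bar y^2$) and Lemma~\ref{Lem3.1} (for $y = \sqrt v$). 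Splitting $|y_t - \bar y_{\bar t}| \leq |y_t - y_{\bar t}| + |y_{\bar t} - \bar y_{\bar t}|$, the first term is handled by the $L^q$-Hölder continuity of the CIR-type process $y$ (which has finite negative moments by Lemma~\ref{Lem3.1} since $\nu>1$), giving an $O(\sqrt{\delta t})$ bound; thus everything reduces to \eqref{eq3.2.4} at grid points, i.e.\ $\sup_n \E[|y_{t_n} - \tilde y_{t_n}|^p]^{1/p} = O(N^{-1/2})$.

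For \eqref{eq3.2.4} itself I would follow the approach of \cite{Alfonsi:2013,Dereich:2012,Neuenkirch:2014}: write the error $e_{n} = y_{t_n} - \tilde y_{t_n}$, subtract the exact integral form of \eqref{eq2.2.5} over $[t_n,t_{n+1}]$ from the scheme \eqref{eq2.2.7}, and exploit the one-sided Lipschitz (monotonicity) property of the drift $g(y) = \alpha y^{-1} + \beta y$: since $\alpha>0$ and $\beta<0$, $g$ is strictly decreasing on $(0,\infty)$, so $(g(y)-g(z))(y-z)\leq 0$ for all $y,z>0$. Testing the error recursion against $e_{n+1}$ and using this monotonicity makes the implicit drift term work in our favour, leaving a Gronwall-type recursion
\begin{equation*}
\E[e_{n+1}^2] \leq (1+C\delta t)\,\E[e_n^2] + \text{(local truncation error terms)},
\end{equation*}
where the local error terms come from replacing $\int_{t_n}^{t_{n+1}} g(y_u)\,du$ by $g(\tilde y_{t_{n+1}})\delta t$ and are of order $(\delta t)^2$ in $L^1$ provided one controls $\E[\sup_{u\in[t_n,t_{n+1}]} y_u^{-2}]$ and $\E[\sup y_u^2]$ uniformly — which again follows from Lemma~\ref{Lem3.1}. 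Summing and applying discrete Gronwall yields $\sup_n \E[e_n^2] = O(\delta t)$; the passage to general $1\leq p<\nu$ is obtained either by the same computation with $|e_{n+1}|^{p-2}e_{n+1}$ as test function, or by interpolation between the $L^2$ estimate and the uniform moment bounds of Lemma~\ref{Lem3.6}, noting that the constraint $p<\nu$ is exactly what guarantees the requisite negative moments of $y$ are finite.

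The main obstacle is the control of the negative moments of $y$ (equivalently of $v$) that enter through the $y^{-1}$ drift term in every step of the error analysis: one needs $\E[\sup_{u\in[0,T]} y_u^{-q}] < \infty$ for $q$ up to roughly $p$, and more delicately one needs the local-in-time version $\E[\sup_{u\in[t_n,t_{n+1}]} y_u^{-q}]$ to be uniformly bounded in $n$ so that the local truncation error is genuinely $O((\delta t)^2)$ rather than merely $O(\delta t)$. This is precisely where the Feller-ratio condition $\nu>1$ and the restriction $p<\nu$ are used, via Lemma~\ref{Lem3.1}(1) with negative exponent; I would also need to verify that the same bound holds for the scheme, i.e.\ that $\E[\tilde y_{t_n}^{-q}]$ is uniformly bounded, which for the BEM scheme follows from the explicit positive formula \eqref{eq2.2.9} and the moment estimates of \cite{Neuenkirch:2014}. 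Once these negative-moment bounds are in hand, the rest is the routine implicit-Euler Gronwall argument sketched above.
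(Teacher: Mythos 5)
Your overall skeleton is the same as the paper's: reduce \eqref{eq3.2.5} to \eqref{eq3.2.4} via the factorization $|v_t-\bar v_t|=(y_t+\bar y_t)|y_t-\bar y_t|$, and obtain \eqref{eq3.2.4} by splitting off the time-increment $|y_t-y_{\bar t}|$ and invoking the drift-implicit Euler analysis for the Lamperti-transformed CIR process (the paper simply cites Lemma 3.2 and Proposition 3.3 of \cite{Dereich:2012} for this, which is exactly the monotone-drift/Gronwall argument you sketch). However, there is one genuine gap in your reduction step. You apply the Cauchy--Schwarz inequality to the product $(y_t+\bar y_t)^p|y_t-\bar y_t|^p$, which forces you to control $\E\big[|y_t-\bar y_t|^{2p}\big]$. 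But the $L^q$ convergence of $\bar y$ is only available for $q<\nu$ --- the constraint comes from the inverse moments of $v$ needed to handle the $\alpha y^{-1}$ drift, as you yourself note --- so your argument only covers $p<\nu/2$, not the full range $1\le p<\nu$ claimed in the proposition. The fix is the one the paper uses: apply H\"older with the pair $\big(q/p,\,q/(q-p)\big)$ for some $q\in(p,\nu)$, which places the error factor at exponent $q<\nu$ (still admissible) and dumps the excess integrability onto the \emph{positive} moments $\E\big[v_t^{pq/(2(q-p))}\big]$ and $\E\big[\bar v_t^{pq/(2(q-p))}\big]$, which are finite for every order by Lemmas \ref{Lem3.1} and \ref{Lem3.6}.

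A secondary, more minor point: you assert that Lemma \ref{Lem3.1}(1) with negative exponent yields bounds on $\E\big[\sup_{u\in[t_n,t_{n+1}]}y_u^{-q}\big]$. It does not --- it only gives $\sup_{t}\E[v_t^{-r}]<\infty$ for $r<\nu$, and moving the supremum inside the expectation turns this into a negative moment of the running minimum of the CIR process, which requires a strictly stronger condition on the Feller ratio. Fortunately this quantity is not actually needed: in the local truncation error one integrates in time first (Fubini) and then uses $\sup_u\E[\cdot]$ together with H\"older at a slightly larger exponent, so the pointwise-in-time negative moments suffice. With these two repairs your argument goes through and coincides with the paper's.
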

\begin{proof}
The triangle inequality yields
\begin{equation}\label{eq3.2.6}
\sup_{t\in\left[0,T\right]}\E\big[|y_{t}-\bar{y}_{t}|^{p}\big] \leq 2^{p-1}\sup_{t\in\left[0,T\right]}\E\big[|y_{t}-y_{\bar{t}}|^{p}\big] + 2^{p-1}\sup_{t\in\left[0,T\right]}\E\big[|y_{\bar{t}}-\tilde{y}_{\bar{t}}|^{p}\big],
\end{equation}
and the bound in \eqref{eq3.2.4} is a direct consequence of Lemma 3.2 and Proposition 3.3 in \cite{Dereich:2012}. Since
\begin{equation}\label{eq3.2.7}
|v_{t}-\bar{v}_{t}| = (y_{t}+\bar{y}_{t})|y_{t}-\bar{y}_{t}|,
\end{equation}
choosing any $p<q<\nu$ and applying H\"older's inequality leads to
\begin{align}\label{eq3.2.8}
\sup_{t\in\left[0,T\right]}\E\big[|v_{t}-\bar{v}_{t}|^{p}\big] &\leq 2^{p-1}\bigg\{\sup_{t\in[0,T]}\E\Big[v_{t}^{\frac{pq}{2(q-p)}}\Big]^{1-\frac{p}{q}} + \sup_{N>N_{0}}\hspace{1.5pt}\sup_{t\in[0,T]}\E\Big[\bar{v}_{t}^{\frac{pq}{2(q-p)}}\Big]^{1-\frac{p}{q}}\bigg\} \nonumber\\[2pt]
&\times \sup_{t\in\left[0,T\right]}\E\big[|y_{t}-\bar{y}_{t}|^{q}\big]^{\frac{p}{q}}.
\end{align}
The bound in \eqref{eq3.2.5} follows from Lemmas \ref{Lem3.1} and \ref{Lem3.6} and \eqref{eq3.2.4}.
\end{proof}

\section{The spot process}\label{sec:spot}

\subsection{The log-spot process}\label{subsec:log-spot}

The following auxiliary result provides upper bounds on the discretization errors in the drift and diffusion functions $\mu$ and $\sigma$.

\begin{lemma}\label{Lem4.3}
Under Assumption \ref{Asm2.2} we have that, for all $u\in[0,T]$,
\begin{equation}\label{eq4.0.1}
\left|\mu\big(u,S_{u},M_{u}\big)-\mu\big(u,\bar{S}_{\bar{u}},\bar{M}_{\bar{u}}\big)\right| \leq \left(C_{\mu,x}+2C_{\mu,m}\right)\sup_{t\in[0,u]}\left|x_{t}-x_{\bar{t}}\right|
+ \left(C_{\mu,x}+C_{\mu,m}\right)\sup_{t\in[0,u]}\left|x_{t}-\bar{x}_{t}\right|
\end{equation}
and
\begin{align}\label{eq4.0.2}
\left|\sigma\big(u,S_{u},M_{u}\big)-\sigma\big(\bar{u},\bar{S}_{\bar{u}},\bar{M}_{\bar{u}}\big)\right| &\leq C_{\sigma,t}\sqrt{\delta t} + \left(C_{\sigma,x}+2C_{\sigma,m}\right)\sup_{t\in[0,u]}\left|x_{t}-x_{\bar{t}}\right| \nonumber\\[2pt]
&+ \left(C_{\sigma,x}+C_{\sigma,m}\right)\sup_{t\in[0,u]}\left|x_{t}-\bar{x}_{t}\right|.
\end{align}
\end{lemma}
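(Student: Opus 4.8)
The plan is to decompose the error in $\sigma$ along a telescoping path that separates the contribution of the time discretization from that of the spatial and maximum arguments, and then to bound the log-running-maximum error by the log-spot error. First I would apply the triangle inequality to split
\[
\left|\sigma\big(u,S_{u},M_{u}\big)-\sigma\big(\bar{u},\bar{S}_{\bar{u}},\bar{M}_{\bar{u}}\big)\right| \leq \left|\sigma\big(u,S_{u},M_{u}\big)-\sigma\big(\bar{u},S_{u},M_{u}\big)\right| + \left|\sigma\big(\bar{u},S_{u},M_{u}\big)-\sigma\big(\bar{u},\bar{S}_{\bar{u}},\bar{M}_{\bar{u}}\big)\right|.
\]
For the first term, since $u$ and $\bar{u}$ lie in the same discretization cell (so $|u-\bar u|\leq\delta t$) and there are no jump nodes strictly between them by the indexing convention, the jump-indicator sum in \eqref{eq2.0.4} vanishes and only the $C_{\sigma,t}\sqrt{|u-\bar u|}\leq C_{\sigma,t}\sqrt{\delta t}$ contribution survives. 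For the second term I would apply the Lipschitz bound \eqref{eq2.0.4} in the log variables, writing $\log(S_u)=x_u$, $\log(M_u)=m_u$, $\log(\bar S_{\bar u})=\bar x_{\bar u}=x_{\bar u}$ (here using $\bar x$ piecewise constant on cells; more precisely $\bar x_{\bar u}$ is the nodal value) and $\log(\bar M_{\bar u})=\bar m_{\bar u}=\max_{0\leq i\leq \lfloor \bar u/\delta t\rfloor}\bar x_{t_i}$, to get
\[
\left|\sigma\big(\bar{u},S_{u},M_{u}\big)-\sigma\big(\bar{u},\bar{S}_{\bar{u}},\bar{M}_{\bar{u}}\big)\right| \leq C_{\sigma,x}\left|x_u-x_{\bar u}\right| + C_{\sigma,m}\left|m_u-\bar m_{\bar u}\right|.
\]

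The crux is then to bound $|m_u-\bar m_{\bar u}|$ in terms of the two supremum quantities appearing in the statement. I would use the elementary fact that for the running maxima of two functions, the difference of the suprema is at most the supremum of the differences; splitting the error into a ``continuous $x$ versus its piecewise-constant sampling $x_{\bar\cdot}$'' part and a ``sampled true process versus sampled approximation'' part gives
\[
\left|m_u-\bar m_{\bar u}\right| = \Big|\sup_{t\in[0,u]}x_t - \max_{0\leq i\leq\lfloor\bar u/\delta t\rfloor}\bar x_{t_i}\Big| \leq \sup_{t\in[0,u]}\left|x_t - x_{\bar t}\right| + \sup_{t\in[0,u]}\left|x_t-\bar x_t\right|,
\]
where in the first term I pass from the continuous running maximum to the maximum over the grid sampling of $x$ (controlled by $\sup|x_t-x_{\bar t}|$, noting $\bar u\leq u$ only drops nodes and so cannot increase the gap in the relevant direction — this needs a short two-sided argument), and in the second term I compare the grid values of $x$ and $\bar x$ which is bounded by $\sup|x_t-\bar x_t|$. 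Collecting terms yields the coefficient $C_{\sigma,x}+2C_{\sigma,m}$ on $\sup|x_t-x_{\bar t}|$ and $C_{\sigma,x}+C_{\sigma,m}$ on $\sup|x_t-\bar x_t|$, as claimed; the bound \eqref{eq4.0.1} for $\mu$ is entirely analogous but with no time term since $\mu$ is evaluated at the same time $u$ in both arguments (the drift integral in \eqref{eq2.2.11} keeps the exact running time), only the spatial arguments being frozen at $\bar u$.

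The main obstacle, and the only step requiring genuine care rather than bookkeeping, is the running-maximum comparison: one must verify that replacing $\sup_{t\in[0,u]}$ by $\max$ over grid points up to index $\lfloor\bar u/\delta t\rfloor$ really is controlled by the two stated suprema, in both directions of the inequality, since the approximate maximum is taken over a coarser (grid, truncated at $\bar u$) index set than the exact one. The clean way is: for any $t\in[0,u]$, $x_t \leq x_{\bar t} + \sup_{s\le u}|x_s-x_{\bar s}| \leq \bar x_{\bar t} + \sup_{s\le u}|x_s-x_{\bar s}| + \sup_{s\le u}|x_s-\bar x_s| \leq \bar m_{\bar u} + (\text{the two sups})$ provided $\bar t\leq \bar u$, which holds for $t\le u$; taking the supremum over $t$ gives one direction, and the reverse direction is symmetric since every grid node $t_i\le \bar u$ is also $\le u$. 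This is where the specific interpolation choice (piecewise-constant $\bar v$, nodal evaluation of $\mu,\sigma$) and the convention $\bar t = \delta t\lfloor t/\delta t\rfloor$ are used; everything else is a direct application of Assumption \ref{Asm2.2} and the triangle inequality.
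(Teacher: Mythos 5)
Your argument follows essentially the same route as the paper's: a triangle-inequality decomposition, the observation that the jump-indicator sum in \eqref{eq2.0.4} vanishes because $N$ is a multiple of $N_{T}$ so no jump node lies in $(\bar{u},u]$, the Lipschitz bounds in the log-coordinates, and a two-sided comparison of running maxima. Your treatment of the maximum is correct and in fact marginally sharper than the paper's: you bound $|m_{u}-\bar{m}_{\bar{u}}|$ directly by $\sup_{t\in[0,u]}|x_{t}-x_{\bar{t}}|+\sup_{t\in[0,u]}|x_{t}-\bar{x}_{t}|$ in one step (and your two-sided verification of this is sound), whereas the paper splits through the intermediate quantity $m_{\bar{u}}$ and pays $2\sup_{t\in[0,u]}|x_{t}-x_{\bar{t}}|+\sup_{t\in[0,u]}|x_{t}-\bar{x}_{t}|$; either bound suffices for the stated constants.

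There is one slip to repair: you write $\log(\bar{S}_{\bar{u}})=\bar{x}_{\bar{u}}=x_{\bar{u}}$, and consequently your displayed intermediate bound carries the term $C_{\sigma,x}|x_{u}-x_{\bar{u}}|$. This is false in general: $\bar{x}_{\bar{u}}$ is the nodal value of the \emph{approximation}, not of the true log-spot, so the Lipschitz bound in the spot argument gives $C_{\sigma,x}|x_{u}-\bar{x}_{\bar{u}}|$, which must then be split as $|x_{u}-\bar{x}_{\bar{u}}|\leq|x_{u}-x_{\bar{u}}|+|x_{\bar{u}}-\bar{x}_{\bar{u}}|\leq\sup_{t\in[0,u]}|x_{t}-x_{\bar{t}}|+\sup_{t\in[0,u]}|x_{t}-\bar{x}_{t}|$. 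Note also that your displayed intermediate bounds, taken literally, would collect to the coefficients $C_{\sigma,x}+C_{\sigma,m}$ on $\sup|x_{t}-x_{\bar{t}}|$ and $C_{\sigma,m}$ on $\sup|x_{t}-\bar{x}_{t}|$, not the $(C_{\sigma,x}+2C_{\sigma,m},\,C_{\sigma,x}+C_{\sigma,m})$ you claim, so the bookkeeping as written is internally inconsistent. With the corrected spot term you obtain $C_{\sigma,x}+C_{\sigma,m}$ on both suprema, which is at most the stated constants since $C_{\sigma,m}\geq0$, and the lemma follows; the $\mu$ bound is handled correctly (same time argument in both evaluations, hence no time term).
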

\begin{proof}
See Appendix \ref{sec:aux2}.
\end{proof}

Since the choice of discretization scheme for the squared volatility process makes little difference in the subsequent proofs, we henceforth denote by $\bar{v}$ both the FTE and the BEM discretizations. For the convergence analysis, we need to control the polynomial moments of the log-spot process and its approximation.

\begin{lemma}\label{Lem4.4}
The following statements hold under Assumption \ref{Asm2.1}.
\begin{enumerate}[(1)]
\item{The log-spot process has uniformly bounded moments, i.e.,
\begin{equation}\label{eq4.1.30}
\E\bigg[\sup_{t\in[0,T]}|x_{t}|^{p}\bigg] < \infty,\hspace{1em} \forall\hspace{.5pt}p\geq1.
\end{equation}}
\item{The approximated log-spot process has uniformly bounded moments, i.e.,
\begin{equation}\label{eq4.1.31}
\sup_{N\geq1}\hspace{1.5pt}\E\bigg[\sup_{t\in[0,T]}|\bar{x}_{t}|^{p}\bigg] < \infty,\hspace{1em} \forall\hspace{.5pt}p\geq1.
\end{equation}}
\end{enumerate}
\end{lemma}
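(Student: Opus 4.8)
The plan is to establish both moment bounds by working with the explicit integral representations of $x$ and $\bar{x}$, applying the Burkholder--Davis--Gundy inequality to the stochastic integral terms and Jensen (or H\"older) to the drift and It\^o-correction terms, and then controlling the remaining random coefficients using the boundedness of $\mu$ and $\sigma$ from Assumption \ref{Asm2.1} together with the moment bounds on the volatility from Section \ref{sec:vol}. Since $\sigma$ is bounded by $\sigma_{max}$, the only source of unboundedness in the coefficients is $\sqrt{v_t}$ (respectively $\sqrt{\bar{v}_t}$), so the key point is that $v$ has uniformly bounded moments of all orders (Lemma \ref{Lem3.1}(2)) and $\bar{v}$ has moments bounded uniformly in $N$ (Lemma \ref{Lem3.3} for FTE, Lemma \ref{Lem3.6} for BEM, the latter requiring $\nu\geq1$, which holds under the theorem's hypothesis $\nu>\nu^*>1$; for the PDV corollary $\bar v\equiv 1$ and the point is moot).

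For part (1), I would start from the SDE for $x_t=\log S_t$, namely
\begin{equation*}
x_{t} = x_{0} + \int_{0}^{t}\!\Big(\mu(u,S_u,M_u)-\tfrac12\sigma^2(u,S_u,M_u)v_u\Big)du + \int_{0}^{t}\!\sigma(u,S_u,M_u)\sqrt{v_u}\,dW^s_u.
\end{equation*}
Taking $\sup_{t\in[0,T]}|\cdot|^p$ and then expectations, I would bound the drift term by $(\mu_{max}+\tfrac12\sigma_{max}^2\sup_u v_u)^p T^{p}$ up to constants via Jensen, and the martingale term by the BDG inequality, obtaining $C\,\E[(\int_0^T \sigma^2 v_u\,du)^{p/2}]\leq C\,\sigma_{max}^p\,T^{p/2}\,\E[\sup_u v_u^{p/2}]$ again by Jensen. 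Both are finite by Lemma \ref{Lem3.1}(2). For part (2), the identical argument applies to \eqref{eq2.2.12}, with $\mu,\sigma$ replaced by their piecewise-constant versions $\bar\mu,\bar\sigma$ (which are still pointwise bounded by $\mu_{max},\sigma_{max}$) and $v$ replaced by $\bar v$; here one must additionally take the supremum over $N\geq1$ before the expectation, which passes through because the bounds on $\E[\sup_t\bar v_t^{q}]$ are uniform in $N$ by Lemmas \ref{Lem3.3} and \ref{Lem3.6}. Note the running maximum $M_u$ (resp.\ $\bar M_u$) enters only as an argument of the bounded functions $\mu,\sigma$, so it causes no difficulty whatsoever.

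The argument is essentially routine and I do not expect a genuine obstacle; the one point requiring a little care is that in part (2) one wants the moment bound uniform in $N$, so one should be careful to express every constant in terms of $T,p,\mu_{max},\sigma_{max}$ and the ($N$-uniform) volatility moment bounds, never in terms of $\delta t$ or $N$ itself. A second minor subtlety is that Lemma \ref{Lem3.6} requires $\nu\geq1$, so strictly one should note that this hypothesis is in force (it follows from $\nu>\nu^*$ in Theorem \ref{Thm2.7}, and is vacuous in the FTE case). Since the details are standard I would relegate the full computation to the Appendix, as the paper does for its other auxiliary estimates.
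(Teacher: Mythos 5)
Your proposal is correct and follows essentially the same route as the paper's proof: the integral representation, Jensen/H\"older for the drift and It\^o-correction terms, BDG for the stochastic integral, boundedness of $\mu,\sigma$ from Assumption \ref{Asm2.1}, and the ($N$-uniform) moment bounds on $v$ and $\bar v$ from Lemmas \ref{Lem3.1}, \ref{Lem3.3} and \ref{Lem3.6}. The only cosmetic difference is that after Fubini the paper gets away with $\sup_{t}\E[v_t^{p}]$ (Lemma \ref{Lem3.1}(1)) rather than $\E[\sup_t v_t^{p}]$, and your remark about the implicit $\nu\geq1$ hypothesis for the BEM case is a fair observation.
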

\begin{proof}
(1) Note from Jensen's inequality that it suffices to consider $p\geq2$. Recall from \eqref{eq2.1.1} that
\begin{equation}\label{eq4.1.32}
x_{t} = x_{0} + \int_{0}^{t}{\mu\big(u,S_{u},M_{u}\big)du} - \frac{1}{2}\int_{0}^{t}{\sigma^{2}\big(u,S_{u},M_{u}\big)v_{u}\,du} + \int_{0}^{t}{\sigma\big(u,S_{u},M_{u}\big)\sqrt{v_{u}}\,dW^{s}_{u}}.
\end{equation}
Using the H\"older and BDG inequalities and Fubini's theorem, we deduce that
\begin{align}\label{eq4.1.33}
\E\bigg[\sup_{t\in[0,T]}|x_{t}|^{p}\bigg] &\leq 4^{p-1}\left(|x_{0}|^{p}+\mu_{max}^{p}T^{p}\right) + 2^{p-2}\sigma_{max}^{2p}T^{p}\sup_{t\in[0,T]}\E\big[v_{t}^{p}\big] \nonumber\\[2pt]
&+ 4^{p-1}\sigma_{max}^{p}T^{\frac{p}{2}}C\sup_{t\in[0,T]}\E\big[v_{t}^{p/2}\big],
\end{align}
for some non-negative constant $C$, and the right-hand side is finite by Lemma \ref{Lem3.1}.

\noindent
(2) Recall from \eqref{eq2.2.12} that
\begin{equation}\label{eq4.1.34}
\bar{x}_{t} = x_{0} + \int_{0}^{t}{\bar{\mu}\big(u,\bar{S}_{u},\bar{M}_{u}\big)du} - \frac{1}{2}\int_{0}^{t}{\bar{\sigma}^{2}\big(u,\bar{S}_{u},\bar{M}_{u}\big)\bar{v}_{u}\,du} + \int_{0}^{t}{\bar{\sigma}\big(u,\bar{S}_{u},\bar{M}_{u}\big)\sqrt{\bar{v}_{u}}\,dW^{s}_{u}}.
\end{equation}
Proceeding as before, we deduce that
\begin{align}\label{eq4.1.35}
\sup_{N\geq1}\hspace{1.5pt}\E\bigg[\sup_{t\in[0,T]}|\bar{x}_{t}|^{p}\bigg] &\leq 4^{p-1}\left(|x_{0}|^{p}+\mu_{max}^{p}T^{p}\right) + 2^{p-2}\sigma_{max}^{2p}T^{p}\sup_{N\geq1}\hspace{1.5pt}\sup_{t\in[0,T]}\E\big[\bar{v}_{t}^{p}\big] \nonumber\\[2pt]
&+ 4^{p-1}\sigma_{max}^{p}T^{\frac{p}{2}}C\sup_{N\geq1}\hspace{1.5pt}\sup_{t\in[0,T]}\E\big[\bar{v}_{t}^{p/2}\big],
\end{align}
and the conclusion follows from Lemmas \ref{Lem3.3} and \ref{Lem3.6}.
\end{proof}

The following result is concerned with the uniform convergence in $L^{p}$ with order 1/2 (up to a logarithmic factor) of the approximated log-spot process. In the special case of constant drift and diffusion functions $\mu$ and $\sigma$, the SPDV model \eqref{eq2.1.1} collapses to the Heston stochastic volatility model and we notice from \eqref{eq2.3.3} -- \eqref{eq2.3.5} that $T^{*}_{x}(p)=\infty$, for all $1\leq p<p^{*}(\nu)$. For the LE--BEM scheme, i.e., when the LE and the BEM schemes are employed in the discretization of the spot process and its squared volatility, respectively, this result was proved in Corollary 5.5 in \cite{Kloeden:2012}. Furthermore, the extension to the LE--FTE scheme is straightforward. However, the simple argument employed to prove Proposition \ref{Prop4.5} under a purely stochastic volatility model does not apply to the general case of non-trivial drift and diffusion functions $\mu$ and $\sigma$, even without path-dependency. In this case, we require more advanced techniques in order to overcome the technical challenges. We also mention that in the case of no stochastic volatility (e.g., take $v_{0}=\theta=1$ and $\xi=0$), the SPDV model \eqref{eq2.1.1} collapses to a path-dependent volatility model and we notice from \eqref{eq2.3.1}, \eqref{eq2.3.2b} and \eqref{eq2.3.3} -- \eqref{eq2.3.5} that $\nu=p^{*}(\nu)=\infty$ and $T^{*}_{x}(p)=\infty$, for all $p\geq1$. In this case, the analysis involved in Proposition \ref{Prop4.5} becomes somewhat simpler, as will be clear from the proof.

\begin{proposition}\label{Prop4.5}
Suppose that Assumptions \ref{Asm2.1} and \ref{Asm2.2} hold and that $\nu>\nu^{*}$, with $\nu^{*}$ defined in \eqref{eq2.3.2a}. Then for all $2\leq p<p^{*}(\nu)$ and $T<T^{*}_{x}(p)$, with $p^{*}$ defined in \eqref{eq2.3.2b} and $T^{*}_{x}$ given in \eqref{eq2.3.3}, there exists a constant $C$ such that, for all $N\geq1$,
\begin{equation}\label{eq4.1.36}
\E\bigg[\sup_{t\in[0,T]}|x_{t}-\bar{x}_{t}|^{p}\bigg]^{\frac{1}{p}} \leq C\sqrt{\frac{\log(2N)}{N}}\hspace{1pt}.
\end{equation}
\end{proposition}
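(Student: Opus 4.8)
The plan is to estimate the difference $x_t - \bar{x}_t$ directly from the integral representations \eqref{eq4.1.32} and \eqref{eq4.1.34}, decompose it into a drift part, a quadratic-variation-type part and a martingale part, and then close a Gronwall-type argument in a suitably weighted $L^p$-norm. Write $e_t = x_t - \bar{x}_t$. From \eqref{eq4.1.32} and \eqref{eq4.1.34},
\begin{align*}
e_t &= \int_0^t\!\big(\mu(u,S_u,M_u)-\bar{\mu}(u,\bar{S}_u,\bar{M}_u)\big)du - \tfrac12\int_0^t\!\big(\sigma^2(u,S_u,M_u)v_u-\bar{\sigma}^2(u,\bar{S}_u,\bar{M}_u)\bar{v}_u\big)du\\
&\quad + \int_0^t\!\big(\sigma(u,S_u,M_u)\sqrt{v_u}-\bar{\sigma}(u,\bar{S}_u,\bar{M}_u)\sqrt{\bar{v}_u}\big)dW^s_u.
\end{align*}
Each integrand is split so that the dependence on $(\sigma,v)$ is separated from the dependence on the argument: in every term add and subtract $\sigma(u,S_u,M_u)\sqrt{\bar v_u}$ (resp.\ the corresponding squared quantity), so one piece is controlled by Lemma \ref{Lem4.3} (the discretization error in $\sigma$ and $\mu$, hence by $C_{\sigma,t}\sqrt{\delta t}$ plus multiples of $\sup_{t\le u}|x_t-x_{\bar t}|$ and $\sup_{t\le u}|e_t|$), and the other piece is $\sigma_{max}$ times $|\sqrt{v_u}-\sqrt{\bar v_u}|$ or $|v_u-\bar v_u|$, controlled by Proposition \ref{Prop3.5}/\ref{Prop3.8} together with the rate-$N^{-1/2}$ bound on $\sup_{t\le T}|x_t-x_{\bar t}|$ available from the CIR moment bounds (this last piece is standard and contributes an $O(N^{-1/2}\sqrt{\log(2N)})$ term). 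I would also use Lemma \ref{Lem3.2} (part 1) and Lemmas \ref{Lem3.4}, \ref{Lem3.7} to guarantee that exponential moments of $\int_0^T v_u\,du$ and $\int_0^T\bar v_u\,du$ are finite, which is what forces the constraints $\nu>\nu^*$, $p<p^*(\nu)$ and $T<T^*_x(p)$.

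The crux is the martingale term, because the diffusion coefficient involves $\sqrt{\bar v_u}$ which is not bounded, so a naive BDG + Gronwall estimate diverges. The key device I expect the authors to use (and which I would use) is a change of measure / exponential-weight trick: introduce the stochastic exponential $Z_t = \exp\{-\eta\int_0^t \bar v_u\,du\}$ for a suitable $\eta>0$ (or equivalently work with the process $e^{-\eta\int_0^t \bar v_u\,du}|e_t|^p$), apply Itô's formula, and observe that the $\bar v_u|e_u|^p$ contribution from BDG on the martingale part is absorbed by the $-\eta\,\bar v_u|e_u|^p$ drift coming from the weight, provided $\eta$ is large enough relative to $\sigma_{max}$, $C_{\sigma,x}+C_{\sigma,m}$, $p$ and $\beta_0$. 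The appearance of $\beta_0\approx 1.307$ (the root of $\phi_0$ in \eqref{eq2.3.2c}) strongly suggests that a Gaussian-tail / Dufresne-type estimate is invoked to bound $\E[\sup_{t\le u}|x_t-x_{\bar t}|^p \cdot(\text{something involving }\bar v)]$, i.e.\ the sup of the Euler increments over a cell contributes a $\sqrt{\log(2N)}$ factor via a maximal inequality for the Brownian increments weighted by $\sqrt{\bar v}$, and $\beta_0$ is the optimal constant making the relevant exponential integral converge. After this weighting, one is left with an inequality of the form
\[
\E\Big[Z_t\sup_{s\le t}|e_s|^p\Big] \le C\,\frac{\log(2N)^{p/2}}{N^{p/2}} + C\int_0^t \E\Big[Z_u\sup_{s\le u}|e_s|^p\Big]\,du,
\]
and Gronwall's lemma gives $\E[Z_T\sup_{s\le T}|e_s|^p]\le C\log(2N)^{p/2}N^{-p/2}$. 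Finally I would remove the weight by Hölder: for $p<q<p^*(\nu)$, $\E[\sup_{s\le T}|e_s|^p]\le \E[Z_T\sup_{s\le T}|e_s|^q]^{p/q}\,\E[Z_T^{-r}]^{1-p/q}$ for the conjugate exponent, and $\E[Z_T^{-r}]=\E[\exp\{\eta r\int_0^T\bar v_u\,du\}]<\infty$ precisely under the time restriction $T<T^*_x(p)$ (this is where the supremum over $q\in(2\vee p,p^*)$ in \eqref{eq2.3.9}, and the arctan expression \eqref{eq2.3.3} inherited from the exponential-moment threshold in Lemma \ref{Lem3.2}, enter); taking $q$-th roots yields \eqref{eq4.1.36}.

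The main obstacle, then, is twofold: first, setting up the exponential weight and Itô expansion so that the unbounded $\bar v$ terms cancel rather than accumulate — this requires carefully tracking all constants ($p$, $q$, $\sigma_{max}$, $C_{\sigma,x}+C_{\sigma,m}$, $\beta_0$) to see that $\eta$ can be chosen below the exponential-integrability threshold, which is exactly the content of the quantity $\tilde\varphi(p,q)$ in \eqref{eq2.3.5} and the resulting $T^*_x(p)$; and second, producing the $\sqrt{\log(2N)}$ factor with the right power by a maximal inequality over the $N$ time-cells (a union bound over $n$ of Gaussian tails of $\sqrt{\bar v_{t_n}}(W^s_t - W^s_{t_n})$), keeping uniformity in $N$ via Lemmas \ref{Lem3.3} and \ref{Lem3.6}. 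The path-dependence through $M$ and $\bar M$ is comparatively benign: Lemma \ref{Lem4.3} already converts $|\mu(u,S_u,M_u)-\bar\mu(u,\bar S_{\bar u},\bar M_{\bar u})|$ and the $\sigma$-analogue into sup-norm errors of the log-spot, so $m$ and $\bar m$ never need to be handled separately — this is why the Lipschitz-in-log-running-maximum constants only show up bundled as $C_{\sigma,x}+C_{\sigma,m}$ and $C_{\mu,x}+C_{\mu,m}$.
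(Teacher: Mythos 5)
Your overall architecture is the right one: It\^o's formula applied to a power of the error, Lemma \ref{Lem4.3} to reduce the coefficient errors to $\sup_{t}|x_{t}-x_{\bar{t}}|$ and $\sup_{t}|e_{t}|$, a maximal bound on the cell increments (Theorem 1 in \cite{Fischer:2009}) producing the $\sqrt{\log(2N)}$ factor, and a Gronwall argument with a random, unbounded coefficient that is closed via exponential integrability of the CIR process and undone by H\"older. Your exponential-weight device is, in principle, interchangeable with what the paper actually does, namely a stochastic time change $s=g_{q,\eta}(u)$ along the random clock $g_{q,\eta}(t)=\int_{0}^{t}(a_{q,\eta}+(f_{q}(\lambda_{q})+\eta b_{q,\eta})v_{u}+\eta^{2}c_{q,\eta}v_{u}^{-1})\,du$, followed by Gronwall in the new time and a layer-cake/H\"older step costing $\E[\exp\{w\hspace{1pt}g_{q,\eta}(T)\}]^{1-p/q}$; both tricks pay the same exponential-moment price at the end.

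There are, however, three concrete problems with the sketch as written. First, your weight is driven by the \emph{discretized} process $\bar{v}$, and you then invoke Lemma \ref{Lem3.2} to claim $\E[\exp\{\eta r\int_{0}^{T}\bar{v}_{u}\,du\}]<\infty$ for $T<T^{*}_{x}(p)$. Lemma \ref{Lem3.2} concerns the exact CIR process; exponential moments of $\int\bar{v}$ are Lemmas \ref{Lem3.4} and \ref{Lem3.7}, whose thresholds are not the arctan expression \eqref{eq2.3.3}. Moreover, the dominant Gronwall coefficient multiplying $\sup|e^{x}|^{q}$ is proportional to the exact $v_{u}$ (it comes from terms such as $v_{u}\left|\sigma(u,S_{u},M_{u})-\sigma(\bar{u},\bar{S}_{\bar{u}},\bar{M}_{\bar{u}})\right|$), so a $\bar{v}$-weight cannot absorb it; the paper's clock is deliberately built from $v_{u}$ and $v_{u}^{-1}$ only, and this is exactly what makes $T^{*}_{x}(p)$ come out as stated. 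Second, $\beta_{0}$ is not a Gaussian-tail constant: it is the sharp constant in the maximal inequality $\E[\sup_{t}M_{t}]\leq\beta_{0}\E[\langle M\rangle_{T}^{1/2}]$ for continuous-path martingales started at zero (Corollary 4.4 in \cite{Osekowski:2010}), applied to the stochastic integral before the AM-GM step; with the generic Davis/BDG constant you would still get the rate, but only for a strictly smaller critical time than the $T^{*}_{x}(p)$ in the statement, since $\beta_{0}$ enters $\tilde{\varphi}(p,q)$ explicitly. Third, you never address $|\sqrt{v_{u}}-\sqrt{\bar{v}_{u}}|^{2}$ for the FTE scheme: Proposition \ref{Prop3.5} controls $|v-\bar{v}|$, not its square root, and the paper passes through $|\sqrt{v_{u}}-\sqrt{\bar{v}_{u}}|^{2}\leq v_{u}^{-1}|v_{u}-\bar{v}_{u}|^{2}$ followed by H\"older with a negative moment of $v$; this step, not the exponential moments you cite, is where the constraints $\nu>2+\sqrt{3}$ and $p<\nu^{-1}(\nu-1)^{2}$ originate, so without it the stated range of $p$ is not recovered.
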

\begin{proof}
First, by a continuity argument, we can find $p<q<p^{*}(\nu)$ such that
\begin{equation}\label{eq4.1.37.4}
T < \frac{2}{\sqrt{\smash[b]{(\tilde{\varphi}(p,q)-k^{2})^{+}}}}\Bigg[\frac{\pi}{2} + \arctan\Bigg(\frac{k}{\sqrt{\smash[b]{(\tilde{\varphi}(p,q)-k^{2})^{+}}}}\Bigg)\Bigg].
\end{equation}
For convenience of notation, define
\begin{equation}\label{eq4.1.38.1}
e^{x}_{t}=x_{t}-\bar{x}_{t},\hspace{.75em} e^{x}_{0}=0,
\end{equation}
and
\begin{equation}\label{eq4.1.38.2}
\Delta x_{t} = x_{t}-x_{\bar{t}}.
\end{equation}
Let $\tau$ be a stopping time. Applying It\^o's formula to the $\mathcal{C}^{2}$ function $f(e^{x}_{t\wedge\tau})=|e^{x}_{t\wedge\tau}|^{q}$ yields
\begin{align}\label{eq4.1.39}
|e^{x}_{t\wedge\tau}|^{q} &= q\int_{0}^{t\wedge\tau}{|e^{x}_{u}|^{q-1}\sgn(e^{x}_{u})\,de^{x}_{u}} + \frac{1}{2}\hspace{1pt}q(q-1)\int_{0}^{t\wedge\tau}{|e^{x}_{u}|^{q-2}\,d\langle e^{x}\rangle_{u}},
\end{align}
where $\sgn(e^{x})=1$ if $e^{x}>0$ and $\sgn(e^{x})=-1$ otherwise, and hence
\begin{align}\label{eq4.1.40}
|e^{x}_{t\wedge\tau}|^{q} &= q\int_{0}^{t\wedge\tau}{|e^{x}_{u}|^{q-1}\sgn(e^{x}_{u})\left(\mu\big(u,S_{u},M_{u}\big)-\mu\big(u,\bar{S}_{\bar{u}},\bar{M}_{\bar{u}}\big)\right)du} \nonumber\\[2pt]
&- \frac{1}{2}\hspace{1pt}q\int_{0}^{t\wedge\tau}{|e^{x}_{u}|^{q-1}\sgn(e^{x}_{u})\left(v_{u}\sigma^{2}\big(u,S_{u},M_{u}\big)-\bar{v}_{u}\sigma^{2}\big(\bar{u},\bar{S}_{\bar{u}},\bar{M}_{\bar{u}}\big)\right)du} \nonumber\\[2pt]
&+ q\int_{0}^{t\wedge\tau}{|e^{x}_{u}|^{q-1}\sgn(e^{x}_{u})\left(\sqrt{v_{u}}\hspace{1pt}\sigma\big(u,S_{u},M_{u}\big)-\sqrt{\bar{v}_{u}}\hspace{1pt}\sigma\big(\bar{u},\bar{S}_{\bar{u}},\bar{M}_{\bar{u}}\big)\right)dW^{s}_{u}} \nonumber\\[2pt]
&+ \frac{1}{2}\hspace{1pt}q(q-1)\int_{0}^{t\wedge\tau}{|e^{x}_{u}|^{q-2}\left(\sqrt{v_{u}}\hspace{1pt}\sigma\big(u,S_{u},M_{u}\big)-\sqrt{\bar{v}_{u}}\hspace{1pt}\sigma\big(\bar{u},\bar{S}_{\bar{u}},\bar{M}_{\bar{u}}\big)\right)^{2}du}.
\end{align}
Taking the supremum over $[0,T]$ and then expectations on both sides, we deduce that
\begin{align}\label{eq4.1.41}
\E\!\bigg[\sup_{t\in[0,T]}|e^{x}_{t\wedge\tau}|^{q}\bigg] &\leq q\E\bigg[\int_{0}^{T\wedge\hspace{.5pt}\tau}{|e^{x}_{u}|^{q-1}\left|\mu\big(u,S_{u},M_{u}\big)-\mu\big(u,\bar{S}_{\bar{u}},\bar{M}_{\bar{u}}\big)\right|du}\bigg] \nonumber\\[2pt]
&+ \frac{1}{2}\hspace{1pt}q\E\bigg[\int_{0}^{T\wedge\hspace{.5pt}\tau}{|e^{x}_{u}|^{q-1}\left|v_{u}\sigma^{2}\big(u,S_{u},M_{u}\big)-\bar{v}_{u}\sigma^{2}\big(\bar{u},\bar{S}_{\bar{u}},\bar{M}_{\bar{u}}\big)\right|du}\bigg] \nonumber\\[2pt]
&+ q\E\bigg[\sup_{t\in[0,T]}\int_{0}^{t\wedge\tau}{|e^{x}_{u}|^{q-1}\sgn(e^{x}_{u})\left(\sqrt{v_{u}}\hspace{1pt}\sigma\big(u,S_{u},M_{u}\big)-\sqrt{\bar{v}_{u}}\hspace{1pt}\sigma\big(\bar{u},\bar{S}_{\bar{u}},\bar{M}_{\bar{u}}\big)\right)dW^{s}_{u}}\bigg] \nonumber\\[2pt]
&+ \frac{1}{2}\hspace{1pt}q(q\hspace{-1pt}-\hspace{-1pt}1)\E\bigg[\int_{0}^{T\wedge\hspace{.5pt}\tau}{|e^{x}_{u}|^{q-2}\left|\sqrt{v_{u}}\hspace{1pt}\sigma\big(u,S_{u},M_{u}\big)-\sqrt{\bar{v}_{u}}\hspace{1pt}\sigma\big(\bar{u},\bar{S}_{\bar{u}},\bar{M}_{\bar{u}}\big)\right|^{2}du}\bigg].
\end{align}
We can show that the stochastic integral in \eqref{eq4.1.40} is a true martingale by a simple application of H\"older's inequality and Lemmas \ref{Lem3.1}, \ref{Lem3.3}, \ref{Lem3.6} and \ref{Lem4.4}.

Let $\lambda\in(0,1)$. Using a sharp maximal inequality for continuous-path martingales starting at zero (Corollary 4.4 in \cite{Osekowski:2010}) and the arithmetic mean-geometric mean (AM-GM) inequality yields
\begin{align}\label{eq4.1.43}
&\E\bigg[\sup_{t\in[0,T]}\int_{0}^{t\wedge\tau}{|e^{x}_{u}|^{q-1}\sgn(e^{x}_{u})\left(\sqrt{v_{u}}\hspace{1pt}\sigma\big(u,S_{u},M_{u}\big)-\sqrt{\bar{v}_{u}}\hspace{1pt}\sigma\big(\bar{u},\bar{S}_{\bar{u}},\bar{M}_{\bar{u}}\big)\right)dW^{s}_{u}}\bigg] \nonumber\\[2pt]
\leq\hspace{3pt} &\beta_{0}\E\Bigg[\bigg(\int_{0}^{T\wedge\hspace{.5pt}\tau}{|e^{x}_{u}|^{2(q-1)}\left|\sqrt{v_{u}}\hspace{1pt}\sigma\big(u,S_{u},M_{u}\big)-\sqrt{\bar{v}_{u}}\hspace{1pt}\sigma\big(\bar{u},\bar{S}_{\bar{u}},\bar{M}_{\bar{u}}\big)\right|^{2}du}\bigg)^{\frac{1}{2}}\Bigg] \nonumber\\[2pt]
\leq\hspace{3pt} &\beta_{0}\E\Bigg[\bigg(\sup_{t\in[0,T]}|e^{x}_{t\wedge\tau}|^{q}\int_{0}^{T\wedge\hspace{.5pt}\tau}{|e^{x}_{u}|^{q-2}\left|\sqrt{v_{u}}\hspace{1pt}\sigma\big(u,S_{u},M_{u}\big)-\sqrt{\bar{v}_{u}}\hspace{1pt}\sigma\big(\bar{u},\bar{S}_{\bar{u}},\bar{M}_{\bar{u}}\big)\right|^{2}du}\bigg)^{\frac{1}{2}}\Bigg] \nonumber\\[2pt]
\leq\hspace{3pt} &\frac{\lambda}{q}\E\bigg[\sup_{t\in[0,T]}|e^{x}_{t\wedge\tau}|^{q}\bigg] + \frac{q\beta_{0}^{2}}{4\lambda}\E\bigg[\int_{0}^{T\wedge\hspace{.5pt}\tau}{|e^{x}_{u}|^{q-2}\left|\sqrt{v_{u}}\hspace{1pt}\sigma\big(u,S_{u},M_{u}\big)-\sqrt{\bar{v}_{u}}\hspace{1pt}\sigma\big(\bar{u},\bar{S}_{\bar{u}},\bar{M}_{\bar{u}}\big)\right|^{2}du}\bigg],
\end{align}
with $\beta_{0}$ defined in \eqref{eq2.3.2c}. Substituting back into \eqref{eq4.1.41} with \eqref{eq4.1.43} and after some rearrangements, we deduce that
\begin{align}\label{eq4.1.44}
\E\bigg[\sup_{t\in[0,T]}|e^{x}_{t\wedge\tau}|^{q}\bigg] &\leq \frac{q}{1-\lambda}\E\bigg[\int_{0}^{T\wedge\hspace{.5pt}\tau}{|e^{x}_{u}|^{q-1}\left|\mu\big(u,S_{u},M_{u}\big)-\mu\big(u,\bar{S}_{\bar{u}},\bar{M}_{\bar{u}}\big)\right|du}\bigg] \nonumber\\[2pt]
&+ \frac{q}{2(1-\lambda)}\E\bigg[\int_{0}^{T\wedge\hspace{.5pt}\tau}{|e^{x}_{u}|^{q-1}\left|v_{u}\sigma^{2}\big(u,S_{u},M_{u}\big)-\bar{v}_{u}\sigma^{2}\big(\bar{u},\bar{S}_{\bar{u}},\bar{M}_{\bar{u}}\big)\right|du}\bigg] \nonumber\\[2pt]
&+ qc_{q}(\lambda)\E\bigg[\int_{0}^{T\wedge\hspace{.5pt}\tau}{|e^{x}_{u}|^{q-2}\left|\sqrt{v_{u}}\hspace{1pt}\sigma\big(u,S_{u},M_{u}\big)-\sqrt{\bar{v}_{u}}\hspace{1pt}\sigma\big(\bar{u},\bar{S}_{\bar{u}},\bar{M}_{\bar{u}}\big)\right|^{2}du}\bigg],
\end{align}
where we defined, for brevity,
\begin{equation}\label{eq4.1.45}
c_{q}(\lambda) = \frac{q-1}{2(1-\lambda)}+\frac{q\beta_{0}^{2}}{4\lambda(1-\lambda)}\hspace{1pt}.
\end{equation}
For any $n\in\mathbb{N}$ and $z_{k}\geq0$, $d_{k}>0$, for all $1\leq k\leq n$, the Cauchy--Schwarz inequality yields
\begin{equation}\label{eq4.1.46}
\Bigg(\sum_{k=1}^{n}{z_{k}}\Bigg)^{2} \leq \Bigg(\sum_{k=1}^{n}{z_{k}^{2}d_{k}}\Bigg)\Bigg(\sum_{k=1}^{n}{d_{k}^{-1}}\Bigg).
\end{equation}
Let $\eta>0$. Using Lemma \ref{Lem4.3} and \eqref{eq4.1.46} with $n=3$, $d_{1}=d_{2}=2\eta^{-1}(1+\eta)$ and $d_{3}=1+\eta$, we get
\begin{align}\label{eq4.1.47}
\left|\sigma\big(u,S_{u},M_{u}\big)-\sigma\big(\bar{u},\bar{S}_{\bar{u}},\bar{M}_{\bar{u}}\big)\right|^{2} &\leq 2\eta^{-1}(1+\eta)C_{\sigma,t}^{2}\delta t + 2\eta^{-1}(1+\eta)\left(C_{\sigma,x}+2C_{\sigma,m}\right)^{2}\sup_{t\in[0,u]}\left|\Delta x_{t}\right|^{2} \nonumber\\[2pt]
&+ (1+\eta)\left(C_{\sigma,x}+C_{\sigma,m}\right)^{2}\sup_{t\in[0,u]}|e^{x}_{t}|^{2}.
\end{align}
Next, using Lemma \ref{Lem4.3}, \eqref{eq4.1.46} with $n=2$, $d_{1}=\eta^{-1}(1+\eta)$ and $d_{2}=1+\eta$, as well as \eqref{eq4.1.47}, after some rearrangements, we deduce from \eqref{eq4.1.44} that
\begin{align}\label{eq4.1.48}
\E\bigg[\sup_{t\in[0,T]}|e^{x}_{t\wedge\tau}|^{q}\bigg] &\leq \frac{q}{1-\lambda}\left(C_{\mu,x}+C_{\mu,m}\right)\E\bigg[\int_{0}^{T\wedge\hspace{.5pt}\tau}{\sup_{t\in[0,u]}|e^{x}_{t}|^{q}\,du}\bigg] \nonumber\\[2pt]
&+ \frac{q}{1-\lambda}\hspace{1pt}\sigma_{max}\left(C_{\sigma,x}+C_{\sigma,m}\right)\E\bigg[\int_{0}^{T\wedge\hspace{.5pt}\tau}{v_{u}\sup_{t\in[0,u]}|e^{x}_{t}|^{q}\,du}\bigg] \nonumber\\[2pt]
&+ qc_{q}(\lambda)(1+\eta)^{2}\left(C_{\sigma,x}+C_{\sigma,m}\right)^{2}\E\bigg[\int_{0}^{T\wedge\hspace{.5pt}\tau}{v_{u}\sup_{t\in[0,u]}|e^{x}_{t}|^{q}\,du}\bigg] \nonumber\\[2pt]
&+ \frac{q}{1-\lambda}\hspace{1pt}\sigma_{max}C_{\sigma,t}T^{\frac{1}{2}}\hspace{1pt}\E\bigg[\int_{0}^{T\wedge\hspace{.5pt}\tau}{v_{u}\sup_{t\in[0,u]}|e^{x}_{t}|^{q-1}N^{-\frac{1}{2}}\,du}\bigg] \nonumber\\[2pt]
&+ \frac{q}{1-\lambda}\left(C_{\mu,x}+2C_{\mu,m}\right)\E\bigg[\int_{0}^{T\wedge\hspace{.5pt}\tau}{\sup_{t\in[0,u]}|e^{x}_{t}|^{q-1}\sup_{t\in[0,u]}\left|\Delta x_{t}\right|du}\bigg] \nonumber\\[2pt]
&+ \frac{q}{1-\lambda}\hspace{1pt}\sigma_{max}\left(C_{\sigma,x}+2C_{\sigma,m}\right)\E\bigg[\int_{0}^{T\wedge\hspace{.5pt}\tau}{v_{u}\sup_{t\in[0,u]}|e^{x}_{t}|^{q-1}\sup_{t\in[0,u]}\left|\Delta x_{t}\right|du}\bigg] \nonumber\\[2pt]
&+ \frac{q}{2(1-\lambda)}\hspace{1pt}\sigma_{max}^{2}\E\bigg[\int_{0}^{T\wedge\hspace{.5pt}\tau}{\sup_{t\in[0,u]}|e^{x}_{t}|^{q-1}\left|v_{u}-\bar{v}_{u}\right|du}\bigg] \nonumber\\[2pt]
&+ 2qc_{q}(\lambda)\eta^{-1}(1+\eta)^{2}C_{\sigma,t}^{2}T\E\bigg[\int_{0}^{T\wedge\hspace{.5pt}\tau}{v_{u}\sup_{t\in[0,u]}|e^{x}_{t}|^{q-2}N^{-1}\,du}\bigg] \nonumber\\[2pt]
&+ 2qc_{q}(\lambda)\eta^{-1}(1+\eta)^{2}\left(C_{\sigma,x}+2C_{\sigma,m}\right)^{2}\E\bigg[\int_{0}^{T\wedge\hspace{.5pt}\tau}{v_{u}\sup_{t\in[0,u]}|e^{x}_{t}|^{q-2}\sup_{t\in[0,u]}\left|\Delta x_{t}\right|^{2}du}\bigg] \nonumber\\[2pt]
&+ (1-\gamma^{*})qc_{q}(\lambda)\eta^{-1}(1+\eta)\sigma_{max}^{2}\E\bigg[\int_{0}^{T\wedge\hspace{.5pt}\tau}{\sup_{t\in[0,u]}|e^{x}_{t}|^{q-2}\left|\sqrt{v_{u}}-\sqrt{\bar{v}_{u}}\hspace{1pt}\right|^{2}du}\bigg] \nonumber\\[2pt]
&+ \gamma^{*}qc_{q}(\lambda)\eta^{-1}(1+\eta)\sigma_{max}^{2}\E\bigg[\int_{0}^{T\wedge\hspace{.5pt}\tau}{\sup_{t\in[0,u]}|e^{x}_{t}|^{q-2}\left|\sqrt{v_{u}}-\sqrt{\bar{v}_{u}}\hspace{1pt}\right|^{2}du}\bigg],
\end{align}
where $\gamma^{\scalebox{0.6}{\text{FTE}}}=0$ and $\gamma^{\scalebox{0.6}{\text{BEM}}}=1$. Since $\nu\geq1$, the process $v$ has almost surely strictly positive paths and we can bound the term before the last on the right-hand side of \eqref{eq4.1.48} from above by using
\begin{equation}\label{eq4.1.49}
\left|\sqrt{v_{u}}-\sqrt{\bar{v}_{u}}\hspace{1pt}\right|^{2} \leq v_{u}^{-1}\left|v_{u}-\bar{v}_{u}\right|^{2}.
\end{equation}
For any $a,b\geq0$ and $j\in\{1,2\}$, Young's inequality yields
\begin{equation}\label{eq4.1.50}
a^{q-j}b^{j} = \Big(\eta^{\frac{j(q-j)}{q}}a^{q-j}\Big)\Big(\eta^{-\frac{j(q-j)}{q}}b^{j}\Big) \leq \frac{q-j}{q}\hspace{1pt}\eta^{j}a^{q} + \frac{j}{q}\hspace{1pt}\eta^{j-q}b^{q}.
\end{equation}
Going back to \eqref{eq4.1.48}, using \eqref{eq4.1.49}, \eqref{eq4.1.50} (with $\eta^{\frac{3}{2}}$ instead of $\eta$ for the term before the last) and Fubini's theorem leads to an upper bound
\begin{align}\label{eq4.1.52}
\E\bigg[\sup_{t\in[0,T]}|e^{x}_{t\wedge\tau}|^{q}\bigg] &\leq \bigg\{\frac{\eta^{1-q}}{1-\lambda}\big[4(1-\lambda)c_{q}(\lambda)(1+\eta)^{2}C_{\sigma,t}^{2}T^{2}+\sigma_{max}C_{\sigma,t}T^{\frac{3}{2}}\big]\sup_{t\in[0,T]}\E\left[v_{t}\right]N^{-\frac{q}{2}} \nonumber\\[4pt]
&+ 2(1-\gamma^{*})\eta^{2-\frac{3}{2}q}(1+\eta)c_{q}(\lambda)\sigma_{max}^{2}T\sup_{t\in[0,T]}\E\big[v_{t}^{-1}|v_{t}-\bar{v}_{t}|^{q}\big] \nonumber\\[6pt]
&+ 2\gamma^{*}\eta^{1-q}(1+\eta)c_{q}(\lambda)\sigma_{max}^{2}T\sup_{t\in[0,T]}\E\big[|\sqrt{v_{t}}-\sqrt{\bar{v}_{t}}\hspace{1pt}|^{q}\big] \nonumber\\[0pt]
&+ \frac{\eta^{1-q}}{2(1-\lambda)}\hspace{1pt}\sigma_{max}^{2}T\sup_{t\in[0,T]}\E\big[|v_{t}-\bar{v}_{t}|^{q}\big] + \frac{\eta^{1-q}}{1-\lambda}\left(C_{\mu,x}+2C_{\mu,m}\right)T\E\bigg[\sup_{t\in[0,T]}\left|\Delta x_{t}\right|^{q}\bigg] \nonumber\\[0pt]
&+ 4\eta^{1-q}(1+\eta)^{2}c_{q}(\lambda)\left(C_{\sigma,x}+2C_{\sigma,m}\right)^{2}T\E\bigg[\sup_{t\in[0,T]}v_{t}\hspace{1pt}\sup_{t\in[0,T]}\left|\Delta x_{t}\right|^{q}\bigg] \nonumber\\[0pt]
&+ \frac{\eta^{1-q}}{1-\lambda}\hspace{1pt}\sigma_{max}\left(C_{\sigma,x}+2C_{\sigma,m}\right)T\E\bigg[\sup_{t\in[0,T]}v_{t}\hspace{1pt}\sup_{t\in[0,T]}\left|\Delta x_{t}\right|^{q}\bigg]\bigg\} \nonumber\\[1pt]
&+ \E\bigg[\int_{0}^{T\wedge\hspace{.5pt}\tau}\sup_{t\in[0,u]}|e^{x}_{t}|^{q}\hspace{3pt}\bigg\{v_{u}\bigg(\frac{q}{1-\lambda}\hspace{1pt}\sigma_{max}\left(C_{\sigma,x}+C_{\sigma,m}\right) + qc_{q}(\lambda)\left(C_{\sigma,x}+C_{\sigma,m}\right)^{2} \nonumber\\[2pt]
&+ \eta(2+\eta)qc_{q}(\lambda)\left(C_{\sigma,x}+C_{\sigma,m}\right)^{2} + \frac{\eta(q-1)}{1-\lambda}\hspace{1pt}\sigma_{max}\big[C_{\sigma,t}T^{\frac{1}{2}}+C_{\sigma,x}+2C_{\sigma,m}\big] \nonumber\\[5pt]
&+ 2\eta(1+\eta)^{2}(q-2)c_{q}(\lambda)\big[C_{\sigma,t}^{2}T+\left(C_{\sigma,x}+2C_{\sigma,m}\right)^{2}\big]\bigg) + \bigg(\frac{q}{1-\lambda}\left(C_{\mu,x}+C_{\mu,m}\right) \nonumber\\[3pt]
&+ \frac{\eta(q-1)}{1-\lambda}\left(C_{\mu,x}+2C_{\mu,m}\right) + \frac{\eta(q-1)}{2(1-\lambda)}\hspace{1pt}\sigma_{max}^{2} + \gamma^{*}\eta(1+\eta)(q-2)c_{q}(\lambda)\sigma_{max}^{2}\bigg) \nonumber\\[3pt]
&+ v_{u}^{-1}(1-\gamma^{*})\eta^{2}(1+\eta)(q-2)c_{q}(\lambda)\sigma_{max}^{2}\bigg\}\,du\bigg].
\end{align}
We choose $\lambda=\lambda_{q}$ that minimizes the function $f_{q}(\lambda):(0,1)\rightarrow\RR$ given by
\begin{equation}\label{eq4.1.53}
f_{q}(\lambda) = \frac{q}{1-\lambda}\hspace{1pt}\sigma_{max}\left(C_{\sigma,x}+C_{\sigma,m}\right) + q\bigg(\frac{q-1}{2(1-\lambda)}+\frac{q\beta_{0}^{2}}{4\lambda(1-\lambda)}\bigg)\left(C_{\sigma,x}+C_{\sigma,m}\right)^{2}.
\end{equation}
For brevity, define
\begin{equation}\label{eq4.1.54}
\Delta_{\sigma} = \frac{C_{\sigma,x}+C_{\sigma,m}}{\sigma_{max}}\hspace{1pt}.
\end{equation}
Looking at the first derivative of $f_{q}$, we find its unique positive root
\begin{equation}\label{eq4.1.55}
\lambda_{q} = \frac{-q\beta_{0}^{2}\Delta_{\sigma}+\beta_{0}\sqrt{q^{2}\beta_{0}^{2}\Delta_{\sigma}^{2}+2q\Delta_{\sigma}\left(2+(q-1)\Delta_{\sigma}\right)}}{4+2(q-1)\Delta_{\sigma}}\hspace{1pt},
\end{equation}
which clearly satisfies $\lambda_{q}\in(0,1)$. Some straightforward computations lead to
\begin{equation}\label{eq4.1.56}
f_{q}(\lambda_{q}) = \frac{q^{2}\beta_{0}^{2}}{4\lambda_{q}^{2}}\left(C_{\sigma,x}+C_{\sigma,m}\right)^{2} = \frac{1}{4}\hspace{1pt}\sigma_{max}^{2}\Big(\sqrt{q^{2}(2+\beta_{0}^{2})\Delta_{\sigma}^{2}+2q\Delta_{\sigma}\big(2-\Delta_{\sigma}\big)}+q\beta_{0}\Delta_{\sigma}\Big)^{2}.
\end{equation}
Next, we bound the first seven terms on the right-hand side of \eqref{eq4.1.52} from above. On the one hand, for the FTE discretization of the squared volatility process, note that we can find $r>1$ such that
\begin{equation}\label{eq4.1.57}
\frac{\nu}{\nu-1} < \frac{r}{r-1} < \frac{\nu-1}{q}\hspace{1pt}.
\end{equation}
Using H\"older's inequality, Lemma \ref{Lem3.1} and Proposition \ref{Prop3.5}, we deduce that there exists a constant $C$ such that, for all $N$ large enough,
\begin{equation}\label{eq4.1.58}
\sup_{t\in[0,T]}\E\big[v_{t}^{-1}|v_{t}-\bar{v}_{t}|^{q}\big] \leq \sup_{t\in[0,T]}\E\big[v_{t}^{-r}\big]^{\frac{1}{r}}\sup_{t\in[0,T]}\E\Big[|v_{t}-\bar{v}_{t}|^{\frac{rq}{r-1}}\Big]^{\frac{r-1}{r}} \leq CN^{-\frac{q}{2}}
\end{equation}
and
\begin{equation}\label{eq4.1.59}
\sup_{t\in[0,T]}\E\big[|v_{t}-\bar{v}_{t}|^{q}\big] \leq CN^{-\frac{q}{2}}.
\end{equation}
On the other hand, for the BEM discretization of the squared volatility process, we know from Proposition \ref{Prop3.8} that there exists a constant $C$ such that, for all $N$ large enough,
\begin{equation}\label{eq4.1.60}
\sup_{t\in[0,T]}\E\big[|\sqrt{v_{t}}-\sqrt{\bar{v}_{t}}\hspace{1pt}|^{q}\big] \leq CN^{-\frac{q}{2}}
\end{equation}
and
\begin{equation}\label{eq4.1.61}
\sup_{t\in[0,T]}\E\big[|v_{t}-\bar{v}_{t}|^{q}\big] \leq CN^{-\frac{q}{2}}.
\end{equation}
Furthermore, using the Cauchy--Schwarz inequality and Lemma \ref{Lem3.1} and applying Theorem 1 in \cite{Fischer:2009} to the log-spot process from \eqref{eq4.1.32}, we conclude that there exists a constant $C$ such that, for all $N\geq1$,
\begin{equation}\label{eq4.1.62}
\E\bigg[\sup_{t\in[0,T]}v_{t}\hspace{1pt}\sup_{t\in[0,T]}\left|\Delta x_{t}\right|^{q}\bigg] \leq \E\bigg[\sup_{t\in[0,T]}v_{t}^{2}\bigg]^{\frac{1}{2}}\E\bigg[\sup_{t\in[0,T]}\left|\Delta x_{t}\right|^{2q}\bigg]^{\frac{1}{2}} \leq C\bigg(\frac{N}{\log(2N)}\bigg)^{-\frac{q}{2}}
\end{equation}
and
\begin{equation}\label{eq4.1.63}
\E\bigg[\sup_{t\in[0,T]}\left|\Delta x_{t}\right|^{q}\bigg] \leq C\bigg(\frac{N}{\log(2N)}\bigg)^{-\frac{q}{2}}.
\end{equation}
For convenience, define the strictly increasing stochastic process
\begin{equation}\label{eq4.1.64}
g_{q,\eta}(t) = \int_{0}^{t}{\Big(a_{q,\eta} + \big(f_{q}(\lambda_{q})+\eta b_{q,\eta}\big)v_{u} + \eta^{2}c_{q,\eta}v_{u}^{-1}\Big)du},
\end{equation}
where
\begin{align}
a_{q,\eta} &= \frac{q}{1-\lambda_{q}}\left(C_{\mu,x}+C_{\mu,m}\right) + \frac{\eta(q-1)}{1-\lambda_{q}}\left(C_{\mu,x}+2C_{\mu,m}\right) + \frac{\eta(q-1)}{2(1-\lambda_{q})}\hspace{1pt}\sigma_{max}^{2} \nonumber\\[2pt]
&+ \gamma^{*}\eta(1+\eta)(q-2)c_{q}(\lambda_{q})\sigma_{max}^{2}, \label{eq4.1.64.1}\\[4pt]
b_{q,\eta} &= (2+\eta)qc_{q}(\lambda_{q})\left(C_{\sigma,x}+C_{\sigma,m}\right)^{2} + \frac{q-1}{1-\lambda_{q}}\hspace{1pt}\sigma_{max}\big[C_{\sigma,t}T^{\frac{1}{2}}+C_{\sigma,x}+2C_{\sigma,m}\big] \nonumber\\[1pt]
&+ 2(1+\eta)^{2}(q-2)c_{q}(\lambda_{q})\big[C_{\sigma,t}^{2}T+\left(C_{\sigma,x}+2C_{\sigma,m}\right)^{2}\big], \label{eq4.1.64.2}\\[7pt]
c_{q,\eta} &= (1-\gamma^{*})(1+\eta)(q-2)c_{q}(\lambda_{q})\sigma_{max}^{2}. \label{eq4.1.64.3}
\end{align}
Substituting back into \eqref{eq4.1.52} with \eqref{eq4.1.58} -- \eqref{eq4.1.64}, we conclude that there exists a constant $C_{q,\eta}>0$ such that, for all $N$ large enough,
\begin{equation}\label{eq4.1.65}
\E\bigg[\sup_{t\in[0,T]}|e^{x}_{t\wedge\tau}|^{q}\bigg] \leq C_{q,\eta}\bigg(\frac{N}{\log(2N)}\bigg)^{-\frac{q}{2}} + \E\bigg[\int_{0}^{T\wedge\hspace{.5pt}\tau}\sup_{t\in[0,T]}|e^{x}_{t\wedge u}|^{q}\,dg_{q,\eta}(u)\bigg].
\end{equation}
Next, consider the set of stopping times $\big\{\tau_{q,\eta}^{\kappa}\hspace{1pt},\,\kappa\geq0\big\}$ defined by
\begin{equation}\label{eq4.1.66}
\tau_{q,\eta}^{\kappa} = \inf\!\big\{t\geq0 \;|\; g_{q,\eta}(t)\geq\kappa\big\},\hspace{.75em} \tau_{q,\eta}^{0}=0,
\end{equation}
and note that they are finite, since
\begin{equation}\label{eq4.1.67}
\tau_{q,\eta}^{\kappa} \leq \frac{2\kappa(1-\lambda_{q})}{\eta(q-1)\sigma_{max}^{2}}\hspace{1pt},
\end{equation}
and strictly increasing, and that $g_{q,\eta}(\tau_{q,\eta}^{\kappa})=\kappa$ by continuity. Fix $\kappa>0$ and set $\tau=\tau_{q,\eta}^{\kappa}$ in \eqref{eq4.1.65}. Using an idea from \cite{Berkaoui:2008}, define a stochastic time-change $s=g_{q,\eta}(u)$ such that $u=\tau_{q,\eta}^{s}$ and note that
\begin{equation}\label{eq4.1.68}
g_{q,\eta}(T\wedge\tau_{q,\eta}^{\kappa}) = g_{q,\eta}(\tau_{q,\eta}^{\kappa})\wedge g_{q,\eta}(T) = \kappa\wedge g_{q,\eta}(T).
\end{equation}
By Lebesgue's change-of-time formula (see, e.g., Theorem A4.3 in \cite{Sharpe:1988}), we get
\begin{align}\label{eq4.1.69}
\E\bigg[\int_{0}^{T\wedge\hspace{.5pt}\tau_{q,\eta}^{\kappa}}\sup_{t\in[0,T]}|e^{x}_{t\wedge u}|^{q}\,dg_{q,\eta}(u)\bigg] &= \E\bigg[\int_{0}^{\kappa\wedge g_{q,\eta}(T)}\sup_{t\in[0,T]}\big|e^{x}_{t\wedge \tau_{q,\eta}^{s}}\big|^{q}\,ds\bigg] \nonumber\\[2pt]
&\leq \int_{0}^{\kappa}{\E\bigg[\sup_{t\in[0,T]}\big|e^{x}_{t\wedge\tau_{q,\eta}^{s}}\big|^{q}\bigg]ds}.
\end{align}
Substituting back into \eqref{eq4.1.65} with this upper bound yields
\begin{equation}\label{eq4.1.70}
\E\bigg[\sup_{t\in[0,T]}\big|e^{x}_{t\wedge\tau_{q,\eta}^{\kappa}}\big|^{q}\bigg] \leq C_{q,\eta}\bigg(\frac{N}{\log(2N)}\bigg)^{-\frac{q}{2}} + \int_{0}^{\kappa}{\E\bigg[\sup_{t\in[0,T]}\big|e^{x}_{t\wedge\tau_{q,\eta}^{s}}\big|^{q}\bigg]ds},
\end{equation}
and applying Gronwall's inequality leads to
\begin{equation}\label{eq4.1.71}
\E\bigg[\sup_{t\in[0,T]}\big|e^{x}_{t\wedge\tau_{q,\eta}^{\kappa}}\big|^{q}\bigg] \leq C_{q,\eta}e^{\kappa}\bigg(\frac{N}{\log(2N)}\bigg)^{-\frac{q}{2}},
\end{equation}
for all $\kappa>0$. Proceeding in a similar way as in the argument of \eqref{eq4.1.65} and setting $\tau=T$, we get
\begin{equation}\label{eq4.1.72}
\E\bigg[\sup_{t\in[0,T]}|e^{x}_{t}|^{p}\bigg] \leq C_{p,\eta}\bigg(\frac{N}{\log(2N)}\bigg)^{-\frac{p}{2}} + \E\bigg[\int_{0}^{T}\sup_{t\in[0,T]}|e^{x}_{t\wedge u}|^{p}\,dg_{p,\eta}(u)\bigg].
\end{equation}
However, note from \eqref{eq4.1.55} that both
\begin{equation}\label{eq4.1.73}
\frac{\sqrt{q}}{\lambda_{q}} = \sqrt{q} + \sqrt{q(1+2\beta_{0}^{-2})-2\beta_{0}^{-2}(1-2\Delta_{\sigma}^{-1})}
\end{equation}
and
\begin{equation}\label{eq4.1.74}
\frac{q}{1-\lambda_{q}} = q + \frac{q\sqrt{q}}{\sqrt{q(1+2\beta_{0}^{-2})-2\beta_{0}^{-2}(1-2\Delta_{\sigma}^{-1})}}
\end{equation}
are increasing in $q$, and hence that
\begin{equation}\label{eq4.1.75}
\frac{d}{dt}\hspace{1pt}g_{p,\eta}(t) \leq \frac{d}{dt}\hspace{1pt}g_{q,\eta}(t),\hspace{1em} \forall\hspace{.5pt}t\in[0,T].
\end{equation}
Using the same time-change from before, Fubini's theorem and H\"older's inequality, we deduce that
\begin{align}\label{eq4.1.76}
\E\bigg[\int_{0}^{T}\sup_{t\in[0,T]}|e^{x}_{t\wedge u}|^{p}\,dg_{p,\eta}(u)\bigg] &\leq \E\bigg[\int_{0}^{T}\sup_{t\in[0,T]}|e^{x}_{t\wedge u}|^{p}\,dg_{q,\eta}(u)\bigg] \nonumber\\[2pt]
&\leq \int_{0}^{\infty}{\E\bigg[\sup_{t\in[0,T]}\big|e^{x}_{t\wedge\tau_{q,\eta}^{s}}\big|^{p}\Ind_{s\leq g_{q,\eta}(T)}\bigg]ds} \nonumber\\[2pt]
&\leq \int_{0}^{\infty}{\E\bigg[\sup_{t\in[0,T]}\big|e^{x}_{t\wedge\tau_{q,\eta}^{s}}\big|^{q}\bigg]^{\frac{p}{q}}\Prob\Big(s\leq g_{q,\eta}(T)\Big)^{1-\frac{p}{q}}\,ds}.
\end{align}
Combining \eqref{eq4.1.71}, \eqref{eq4.1.72} and \eqref{eq4.1.76} yields
\begin{equation}\label{eq4.1.77}
\E\bigg[\sup_{t\in[0,T]}|e^{x}_{t}|^{p}\bigg] \leq \bigg(\frac{N}{\log(2N)}\bigg)^{-\frac{p}{2}}\bigg\{C_{p,\eta} + C_{q,\eta}^{\frac{p}{q}}\int_{0}^{\infty}{\exp\bigg\{\frac{sp}{q}\bigg\}\Prob\Big(s\leq g_{q,\eta}(T)\Big)^{1-\frac{p}{q}}\,ds}\bigg\}.
\end{equation}
All that is left to do is bound the probability on the right-hand side from above. For brevity, define
\begin{equation}\label{eq4.1.78}
w_{p,q} = \frac{p}{q-p}
\end{equation}
and let $w>w_{p,q}$. Markov's inequality yields
\begin{align}\label{eq4.1.79}
\Prob\Big(s\leq g_{q,\eta}(T)\Big) \leq \exp\left\{-w s\right\}\E\Big[\exp\big\{w g_{q,\eta}(T)\big\}\Big].
\end{align}
From \eqref{eq4.1.64} and H\"older's inequality, we get
\begin{align}\label{eq4.1.80}
\E\Big[\exp\big\{w g_{q,\eta}(T)\big\}\Big] &\leq \exp\left\{w a_{q,\eta}T\right\}\E\bigg[\exp\bigg\{w(1+\eta)\big(f_{q}(\lambda_{q})+\eta b_{q,\eta}\big)\int_{0}^{T}{v_{t}\,dt}\bigg\}\bigg]^{\frac{1}{1+\eta}} \nonumber\\[2pt]
&\times \E\bigg[\exp\bigg\{w\eta(1+\eta)c_{q,\eta}\int_{0}^{T}{v_{t}^{-1}\,dt}\bigg\}\bigg]^{\frac{\eta}{1+\eta}}.
\end{align}
However, note that
\begin{equation}\label{eq4.1.81}
w(1+\eta)\big(f_{q}(\lambda_{q})+\eta b_{q,\eta}\big) = w_{p,q}f_{q}(\lambda_{q}) + (w-w_{p,q})f_{q}(\lambda_{q}) + \eta w\big(f_{q}(\lambda_{q})+(1+\eta)b_{q,\eta}\big).
\end{equation}
Using exponential integrability properties of the CIR process from Lemma \ref{Lem3.2} together with \eqref{eq4.1.37.4} and a continuity argument, since $\nu>1$, we conclude that there exist $\eta$ sufficiently small and $w$ sufficiently close to $w_{p,q}$ such that the two expectations on the right-hand side of \eqref{eq4.1.80}, and hence the one on the left-hand side, are finite. Finally, substituting back into \eqref{eq4.1.77} with \eqref{eq4.1.79}, since
\begin{align}\label{eq4.1.82}
\int_{0}^{\infty}{\exp\bigg\{\frac{sp}{q}-w s\bigg(1-\frac{p}{q}\bigg)\bigg\}\,ds} &= \int_{0}^{\infty}{\exp\bigg\{-\frac{(w-w_{p,q})(q-p)s}{q}\bigg\}\,ds} \nonumber\\[2pt]
&= \frac{q}{(w-w_{p,q})(q-p)}\hspace{1pt},
\end{align}
we deduce that
\begin{equation}\label{eq4.1.83}
\E\bigg[\sup_{t\in[0,T]}|e^{x}_{t}|^{p}\bigg] \leq \bigg(\frac{N}{\log(2N)}\bigg)^{-\frac{p}{2}}\bigg\{C_{p,\eta} + \frac{q}{(w-w_{p,q})(q-p)}\hspace{1pt}C_{q,\eta}^{\frac{p}{q}}\E\Big[\exp\big\{w g_{q,\eta}(T)\big\}\Big]^{1-\frac{p}{q}}\bigg\},
\end{equation}
and the conclusion follows.
\end{proof}

\subsection{Moment bounds}\label{subsec:moment}

Many models with stochastic volatility dynamics have the undesirable feature that moments of order higher than 1 can explode in finite time \cite{Andersen:2007}. The finiteness of moments of order higher than 1 of the exact and numerical solutions of a SDE is an important ingredient in the convergence analysis \cite{Higham:2002}. The finiteness of higher moments was established in \cite{Cozma:2016a} for explicit Euler approximations to stochastic-local volatility models. We extend this result to stochastic path-dependent volatility models and include the proof here for completeness.

\begin{lemma}\label{Lem4.6}
Suppose that Assumption \ref{Asm2.1} holds and let $p>1$.
\begin{enumerate}[(1)]
\item{%
If $T<T^{\scalebox{0.6}{\emph{\text{CIR}}}}_{S}(p)$, then the spot process has a bounded $p$th moment, i.e.,
\begin{equation}\label{eq4.1.84}
\sup_{t\in[0,T]}\E\big[S_{t}^{p}\big] < \infty,
\end{equation}
where
\begin{equation}\label{eq4.1.85}
T^{\scalebox{0.6}{\emph{\text{CIR}}}}_{S}(p) = \frac{2}{\sqrt{\smash[b]{(\phi(p)-k^{2})^{+}}}}\Bigg[\frac{\pi}{2} + \arctan\Bigg(\frac{k}{\sqrt{\smash[b]{(\phi(p)-k^{2})^{+}}}}\Bigg)\Bigg],
\end{equation}
with $\phi$ given in \eqref{eq2.3.8}.
}
\item{%
If $\nu\geq1$ and $T<T^{*}_{S}(p)$, with $T^{\scalebox{0.6}{\emph{\text{FTE}}}}_{S}$ and $T^{\scalebox{0.6}{\emph{\text{BEM}}}}_{S}$ given in \eqref{eq2.3.6} and \eqref{eq2.3.7}, respectively, then the approximated spot process has a bounded $p$th moment, i.e., there exists $N_{0}\in\mathbb{N}$ such that
\begin{equation}\label{eq4.1.87}
\sup_{N>N_{0}}\,\sup_{t\in[0,T]}\E\big[\bar{S}_{t}^{p}\big] < \infty.
\end{equation}
}
\end{enumerate}
\end{lemma}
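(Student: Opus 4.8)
My plan is to derive both bounds from the explicit log-linear form of the spot SDE, reducing each to a single exponential-moment estimate for the integrated (discretised) squared volatility, which is precisely what the results of Section~\ref{sec:vol} supply.

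For part~(1), from \eqref{eq2.1.1} one has $S_t = S_0\exp\big(\int_0^t\mu\,du - \tfrac12\langle M\rangle_t + M_t\big)$, where $M_t = \int_0^t\sigma(u,S_u,M_u)\sqrt{v_u}\,dW^s_u$ is a continuous local martingale with $\langle M\rangle_t \le \sigma_{max}^2\int_0^t v_u\,du$ by Assumption~\ref{Asm2.1}; since $|\mu|\le\mu_{max}$ this gives the almost sure bound $S_t^p \le S_0^p e^{p\mu_{max}T}\exp\big(pM_t - \tfrac p2\langle M\rangle_t\big)$. For $p>1$ and any parameter $\mu>p$ I would use the pointwise identity $\exp\big(pM_t-\tfrac p2\langle M\rangle_t\big) = \mathcal E(\mu M)_t^{p/\mu}\exp\big(\tfrac{p(\mu-1)}2\langle M\rangle_t\big)$, with $\mathcal E(\mu M)_t = \exp\big(\mu M_t - \tfrac{\mu^2}2\langle M\rangle_t\big)$, and then Hölder's inequality with exponents $\mu/p$ and $\mu/(\mu-p)$ to obtain $\E[\exp(pM_t-\tfrac p2\langle M\rangle_t)] \le \E[\mathcal E(\mu M)_t]^{p/\mu}\,\E[\exp(\tfrac{p\mu(\mu-1)}{2(\mu-p)}\langle M\rangle_t)]^{(\mu-p)/\mu}$. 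The first factor is at most $1$ because $\mathcal E(\mu M)$ is a nonnegative local martingale, hence a supermartingale started at $1$ (so, pleasantly, no Novikov-type argument is needed). Bounding $\langle M\rangle_t \le \sigma_{max}^2\int_0^T v_u\,du$ in the second factor, the whole estimate reduces to $\E\big[\exp\big(\tfrac{p\mu(\mu-1)\sigma_{max}^2}{2(\mu-p)}\int_0^T v_u\,du\big)\big]<\infty$; minimising the prefactor $\tfrac{p\mu(\mu-1)}{2(\mu-p)}$ over $\mu>p$ is a one-variable calculus exercise whose minimiser is $\mu = p+\sqrt{p(p-1)}$ and whose minimal value is $\tfrac12\big(p+\sqrt{p(p-1)}\big)^2$, so the relevant constant is $\lambda_{\ast} = \tfrac12\sigma_{max}^2\big(p+\sqrt{p(p-1)}\big)^2$. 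Since $2\lambda_{\ast}\xi^2 = \phi(p)$ with $\phi$ as in \eqref{eq2.3.8}, Lemma~\ref{Lem3.2}(1) yields finiteness precisely for $T<T^{\scalebox{0.6}{\text{CIR}}}_{S}(p)$ as defined in \eqref{eq4.1.85}.

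For part~(2), by \eqref{eq2.2.13} the continuous-time approximation admits the same form, $\bar{S}_t = S_0\exp\big(\int_0^t\bar{\mu}\,du - \tfrac12\langle\bar{M}\rangle_t + \bar{M}_t\big)$ with $\bar{M}_t = \int_0^t\bar{\sigma}(u,\bar{S}_u,\bar{M}_u)\sqrt{\bar{v}_u}\,dW^s_u$ and $\langle\bar{M}\rangle_t \le \sigma_{max}^2\int_0^t\bar{v}_u\,du$, so I would repeat the computation verbatim to get, for the same optimal $\mu$ and $\lambda_{\ast}$, $\sup_{t\in[0,T]}\E[\bar{S}_t^p] \le S_0^p e^{p\mu_{max}T}\,\E\big[\exp\big(\lambda_{\ast}\int_0^T\bar{v}_u\,du\big)\big]^{(\mu-p)/\mu}$, again with $2\lambda_{\ast}\xi^2 = \phi(p)$. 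For the FTE discretisation, Lemma~\ref{Lem3.4} with $\lambda = \lambda_{\ast}$ gives $\sup_{N>N_0}\E[\exp(\lambda_{\ast}\int_0^T\bar{v}_u\,du)]<\infty$ exactly when $T\le 4k/\phi(p)$ if $\phi(p)<4k^2$ and when $T\le 1/(\sqrt{\phi(p)}-k)$ otherwise, i.e. for $T<T^{\scalebox{0.6}{\text{FTE}}}_{S}(p)$ as in \eqref{eq2.3.6}; for the BEM discretisation, Lemma~\ref{Lem3.7} gives the analogue for $T<1/\sqrt{\phi(p)} = T^{\scalebox{0.6}{\text{BEM}}}_{S}(p)$ as in \eqref{eq2.3.7}. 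The uniformity in $N$ in \eqref{eq4.1.87} is then inherited entirely from the uniform-in-$N$ exponential moment bounds of Lemmas~\ref{Lem3.4} and~\ref{Lem3.7}.

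The step I expect to be the crux is not conceptually deep but must be done carefully: the choice of the free parameter $\mu$ that converts the Hölder remainder into the sharp explosion constant $\phi(p) = \xi^2\sigma_{max}^2\big(p+\sqrt{p(p-1)}\big)^2$, together with the bookkeeping that translates the thresholds of Lemmas~\ref{Lem3.2}, \ref{Lem3.4} and~\ref{Lem3.7} into exactly the critical times \eqref{eq4.1.85}, \eqref{eq2.3.6} and \eqref{eq2.3.7}. One should also keep track of the strictness of $T<T^{*}_{S}(p)$, so that $\lambda_{\ast}$ stays strictly below the relevant explosion level when those exponential-moment lemmas are invoked.
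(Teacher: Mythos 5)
Your proposal is correct and follows essentially the same route as the paper's proof in Appendix~\ref{sec:aux3}: write the (approximated) spot as an exponential, peel off a stochastic exponential via H\"older with an optimised exponent (your $\mu^{*}=p+\sqrt{p(p-1)}$ is exactly the paper's choice $pq_{1}$ with $q_{1}=1+\sqrt{(p-1)/p}$, producing the same constant $\phi(p)/(2\xi^{2})$), and reduce to the exponential moment bounds of Lemmas~\ref{Lem3.2}, \ref{Lem3.4} and~\ref{Lem3.7}. Your observation that the supermartingale property of the stochastic exponential suffices is a mild simplification of the paper's Novikov check, which in any case imposes no extra condition since $p^{2}q_{1}^{2}=pq_{2}(pq_{1}-1)$ there.
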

\begin{proof}
See Appendix \ref{sec:aux3}.
\end{proof}

\subsection{Proof of Theorem \ref{Thm2.7}}\label{subsec:proof}

With these results at our disposal, we are now ready to prove the main theorem.

\begin{proof}
By a continuity argument, we can find $2\vee p<q<p^{*}(\nu)$ such that
\begin{equation}\label{eq4.1.94}
T < T_{x}^{*}(q) \wedge\hspace{.5pt} T_{S}^{*}\big(pq(q-p)^{-1}\big).
\end{equation}
Fix $r>1$ and recall the definition of $\phi$ from \eqref{eq2.3.8}. On the one hand, if $\phi(r)\geq4k^{2}$, then
\begin{equation}\label{eq4.1.95}
\frac{2}{\sqrt{\phi(r)-k^{2}}}\Bigg[\frac{\pi}{2} + \arctan\Bigg(\frac{k}{\sqrt{\phi(r)-k^{2}}}\Bigg)\Bigg] \geq	
\frac{1}{\sqrt{\phi(r)-k^{2}}}\hspace{1pt}\sqrt{\frac{\sqrt{\phi(r)}+k}{\sqrt{\phi(r)}-k}} = \frac{1}{\sqrt{\phi(r)}-k} \geq \frac{1}{\sqrt{\phi(r)}}\hspace{1pt}.
\end{equation}
On the other hand, if $k^{2}<\phi(r)<4k^{2}$, then
\begin{equation}\label{eq4.1.96}
\frac{2}{\sqrt{\phi(r)-k^{2}}}\Bigg[\frac{\pi}{2} + \arctan\Bigg(\frac{k}{\sqrt{\phi(r)-k^{2}}}\Bigg)\Bigg] \geq
\frac{4}{\sqrt{\phi(r)-k^{2}}}\hspace{1pt}\sqrt{\frac{k^{2}}{\phi(r)}\bigg(1-\hspace{1pt}\frac{k^{2}}{\phi(r)}\bigg)} = \frac{4k}{\phi(r)} \geq \frac{1}{\sqrt{\phi(r)}}\hspace{1pt}.
\end{equation}
Therefore, we have that $T^{\scalebox{0.6}{\text{CIR}}}_{S}(r) \geq T^{\scalebox{0.6}{\text{FTE}}}_{S}(r) \geq T^{\scalebox{0.6}{\text{BEM}}}_{S} (r)$ for all $r>1$. From the Mean-Value Theorem and H\"older's inequality, we deduce that
\begin{align}\label{eq4.1.97}
\sup_{t\in[0,T]}\E\Big[\big|S_{t}-\bar{S}_{t}\big|^{p}\Big]^{\frac{1}{p}} &\leq \sup_{t\in[0,T]}\E\Big[\max\!\left\{S_{t}^{p},\bar{S}_{t}^{p}\right\}|x_{t}-\bar{x}_{t}|^{p}\Big]^{\frac{1}{p}} \nonumber\\[2pt]
&\leq \bigg\{\sup_{t\in[0,T]}\E\Big[S_{t}^{\frac{pq}{q-p}}\Big]^{\frac{q-p}{pq}} + \sup_{N>N_{0}}\,\sup_{t\in[0,T]}\E\Big[\bar{S}_{t}^{\frac{pq}{q-p}}\Big]^{\frac{q-p}{pq}}\bigg\} \nonumber\\[0pt]
&\times \sup_{t\in[0,T]}\E\Big[|x_{t}-\bar{x}_{t}|^{q}\Big]^{\frac{1}{q}},
\end{align}
for some $N_{0}\in\mathbb{N}$ suitably chosen. The conclusion follows from Proposition \ref{Prop4.5} and Lemma \ref{Lem4.6}.
\end{proof}

\section{Numerical tests}\label{sec:numerics}

In this section, we assume the spot process dynamics from \eqref{eq2.1.1} with $\mu=0$ and perform a numerical analysis of the strong and weak convergence of the approximation process with the LE--FTE scheme, i.e., when the LE and the FTE schemes are employed in the discretization of the spot process and its squared volatility, respectively. Throughout this section, we fix the time horizon $T=1$ and assign the following values to the underlying model parameters as a base case, and vary a selection individually:
\begin{equation}\label{eq5.1.1}
S_{0}=1,\quad v_{0}=0.025,\quad k=8,\quad \theta=0.02,\quad \xi=0.2,\quad \rho=-0.1.
\end{equation}
These values are consistent with empirical observations in equity and FX markets and are close to the calibrated values in Table 2 in \cite{Cozma:2017a} and Table 1 in \cite{Cozma:2017b}.

\subsection{Strong convergence}\label{subsec:strong}

We define on $\mathcal{D}=\big\{(t,x,y)\in[0,T]\!\times\!\mathbb{R}^{2}_{+} \,|\, S_{min}\leq x\leq S_{0}\vee x\leq y\leq S_{max}\big\}$ a parametric leverage function $\sigma$ and extrapolate it flat outside these bounds, where
\begin{equation}\label{eq5.1.2}
S_{min} = S_{0}e^{-\frac{6}{2}\sqrt{v_{0}\hspace{.5pt}T}} \hspace{1em}\text{ and }\hspace{1em} S_{max} = S_{0}e^{\frac{6}{2}\sqrt{v_{0}\hspace{.5pt}T}}.
\end{equation}
In particular, we use a stochastic volatility inspired (SVI) parameterization (see \cite{Gatheral:2006,Hambly:2016}) in both spot and running maximum, i.e.,
\begin{align}\label{eq5.1.3}
\sigma(t,x,y) &= \frac{1}{2}\Big[\sigma_{1}\big(t+1,\log(S_{min}\vee x\wedge S_{max})-\log(S_{0})\big) \nonumber\\[3pt]
&+\sigma_{2}\big(t+1,\log(S_{min}\vee y\wedge S_{max})-\log(S_{0})\big)\Big],
\end{align}
where, for all $i\in\{1,2\}$,
\begin{equation}\label{eq5.1.4}
\sigma_{i}(u,z) = \frac{1}{\sqrt{u}}\hspace{1pt}\sqrt{a_{i}+b_{i}\Big(c_{i}(z-d_{i})+\sqrt{(z-d_{i})^{2}+e_{i}^{2}}\hspace{1pt}\Big)}\hspace{1pt}.
\end{equation}
We assign the following values to the parameters:
\begin{equation}\label{eq5.1.5}
a_{1,2}=1,\quad b_{1,2}=2,\quad c_{1,2}=0,\quad d_{1,2}=0,\quad e_{1,2}=0.25.
\end{equation}
In Figure \ref{fig:1}, we plot the leverage function $\sigma$ at three different time slices: $t=0$, $t=0.5$ and $t=1$. Note that this leverage function is constant outside a bounded interval by definition. Furthermore, one can easily show that $\sigma$ is 1/2-H\"older continuous in time and Lipschitz continuous in spot and running maximum, and hence that Assumptions \ref{Asm2.4} and \ref{Asm2.5} are satisfied. From Proposition \ref{Prop2.6}, we conclude that Assumptions \ref{Asm2.1} and \ref{Asm2.2} are also satisfied.
\begin{figure}[htb]
\begin{center}
\captionsetup{justification=centering}
\includegraphics[width=1.0\linewidth,height=3.2in]{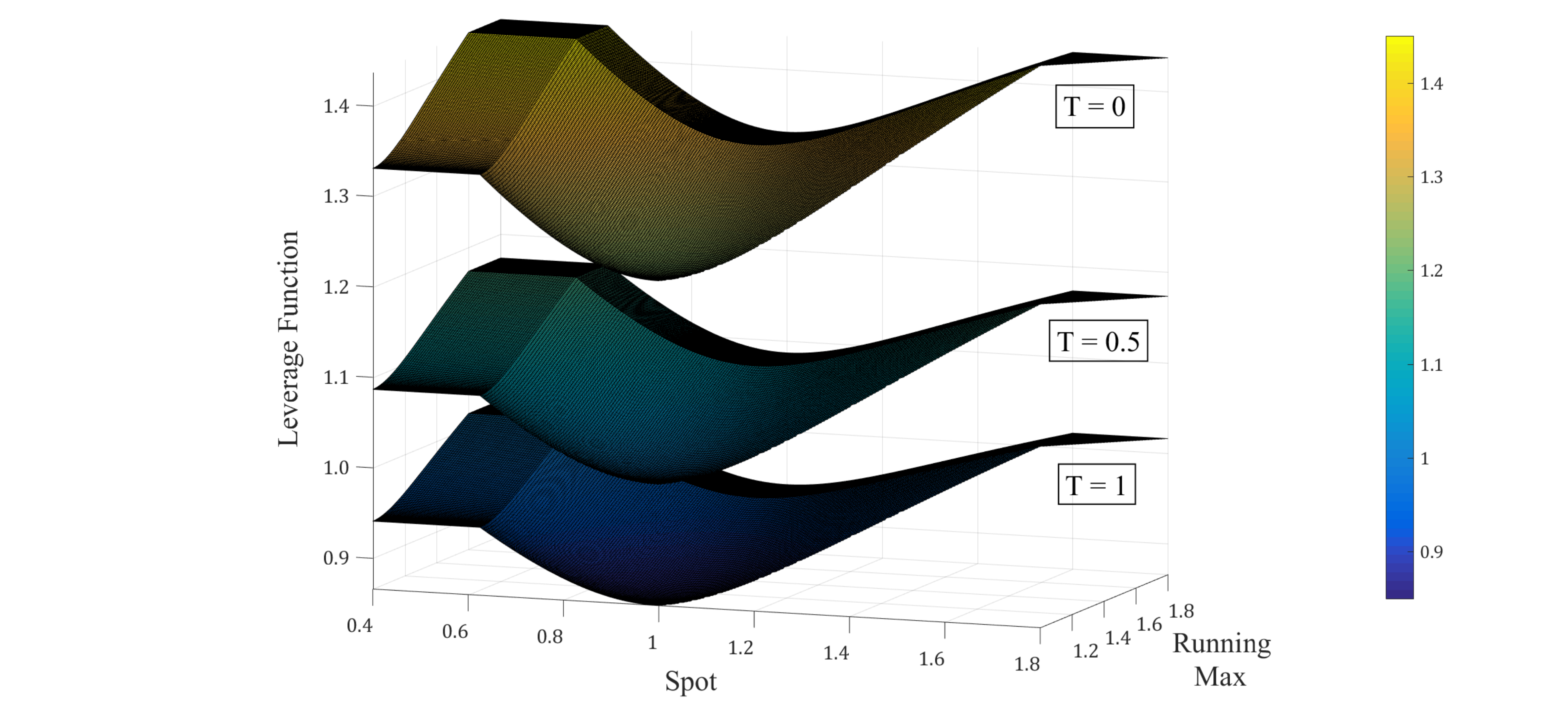}
\end{center}
\caption{The leverage function $\sigma$ with the SVI parameterization plotted against the spot and the running maximum at three different time slices.}
\label{fig:1}
\end{figure}

In order to establish the strong convergence in $L^{p}$ with order 1/2 (up to a logarithmic factor) of the approximation process, we first need to compute the critical time $T^{\scalebox{0.6}{\text{FTE}}}(p)$ from \eqref{eq2.3.9}. Recall the definitions of $\sigma_{max}$, $C_{\sigma,x}$ and $C_{\sigma,m}$ from \eqref{eq4.1.23} -- \eqref{eq4.1.29}. A straightforward technical analysis of the leverage function yields $\sigma_{max}=1.437$, $C_{\sigma,x}=0.307$ and $C_{\sigma,m}=0.307$. Therefore, we obtain $T^{\scalebox{0.6}{\text{FTE}}}(1)=132.58$ and $T^{\scalebox{0.6}{\text{FTE}}}(2)=12.57$, both greater than $T=1$, and hence all conditions in the statement of Theorem \ref{Thm2.7} are satisfied. For illustration, we plot in Figure \ref{fig:2} the critical time against the power $p$ (in the $L^{p}$ norm) and the mean reversion rate $k$ of the squared volatility process. First, we infer from Figures \ref{fig:2a} and \ref{fig:2b} that $\lim_{p\to p^{\scalebox{0.6}{\text{FTE}}}}T^{\scalebox{0.6}{\text{FTE}}}(p)=0$, a fact which can easily be verified from the definition of the critical time. Second, we infer from Figures \ref{fig:2c} and \ref{fig:2d} that $\lim_{k\to \infty}T^{\scalebox{0.6}{\text{FTE}}}(p)=\infty$. The limiting case corresponds to a purely path-dependent volatility, where the strong convergence result holds for all $T>0$.
\begin{figure}[htb]
\centering
\begin{subfigure}{.5\textwidth}
  \centering
	\captionsetup{justification=centering}
  \includegraphics[width=.98\linewidth,height=1.72in]{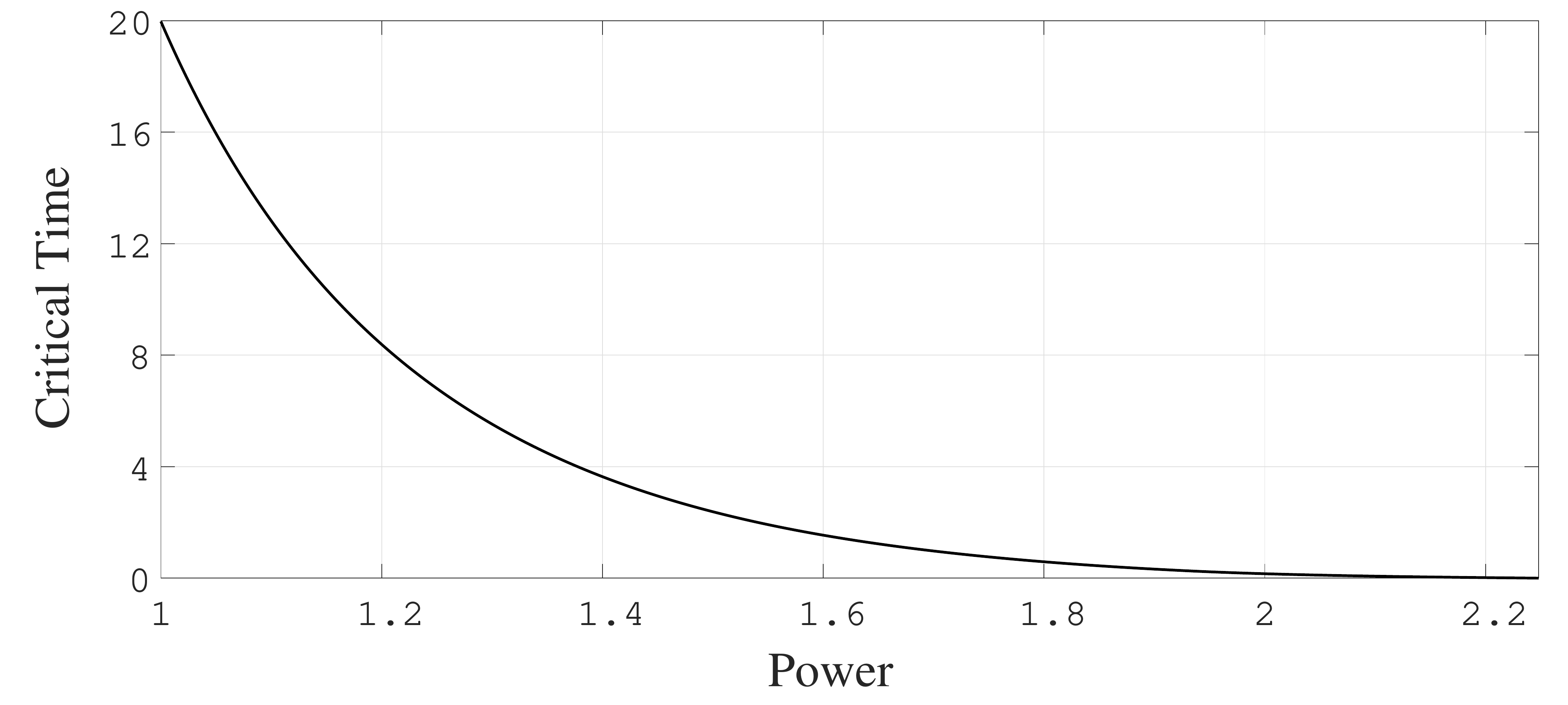}
  \caption{$k=4$}
  \label{fig:2a}
\end{subfigure}%
\begin{subfigure}{.5\textwidth}
  \centering
	\captionsetup{justification=centering}
  \includegraphics[width=.98\linewidth,height=1.72in]{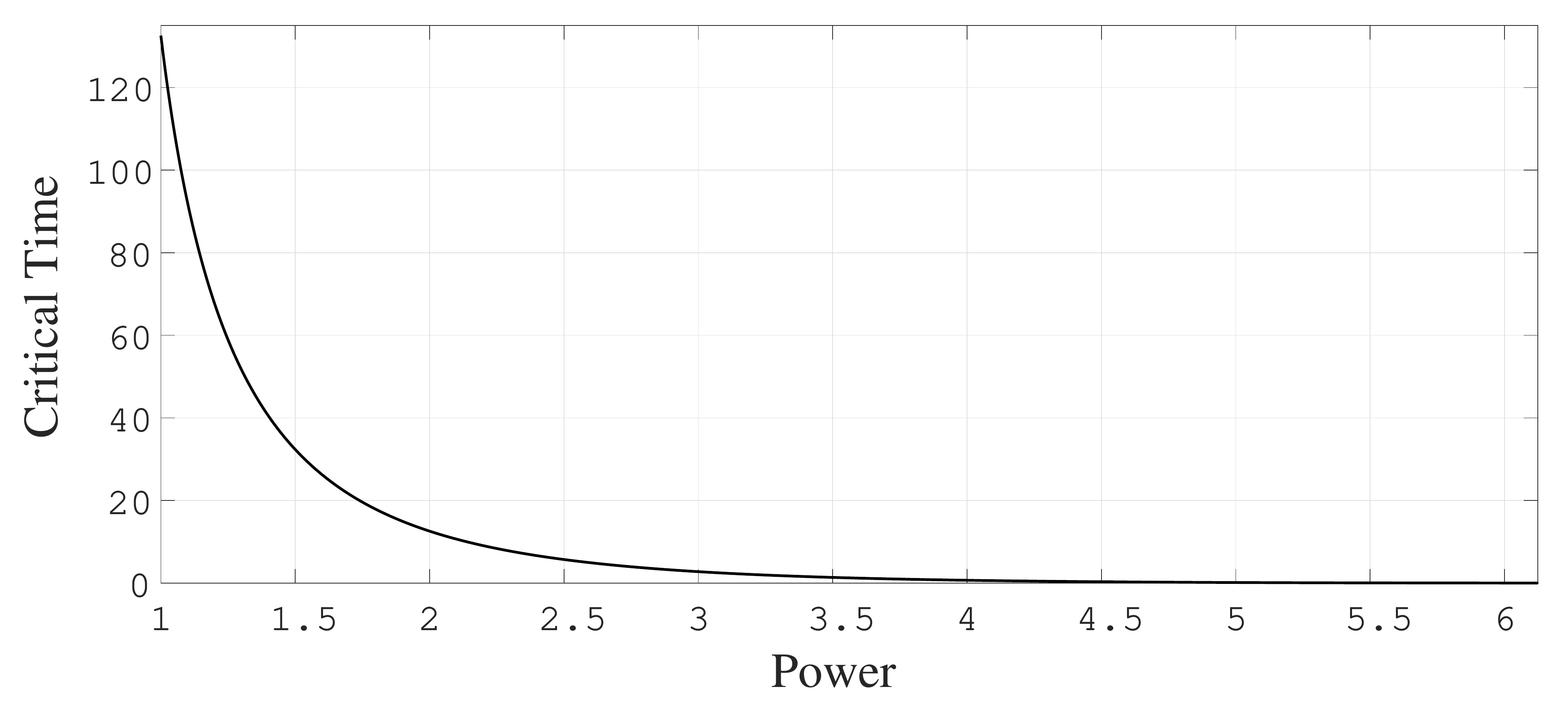}
  \caption{$k=8$}
  \label{fig:2b}
\end{subfigure} \\[.5em]
\begin{subfigure}{.5\textwidth}
  \centering
	\captionsetup{justification=centering}
  \includegraphics[width=.98\linewidth,height=1.72in]{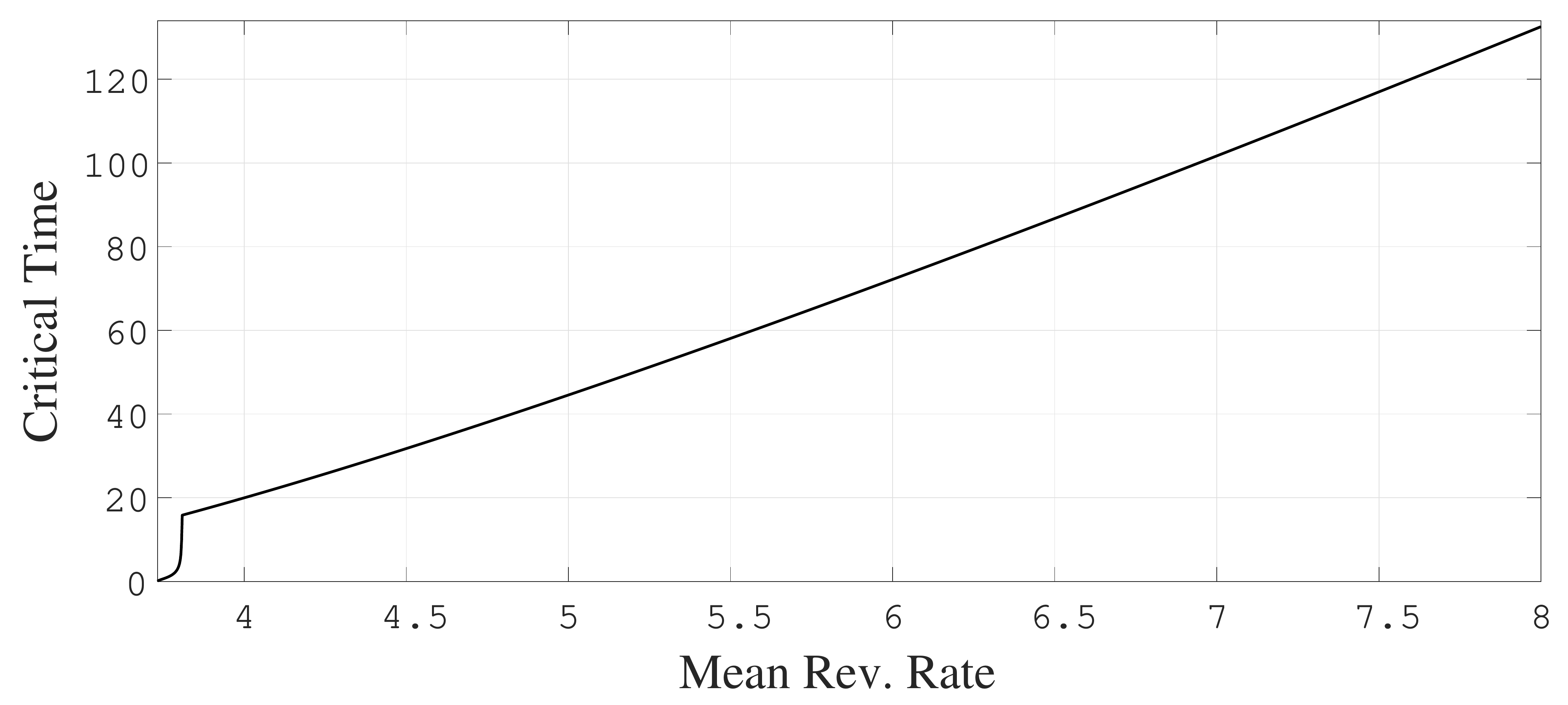}
  \caption{$p=1$}
  \label{fig:2c}
\end{subfigure}%
\begin{subfigure}{.5\textwidth}
  \centering
	\captionsetup{justification=centering}
  \includegraphics[width=.98\linewidth,height=1.72in]{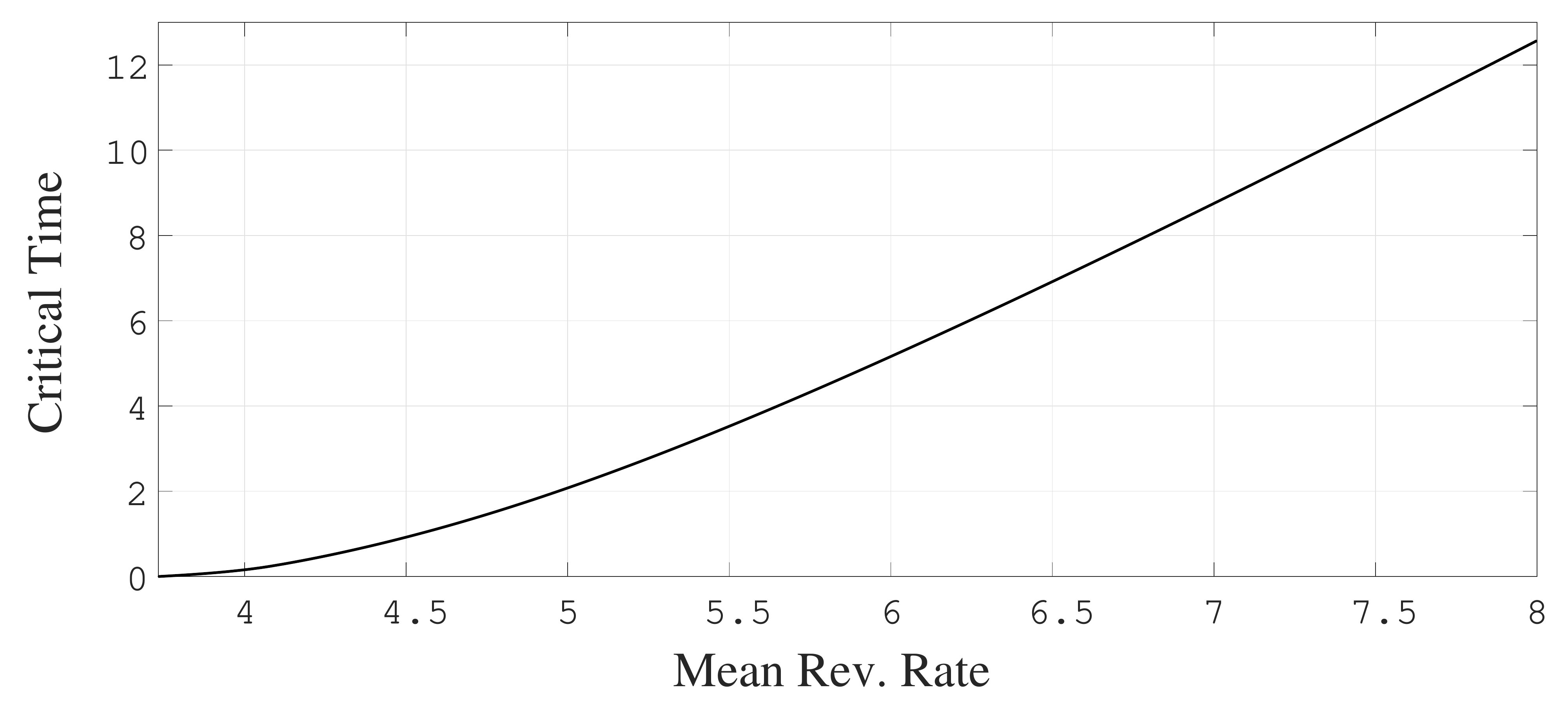}
  \caption{$p=2$}
  \label{fig:2d}
\end{subfigure}
\caption{The critical time defined in \eqref{eq2.3.9} plotted against the power and the mean reversion rate when $k\in\{4,8\}$ and $p\in\{1,2\}$, respectively.}
\label{fig:2}
\end{figure}

Next, we denote by $\bar{S}_{T\hspace{-1pt},\hspace{1pt}N}$ the value at time $T$ of the approximation process corresponding to an equidistant discretization with $N$ time steps, and study the $L^{p}$ error
\begin{equation}\label{eq5.1.6}
\varepsilon_{\scalebox{0.6}{S}}(N) = \E\Big[\big|S_{T}-\bar{S}_{T\hspace{-1pt},\hspace{1pt}N}\big|^{p}\Big]^{\frac{1}{p}}
\end{equation}
when $p=1$ (convergence in $L^{1}$ implies weak convergence for a large class of options, see, e.g., \cite{Cozma:2016a}) and $p=2$ (convergence in $L^{2}$ is useful for multilevel Monte Carlo methods, see, e.g., \cite{Giles:2008}). Due to the difficulty in computing the quantity in \eqref{eq5.1.6}, we use Proposition \ref{Prop5.1} and estimate as proxy the difference between the values of the approximation process corresponding to $N$ time steps ($\bar{S}_{T\hspace{-1pt},\hspace{1pt}N}$) and $2N$ time steps ($\bar{S}_{T\hspace{-1pt},\hspace{1pt}2N}$) for the same Brownian path. A proof of Proposition \ref{Prop5.1} for $p=1$ can be found, for instance, in \cite{Alfonsi:2005}. However, since the extension to $p\geq1$ is non-trivial, we include the proof for the general case here.

\begin{proposition}\label{Prop5.1}
Let $T>0$ and $p\geq1$, and suppose that there exists $\eta>1-\frac{1}{p}$ such that
\begin{equation}\label{eq5.1.7}
\E\Big[\big|S_{T}-\bar{S}_{T\hspace{-1pt},\hspace{1pt}N}\big|^{p}\Big]^{\frac{1}{p}} = \mathcal{O}\Big(\big(\log(2N)\big)^{-\eta}\Big).
\end{equation}
Then, for any $\alpha>0$ and $\beta\geq0$,
\begin{equation}\label{eq5.1.8}
\E\Big[\big|S_{T}-\bar{S}_{T\hspace{-1pt},\hspace{1pt}N}\big|^{p}\Big]^{\frac{1}{p}} = \mathcal{O}\left(\frac{\big(\log(2N)\big)^{\beta}}{N^{\alpha}}\right) \hspace{1em}\Leftrightarrow\hspace{1em}
\E\Big[\big|\bar{S}_{T\hspace{-1pt},\hspace{1pt}N}-\bar{S}_{T\hspace{-1pt},\hspace{1pt}2N}\big|^{p}\Big]^{\frac{1}{p}} = \mathcal{O}\left(\frac{\big(\log(2N)\big)^{\beta}}{N^{\alpha}}\right).
\end{equation}
\end{proposition}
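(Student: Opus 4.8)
The plan is to prove the two directions of the equivalence in \eqref{eq5.1.8} separately, using the triangle inequality to compare $S_T - \bar{S}_{T,N}$ with the pair $(S_T - \bar{S}_{T,2N})$ and $(\bar{S}_{T,N} - \bar{S}_{T,2N})$, together with the assumed logarithmic-rate bound \eqref{eq5.1.7} to control the ``tail'' contribution of the doubling chain. Write $\varepsilon(N) = \E\big[|S_T - \bar{S}_{T,N}|^p\big]^{1/p}$ and $\Delta(N) = \E\big[|\bar{S}_{T,N} - \bar{S}_{T,2N}|^p\big]^{1/p}$, and fix the target rate $r(N) = (\log(2N))^{\beta} N^{-\alpha}$.

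For the direction $\varepsilon(N) = \mathcal{O}(r(N)) \Rightarrow \Delta(N) = \mathcal{O}(r(N))$: by the triangle inequality in $L^p$, $\Delta(N) \leq \varepsilon(N) + \varepsilon(2N) \leq \varepsilon(N) + \varepsilon(2N)$, and since $r(2N) = \mathcal{O}(r(N))$ (the logarithmic factor grows by at most a bounded factor, and $(2N)^{-\alpha} = 2^{-\alpha} N^{-\alpha}$), this direction is immediate and does not even use the hypothesis \eqref{eq5.1.7}. For the converse $\Delta(N) = \mathcal{O}(r(N)) \Rightarrow \varepsilon(N) = \mathcal{O}(r(N))$: the idea is to telescope along the doubling chain, writing, for any $J \geq 1$,
\begin{equation*}
\varepsilon(N) \leq \varepsilon(2^J N) + \sum_{j=0}^{J-1} \Delta(2^j N) \leq \varepsilon(2^J N) + C\sum_{j=0}^{J-1} r(2^j N).
\end{equation*}
Since $\sum_{j\geq0} r(2^j N) = \sum_{j\geq0} (\log(2^{j+1}N))^{\beta} (2^j N)^{-\alpha}$ converges (the geometric decay $2^{-\alpha j}$ dominates the polynomial growth $(j + \log(2N))^{\beta}$) and is itself $\mathcal{O}(r(N))$, the sum term is under control uniformly in $J$. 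The remaining issue is that $\varepsilon(2^J N)$ need not vanish as $J\to\infty$ for fixed $N$ — but here is where \eqref{eq5.1.7} enters: we instead let $J = J(N) \to \infty$ slowly with $N$, say $J(N) = \lceil \log\log(2N) \rceil$ or similar, so that $\varepsilon(2^{J(N)} N) = \mathcal{O}((\log(2^{J(N)}N))^{-\eta}) = \mathcal{O}((\log(2N) + J(N))^{-\eta})$, and we must check this is $\mathcal{O}(r(N))$.

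Here lies the one genuine subtlety, and it explains the hypothesis $\eta > 1 - \tfrac1p$ and the appearance of $\log(2N)$ rather than just $N$ in \eqref{eq5.1.7}: we must choose $J(N)$ large enough that $\varepsilon(2^{J(N)}N)$ is negligible, yet small enough that the tail sum $\sum_{j=0}^{J(N)-1} r(2^j N)$ — which I would rather bound by the full convergent series $\sum_{j\geq0} r(2^j N) = \mathcal{O}(r(N))$ — still dominates. Since the full series already converges and is $\mathcal{O}(r(N))$ independently of $J$, the only real constraint is making $\varepsilon(2^{J(N)}N) = \mathcal{O}(r(N))$. Taking $J(N) \sim c\log(2N)$ for a suitable constant $c$ makes $2^{J(N)}N \sim (2N)^{1+c}$, so $\varepsilon(2^{J(N)}N) = \mathcal{O}((\log(2N))^{-\eta}) \cdot \text{(bounded)}$; if instead $\alpha$ is allowed to be arbitrarily small we may need $J(N)$ polynomial in $N$, but then the geometric tail $2^{-\alpha J(N)}$ still kills it. I expect the cleanest route is: pick $J(N) = \big\lceil \tfrac{2}{\alpha}\,\frac{\log\log(2N)}{\log 2} + \tfrac{2\beta}{\alpha} \big\rceil$ type choice so that $r(2^{J(N)}N) \lesssim (\log(2N))^{-2}$, verify via \eqref{eq5.1.7} that $\varepsilon(2^{J(N)}N) = \mathcal{O}((\log(2N))^{-\eta}\cdot\text{const})$ with $\eta$ as given, and observe that because the full tail series is $\mathcal{O}(r(N))$ we get $\varepsilon(N) = \mathcal{O}(r(N)) + \mathcal{O}((\log(2N))^{-\eta})$; the role of $\eta > 1 - 1/p$ is to guarantee this second term is itself $o(1)$ fast enough relative to the crude a priori bound, closing the estimate. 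The main obstacle, then, is not any deep inequality but rather the careful bookkeeping in choosing the truncation level $J(N)$ so both the residual term and the summed-doubling term land below $Cr(N)$ simultaneously; once that choice is pinned down, everything is elementary.
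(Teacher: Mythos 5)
Your forward direction is fine, and the telescoping skeleton for the converse is the right idea, but the converse as written does not close. The residual term $\varepsilon(2^{J(N)}N)$ is controlled only through the hypothesis \eqref{eq5.1.7}, i.e., by a constant times $\big(\log(2^{J(N)+1}N)\big)^{-\eta}$, and no choice of $J(N)$ of the form you propose can make a negative power of a logarithm land below $r(N)=(\log(2N))^{\beta}N^{-\alpha}$; that would require $J(N)$ to grow at least like $N^{\alpha/\eta}$, a case you gesture at but do not carry out. Your own summary of the outcome, $\varepsilon(N)=\mathcal{O}(r(N))+\mathcal{O}\big((\log(2N))^{-\eta}\big)$, is exactly the problem: the second term dominates the first for every $\alpha>0$, so this only recovers the hypothesis, not the conclusion. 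The closing claim that $\eta>1-\tfrac1p$ makes the second term ``fast enough'' is not correct; no value of $\eta$ turns a logarithmic rate into a polynomial one.

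The repair is easier than what you attempt and removes the need to couple $J$ to $N$ at all. Your premise that ``$\varepsilon(2^JN)$ need not vanish as $J\to\infty$ for fixed $N$'' is false: \eqref{eq5.1.7} gives $\varepsilon(2^JN)\leq C\big((J+1)\log 2+\log N\big)^{-\eta}\to0$ as $J\to\infty$ for each fixed $N$, since $\eta>1-\tfrac1p\geq0$ forces $\eta>0$. Hence, for fixed $N$, Minkowski's inequality gives $\varepsilon(N)\leq\sum_{j=0}^{J-1}\Delta(2^jN)+\varepsilon(2^JN)$ for every $J$, the partial sums are dominated by the convergent series $\sum_{j\geq0}\Delta(2^jN)=\mathcal{O}(r(N))$ (your estimate of that series is correct), and letting $J\to\infty$ kills the residual. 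Note that this route uses only $\eta>0$; the paper instead telescopes at the level of $p$-th moments via a weighted H\"older inequality with weights $a_i=(i+1)^{-\gamma}$, $1<\gamma<\eta p/(p-1)$, and it is precisely there --- in making the weighted residual $a_l^{1-p}\varepsilon(2^lN)^p\to0$ --- that the hypothesis $\eta>1-\tfrac1p$ is genuinely consumed. Either way, the step you must still supply is a rigorous disposal of the residual term, which your current choice of $J(N)$ does not achieve.
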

\begin{proof}
See Appendix \ref{sec:aux4}.
\end{proof}

In Theorem 1 in \cite{Hefter:2017}, a lower error bound was established for all discretization schemes for the CIR process based on equidistant evaluations of the Brownian motion in the accessible boundary regime. As a consequence of this result, the FTE scheme achieves at most a strong convergence order of $\nu$ when $\nu<1/2$. In fact, we demonstrated the $L^{1}$ order of $\min\{\nu,1/2\}$ of the FTE scheme for the CIR process numerically in \cite{Cozma:2017c}. Therefore, due to the CIR dynamics driving the squared volatility of the spot process in \eqref{eq2.1.1}, we expect a strong convergence order of the LE--FTE scheme strictly less than $1/2$ when $\nu<1/2$. The data in Figures \ref{fig:3} and \ref{fig:4} suggest an empirical $L^{1}$ (and $L^{2}$) order between $0$ and $1/2$ when $\nu<1/2$ and an order of $1/2$ when $\nu\geq1/2$, which is in line with the previous observation and also with our theoretical results when $\nu>2+\sqrt{3}$.
\begin{figure}[!htb]
\centering
\begin{subfigure}{.5\textwidth}
  \centering
	\captionsetup{justification=centering}
  \includegraphics[width=.98\linewidth,height=1.95in]{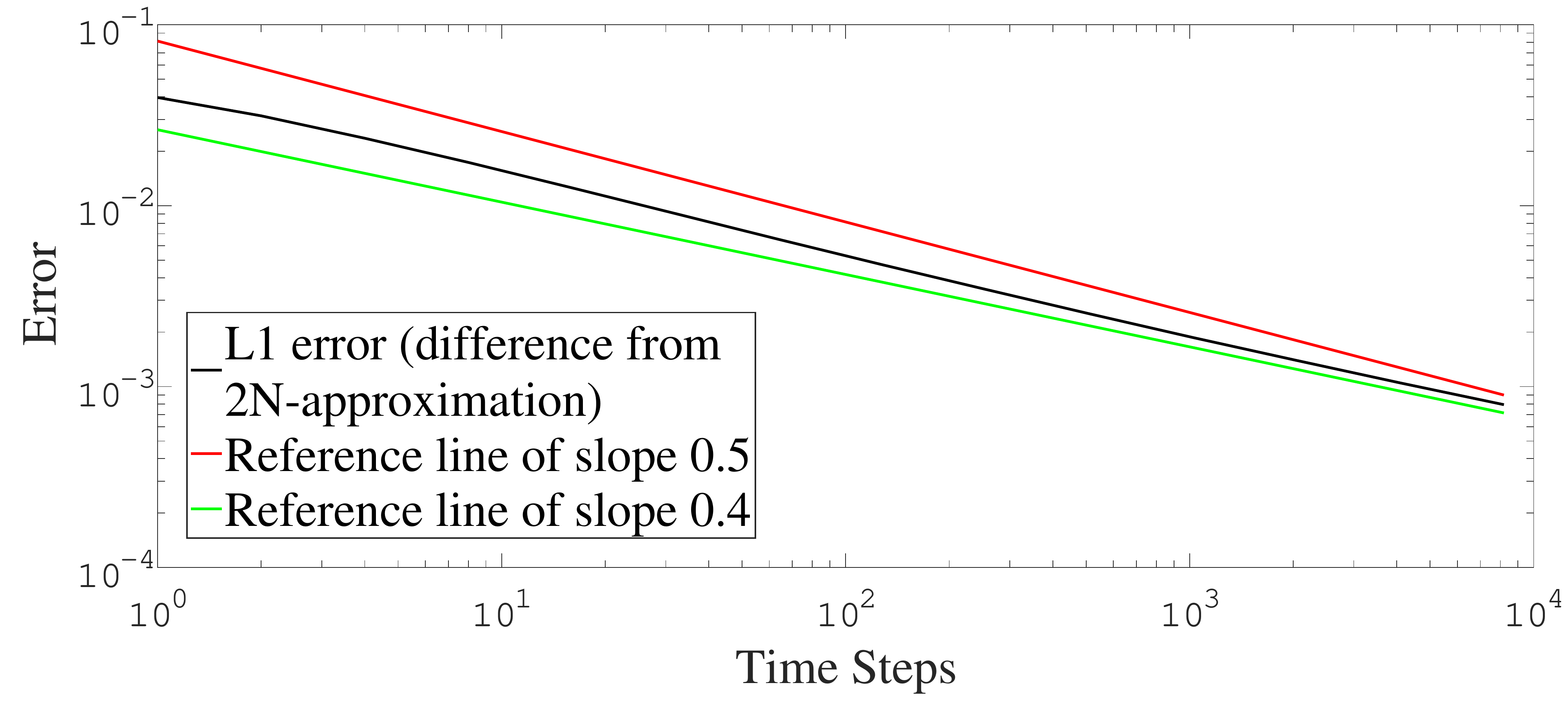}
  \caption{$\nu=0.25$}
  \label{fig:3a}
\end{subfigure}%
\begin{subfigure}{.5\textwidth}
  \centering
	\captionsetup{justification=centering}
  \includegraphics[width=.98\linewidth,height=1.95in]{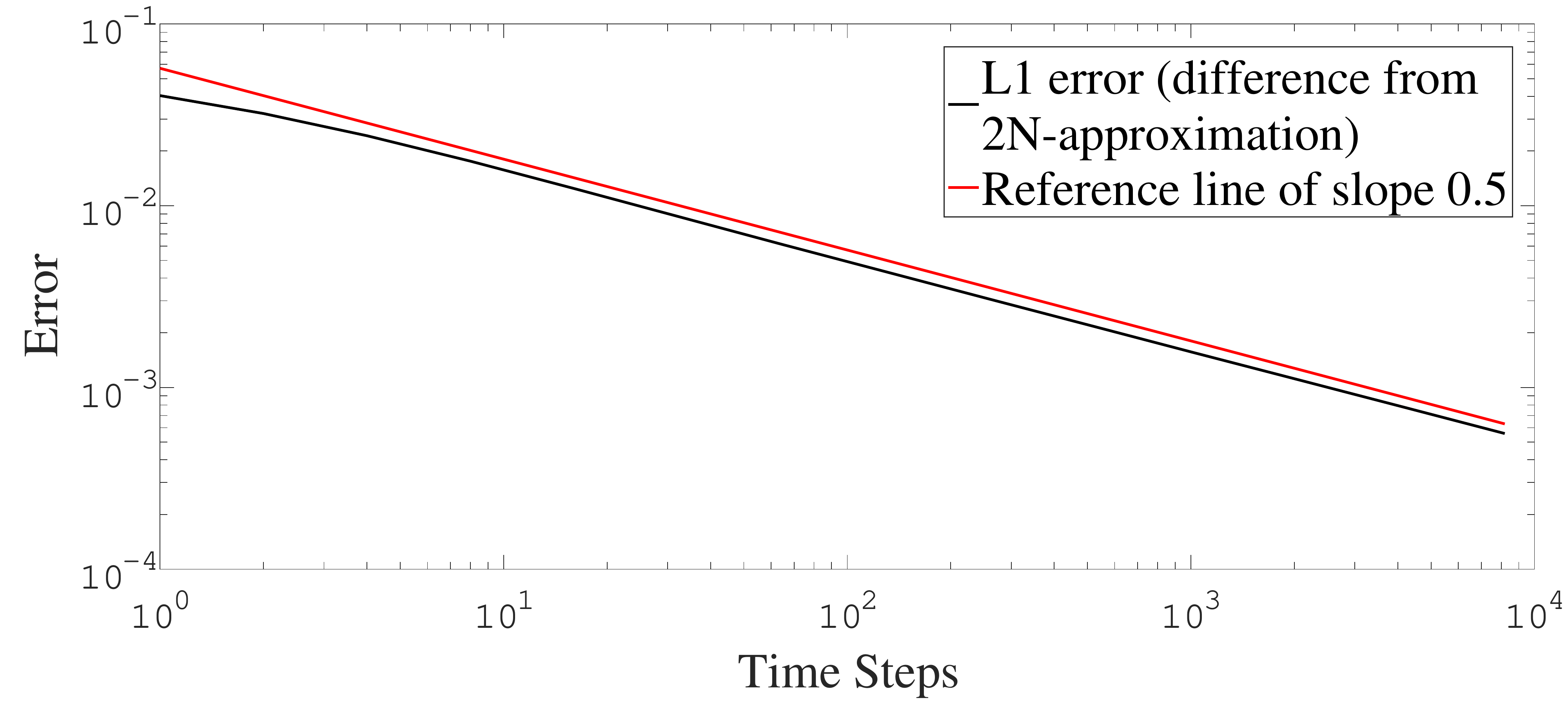}
  \caption{$\nu=0.5$}
  \label{fig:3b}
\end{subfigure} \\[.5em]
\begin{subfigure}{.5\textwidth}
  \centering
	\captionsetup{justification=centering}
  \includegraphics[width=.98\linewidth,height=1.95in]{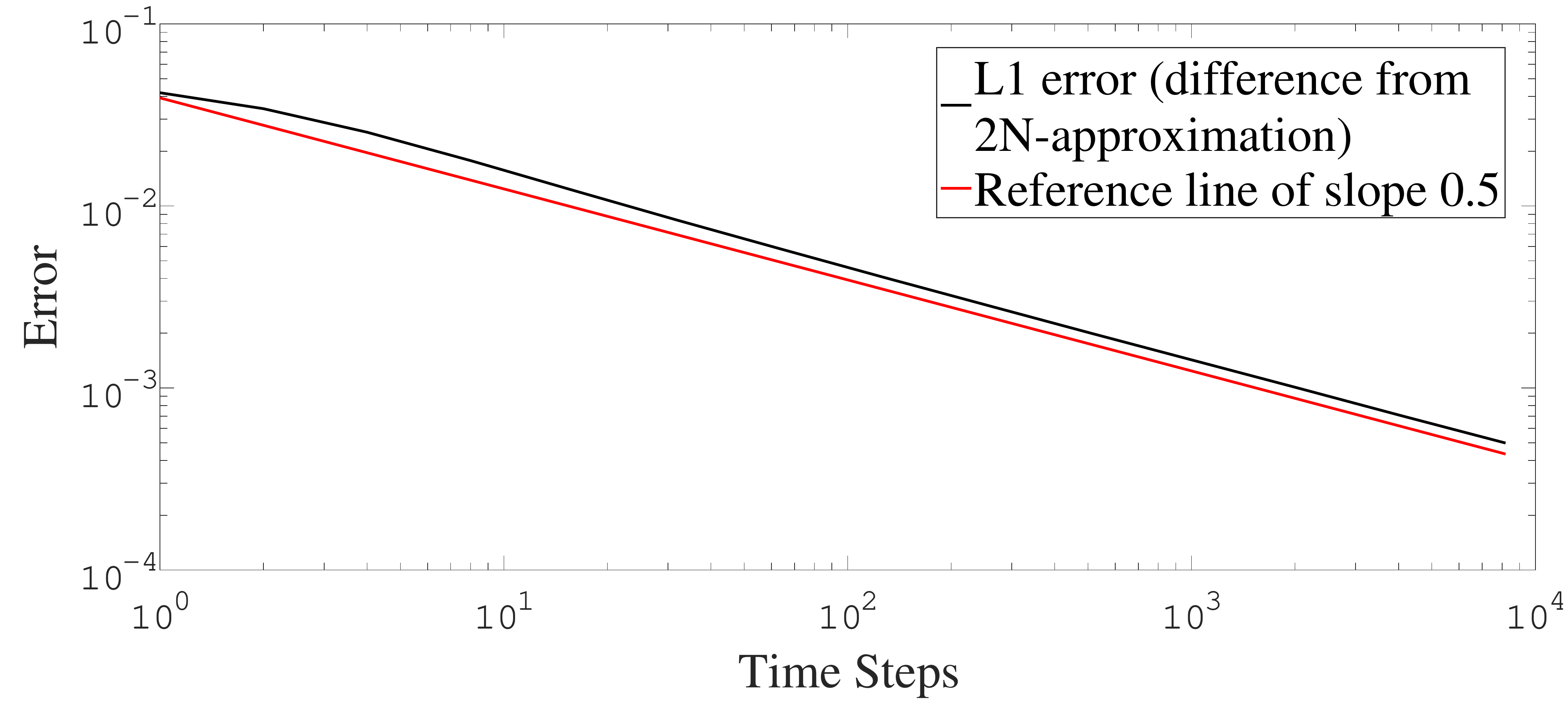}
  \caption{$\nu=1$}
  \label{fig:3c}
\end{subfigure}%
\begin{subfigure}{.5\textwidth}
  \centering
	\captionsetup{justification=centering}
  \includegraphics[width=.98\linewidth,height=1.95in]{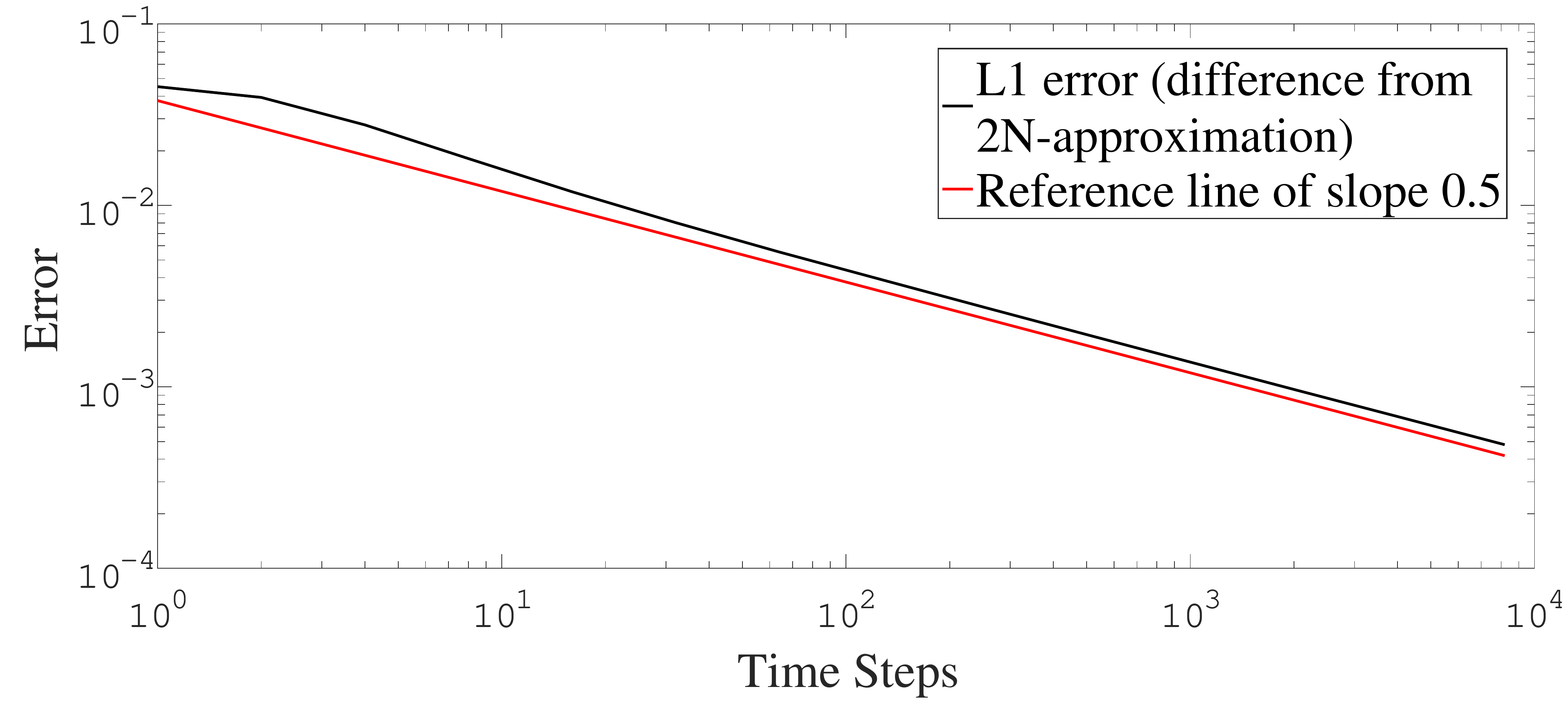}
  \caption{$\nu=2$}
  \label{fig:3d}
\end{subfigure} \\[.5em]
\begin{subfigure}{.5\textwidth}
  \centering
	\captionsetup{justification=centering}
  \includegraphics[width=.98\linewidth,height=1.95in]{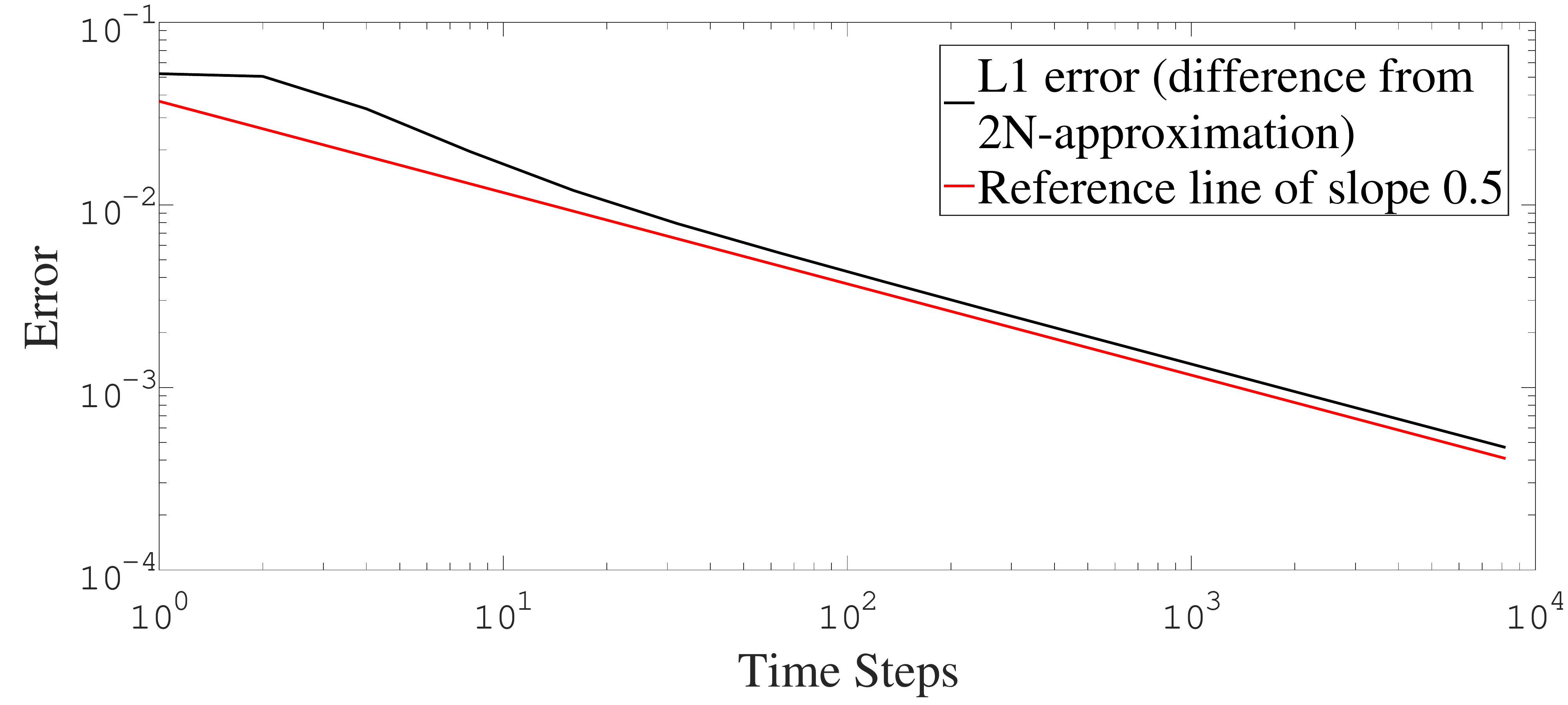}
  \caption{$\nu=4$}
  \label{fig:3e}
\end{subfigure}%
\begin{subfigure}{.5\textwidth}
  \centering
	\captionsetup{justification=centering}
  \includegraphics[width=.98\linewidth,height=1.95in]{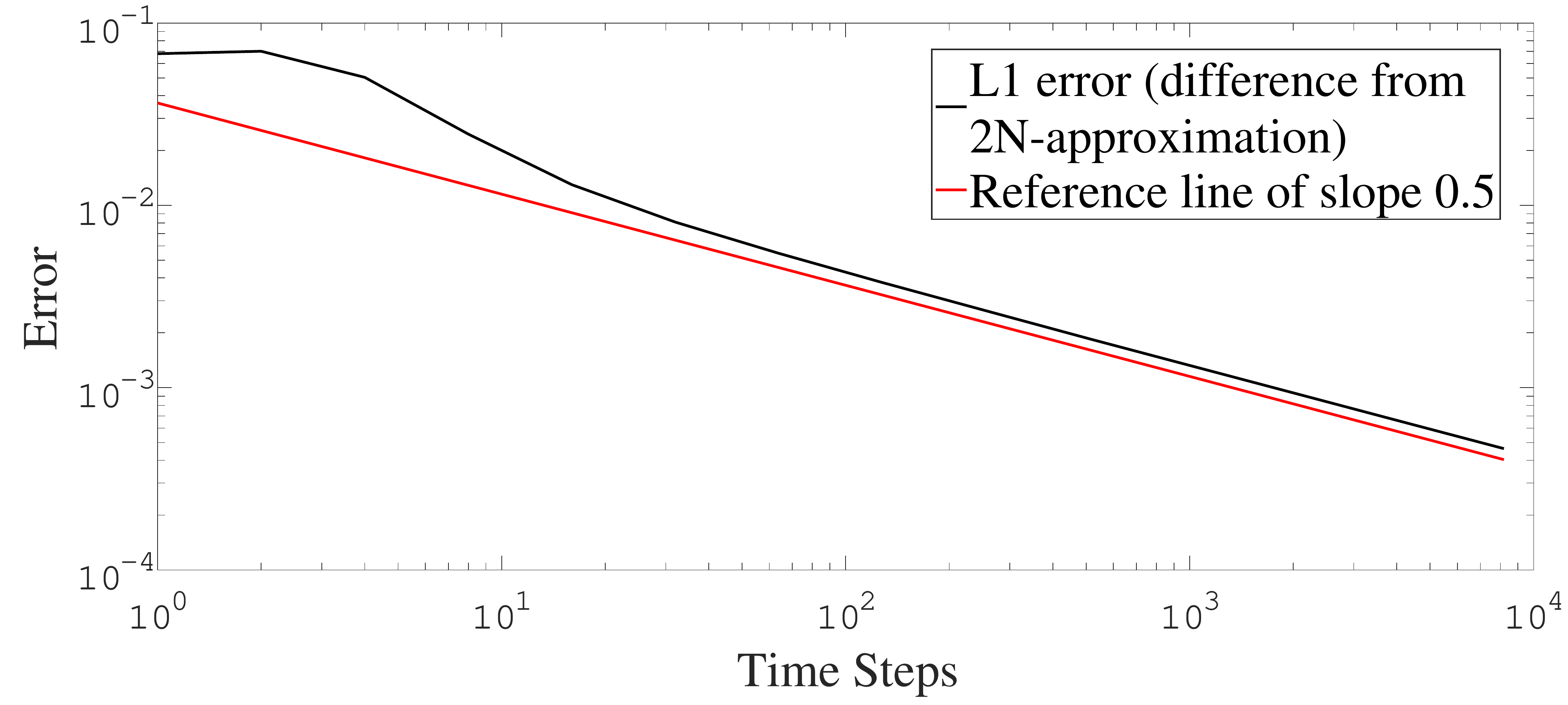}
  \caption{$\nu=8$}
  \label{fig:3f}
\end{subfigure}%
\caption{The $L^{1}$ errors against the number of time steps when $k\in\{0.25,0.5,1,2,4,8\}$ and the other parameters are as defined in \eqref{eq5.1.1}, computed using up to $2.6\!\times\!10^{6}$ Monte Carlo paths (for a relative error less than 10bp).}
\label{fig:3}
\end{figure}

\begin{figure}[!htb]
\centering
\begin{subfigure}{.5\textwidth}
  \centering
	\captionsetup{justification=centering}
  \includegraphics[width=.98\linewidth,height=1.95in]{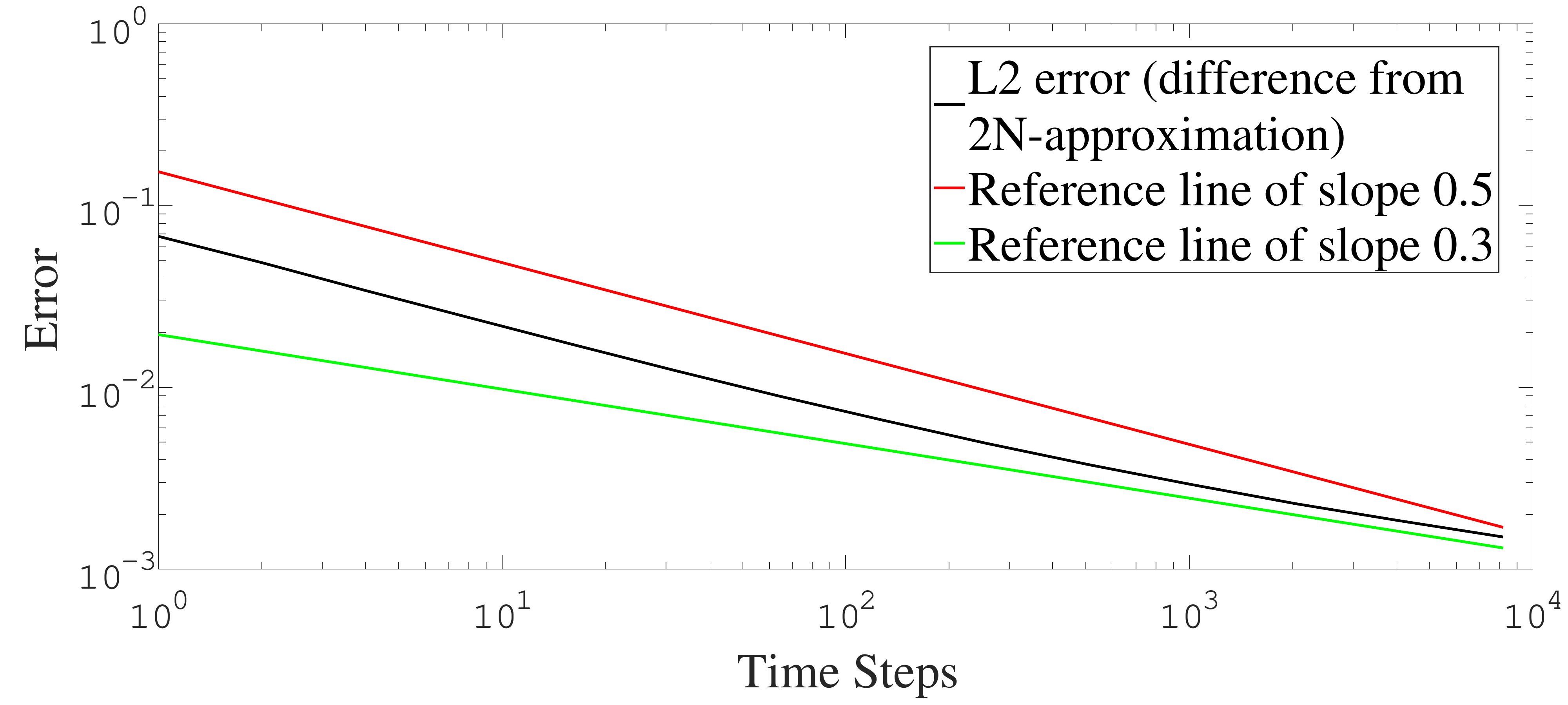}
  \caption{$\nu=0.25$}
  \label{fig:4a}
\end{subfigure}%
\begin{subfigure}{.5\textwidth}
  \centering
	\captionsetup{justification=centering}
  \includegraphics[width=.98\linewidth,height=1.95in]{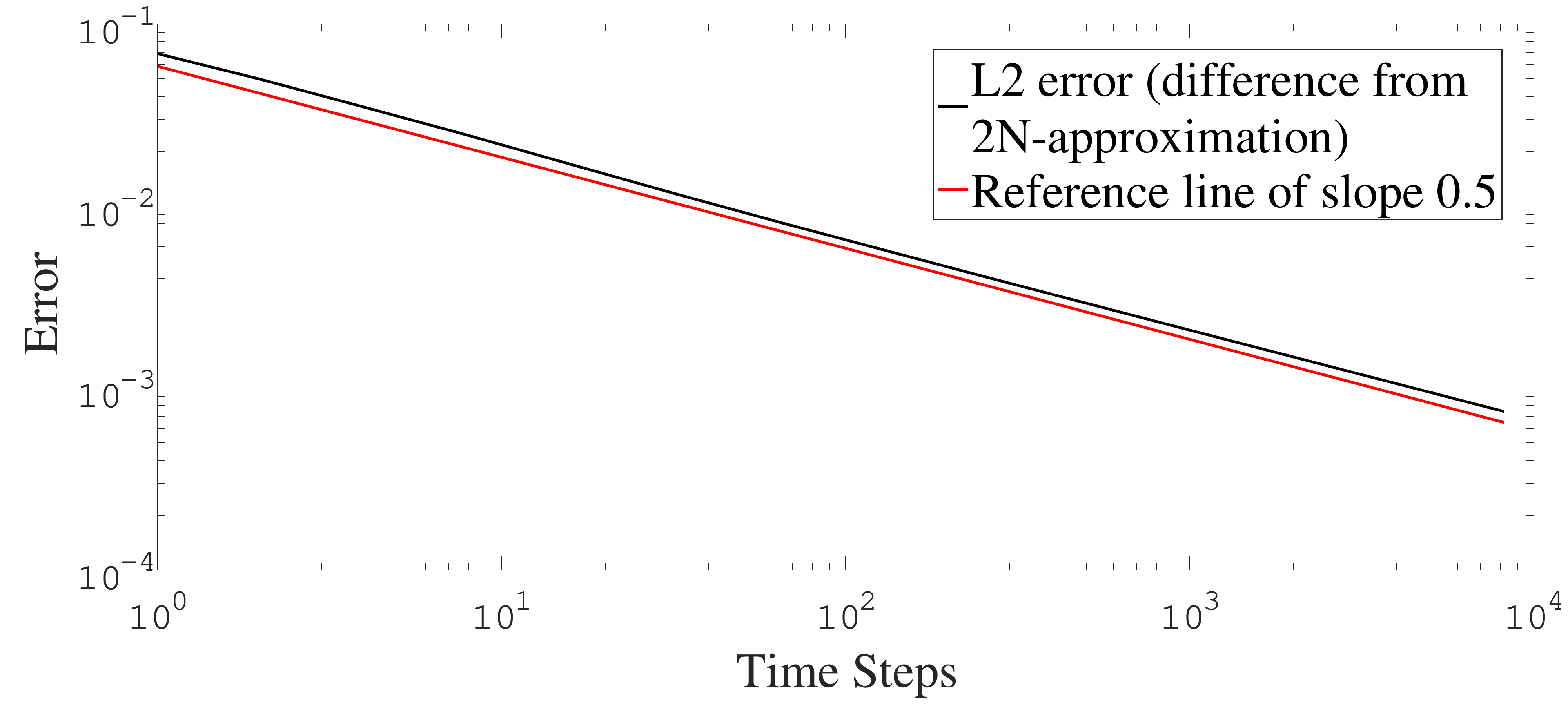}
  \caption{$\nu=0.5$}
  \label{fig:4b}
\end{subfigure} \\[.5em]
\begin{subfigure}{.5\textwidth}
  \centering
	\captionsetup{justification=centering}
  \includegraphics[width=.98\linewidth,height=1.95in]{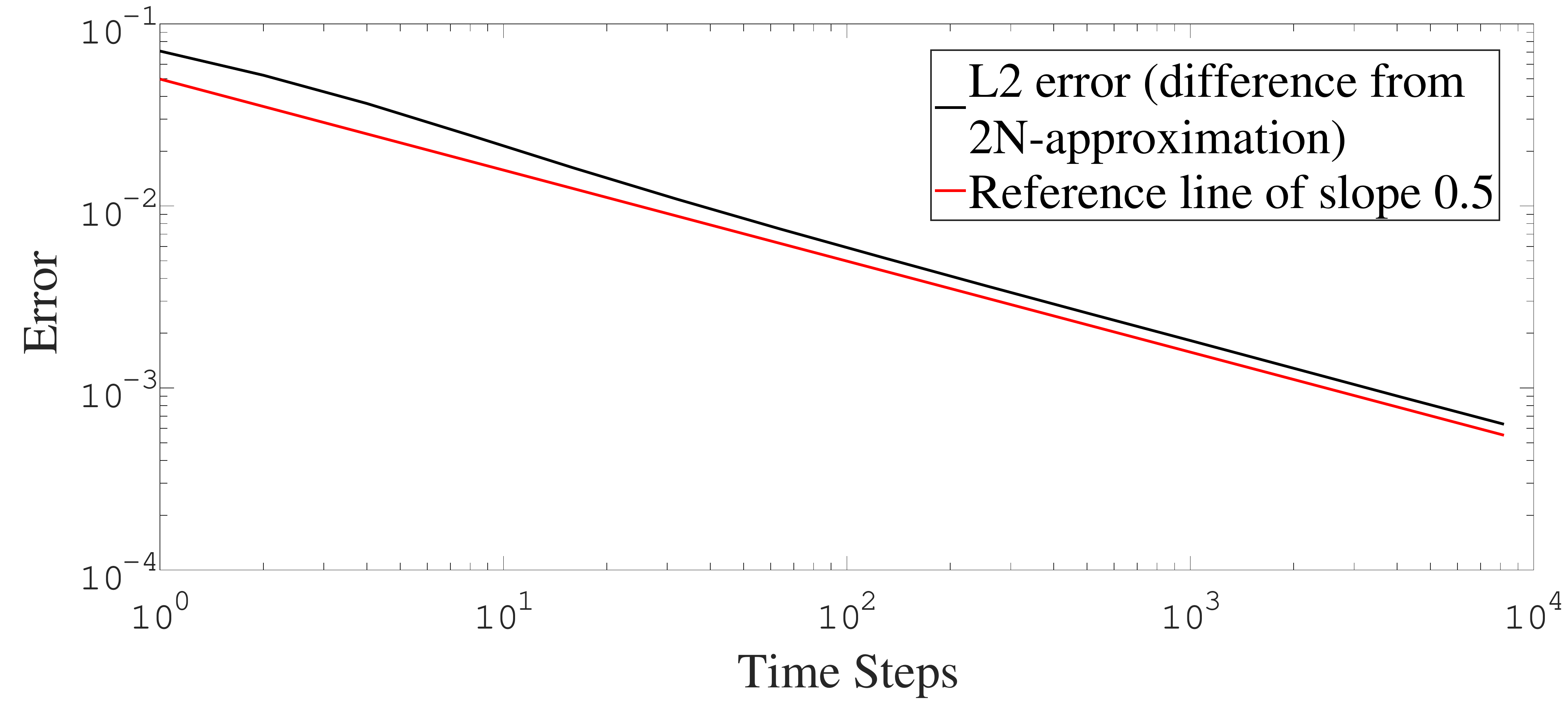}
  \caption{$\nu=1$}
  \label{fig:4c}
\end{subfigure}%
\begin{subfigure}{.5\textwidth}
  \centering
	\captionsetup{justification=centering}
  \includegraphics[width=.98\linewidth,height=1.95in]{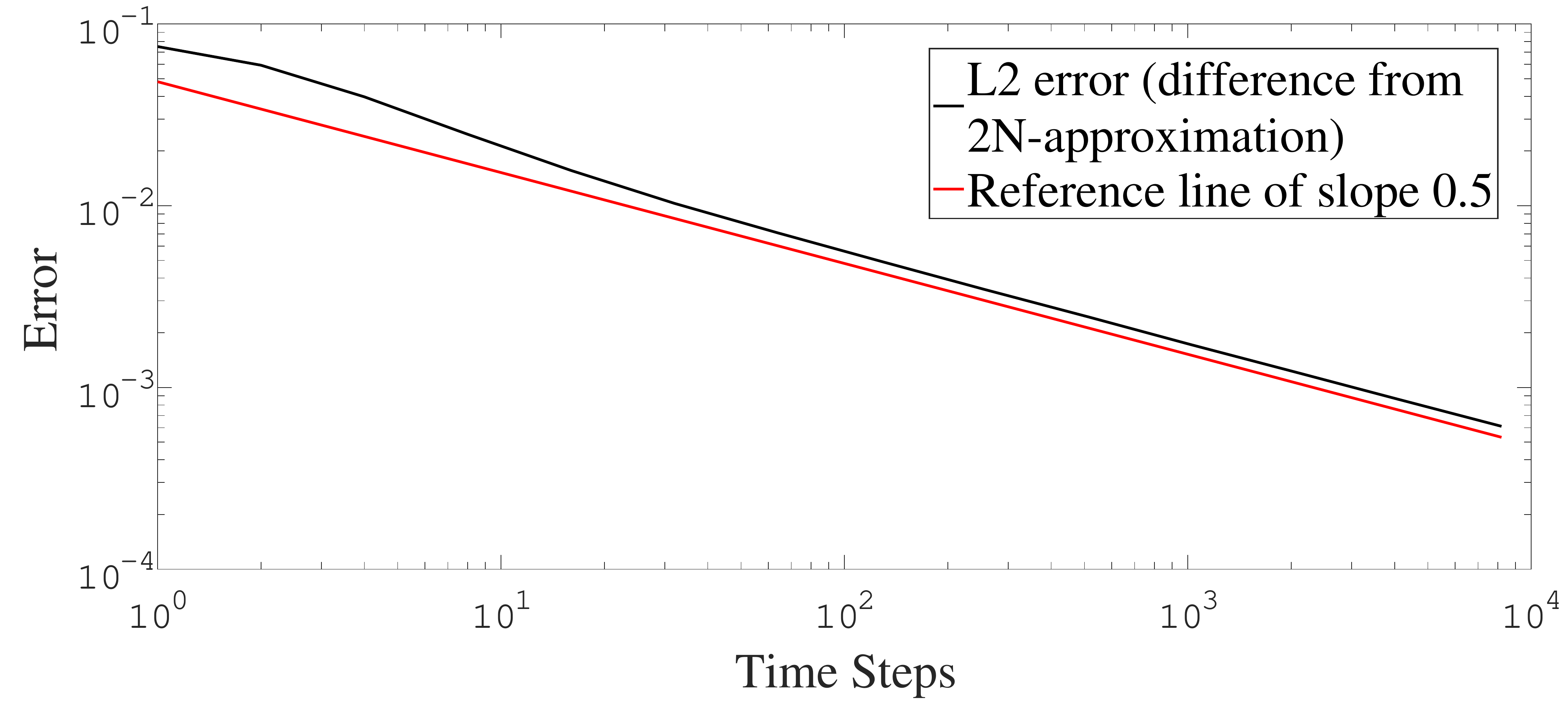}
  \caption{$\nu=2$}
  \label{fig:4d}
\end{subfigure} \\[.5em]
\begin{subfigure}{.5\textwidth}
  \centering
	\captionsetup{justification=centering}
  \includegraphics[width=.98\linewidth,height=1.95in]{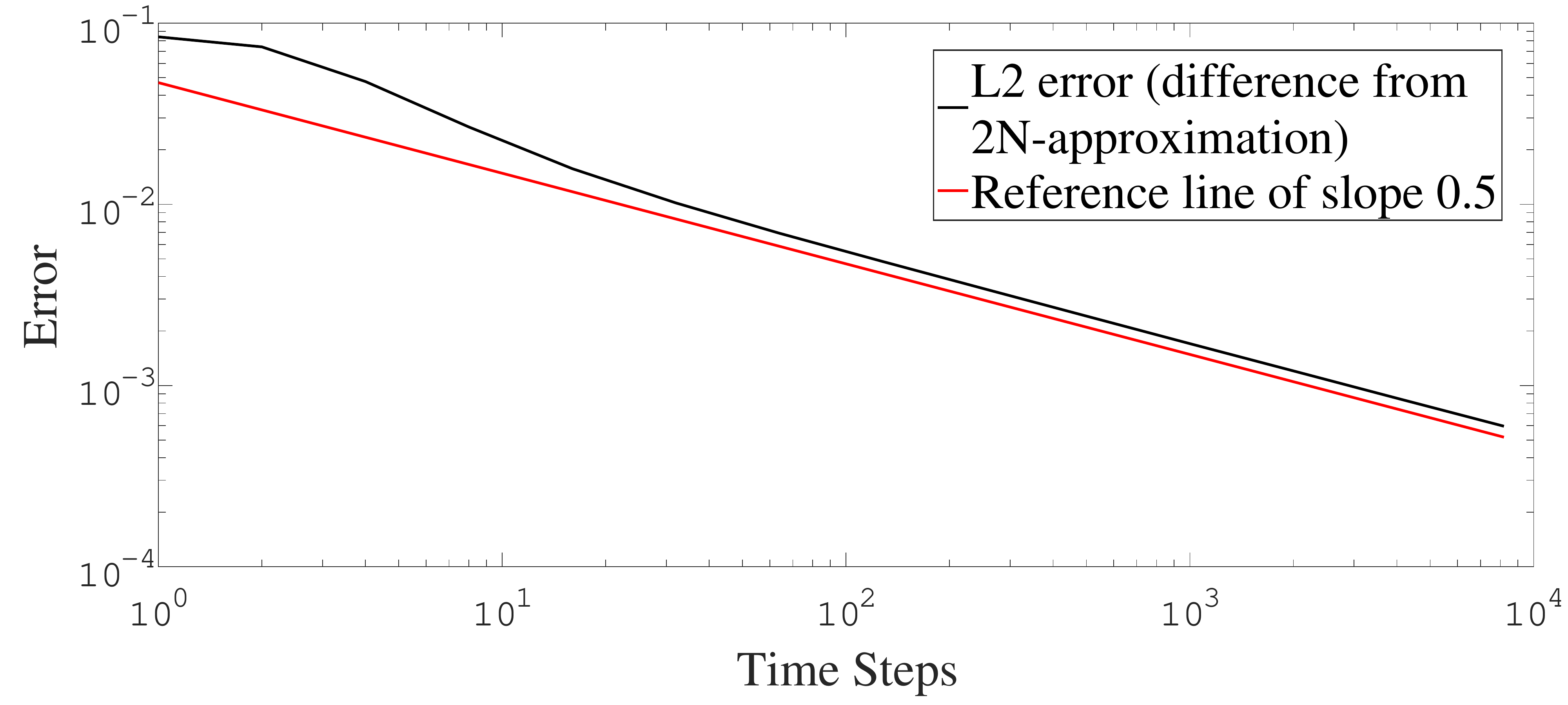}
  \caption{$\nu=4$}
  \label{fig:4e}
\end{subfigure}%
\begin{subfigure}{.5\textwidth}
  \centering
	\captionsetup{justification=centering}
  \includegraphics[width=.98\linewidth,height=1.95in]{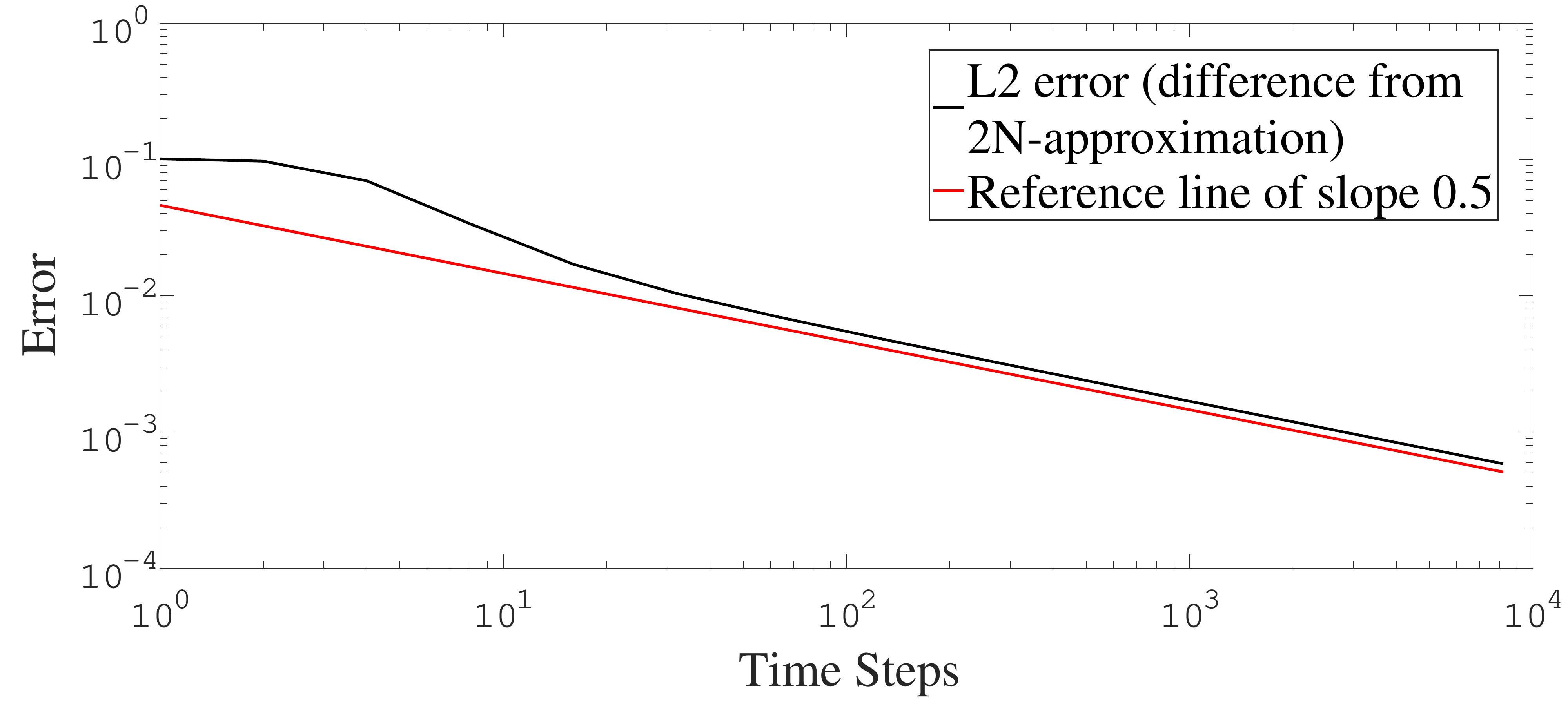}
  \caption{$\nu=8$}
  \label{fig:4f}
\end{subfigure}%
\caption{The $L^{2}$ errors against the number of time steps when $k\in\{0.25,0.5,1,2,4,8\}$ and the other parameters are as defined in \eqref{eq5.1.1}, computed using up to $1.8\!\times\!10^{7}$ Monte Carlo paths (for a relative error less than 10bp).}
\label{fig:4}
\end{figure}

\subsection{Weak convergence}\label{subsec:weak}

We conclude this section with a numerical analysis of the rate of weak convergence. In particular, we consider a European call option with strike $K=0.9$ and time to maturity $T=1$, and assign the same values to the underlying model parameters as in \eqref{eq5.1.1}. In order to observe the asymptotic rate of convergence in a reasonable computational time, we define a new parametric leverage function $\sigma$ with a stronger dependence on the running maximum, namely
\begin{equation}\label{eq5.2.1}
\sigma(t,x,y) = 1+\arctan\big(\log(y)-\log(S_{0})\big).
\end{equation}
Note that this leverage function is bounded, constant in time and spot, and Lipschitz continuous in log-running maximum. Hence, Assumptions \ref{Asm2.1} and \ref{Asm2.2} are satisfied.

In order to establish the strong convergence in $L^{1}$ with order 1/2 (up to a logarithmic factor) -- and hence the weak convergence of the same order -- of the approximation process, we compute the critical time $T^{\scalebox{0.6}{\text{FTE}}}(1)$ from \eqref{eq2.3.9}. A straightforward technical analysis of the leverage function yields $\sigma_{max}=2.571$, $C_{\sigma,x}=0$ and $C_{\sigma,m}=1$. Therefore, we obtain $T^{\scalebox{0.6}{\text{FTE}}}(1)=38.92$, which is greater than $T=1$, and hence all conditions in the statement of Theorem \ref{Thm2.7} are satisfied.

Next, we study the weak error
\begin{equation}\label{eq5.2.2}
\varepsilon_{\scalebox{0.6}{W}}(N) = \big|\E\big[f(S_{T})\big]-\E\big[f(\bar{S}_{T\hspace{-1pt},\hspace{1pt}N})\big]\big|,
\end{equation}
where $f(S)=(S-K)^{+}$ is the European call option payoff. We use Proposition \ref{Prop5.2} and estimate as proxy the difference between the values of the approximated call price corresponding to $N$ time steps ($\bar{S}_{T\hspace{-1pt},\hspace{1pt}N}$) and $2N$ time steps ($\bar{S}_{T\hspace{-1pt},\hspace{1pt}2N}$). The proof is similar to that of Proposition \ref{Prop5.1} and is thus omitted.

\begin{proposition}\label{Prop5.2}
Let $T>0$ and $f:\mathcal{C}(\RR_{+})\rightarrow\RR_{+}$, and suppose that
\begin{equation}\label{eq5.2.3}
\lim_{N\to\infty}\E\big[f(\bar{S}_{T\hspace{-1pt},\hspace{1pt}N})\big] = \E\big[f(S_{T})\big].
\end{equation}
Then, for any $\alpha>0$ and $\beta\geq0$,
\begin{align}\label{eq5.2.4}
&\big|\E\big[f(S_{T})\big]-\E\big[f(\bar{S}_{T\hspace{-1pt},\hspace{1pt}N})\big]\big| = \mathcal{O}\left(\frac{\big(\log(2N)\big)^{\beta}}{N^{\alpha}}\right) \nonumber\\[1pt]
\hspace{1em}\Leftrightarrow\hspace{1em} &\big|\E\big[f(\bar{S}_{T\hspace{-1pt},\hspace{1pt}N})\big]-\E\big[f(\bar{S}_{T\hspace{-1pt},\hspace{1pt}2N})\big]\big| = \mathcal{O}\left(\frac{\big(\log(2N)\big)^{\beta}}{N^{\alpha}}\right).
\end{align}
\end{proposition}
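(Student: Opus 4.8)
The proof of Proposition \ref{Prop5.2} follows the same pattern as that of Proposition \ref{Prop5.1}, so here I only sketch the argument. The key observation is the triangle inequality
\begin{equation*}
\big|\E\big[f(\bar{S}_{T\hspace{-1pt},\hspace{1pt}N})\big]-\E\big[f(\bar{S}_{T\hspace{-1pt},\hspace{1pt}2N})\big]\big| \leq \big|\E\big[f(S_{T})\big]-\E\big[f(\bar{S}_{T\hspace{-1pt},\hspace{1pt}N})\big]\big| + \big|\E\big[f(S_{T})\big]-\E\big[f(\bar{S}_{T\hspace{-1pt},\hspace{1pt}2N})\big]\big|,
\end{equation*}
which immediately gives the forward implication ($\Rightarrow$): if the weak error at level $N$ decays like $(\log(2N))^{\beta}N^{-\alpha}$, then so does the error at level $2N$ (up to the trivial adjustment of absorbing $2^{\alpha}$ and the logarithmic shift into the constant), and the sum of the two is again $\mathcal{O}\big((\log(2N))^{\beta}N^{-\alpha}\big)$.

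For the reverse implication ($\Leftarrow$), which is the part requiring the qualitative convergence assumption \eqref{eq5.2.3}, I would proceed as in Proposition \ref{Prop5.1} by telescoping along a geometric sequence of refinements. Writing $a_N = \E\big[f(\bar{S}_{T\hspace{-1pt},\hspace{1pt}N})\big]$, the hypothesis \eqref{eq5.2.3} says $a_N \to \E\big[f(S_T)\big]$, so
\begin{equation*}
\E\big[f(S_T)\big] - a_N = \sum_{j=0}^{\infty}\big(a_{2^{j+1}N} - a_{2^{j}N}\big),
\end{equation*}
and by assumption each summand is $\mathcal{O}\big((\log(2^{j+1}N))^{\beta}(2^{j}N)^{-\alpha}\big)$. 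Summing the geometric-type series in $j$ — using that $\sum_{j\geq0} 2^{-j\alpha}(j+1+\log_2(2N))^{\beta}$ converges and is $\mathcal{O}\big((\log(2N))^{\beta}\big)$ since $\alpha>0$ — yields $\big|\E\big[f(S_T)\big] - a_N\big| = \mathcal{O}\big((\log(2N))^{\beta}N^{-\alpha}\big)$, which is the desired bound on the weak error.

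The only mild obstacle is making the series estimate in the reverse direction clean: one must be careful that the implied constant in the hypothesis $\big|a_{N} - a_{2N}\big| = \mathcal{O}\big((\log(2N))^{\beta}N^{-\alpha}\big)$ is uniform in $N$, so that applying it at each dyadic level $2^j N$ and summing does not introduce a divergent contribution. Since $\alpha>0$ this is routine — the geometric decay dominates the polylogarithmic growth — and no further assumptions beyond \eqref{eq5.2.3} and the $\mathcal{O}$-hypothesis are needed. (Note in particular that, unlike in Proposition \ref{Prop5.1}, no auxiliary rate such as \eqref{eq5.1.7} is required here, because $f$ need not be Lipschitz and we work directly with the scalar sequence $a_N$ rather than with $L^p$ norms of differences.) This completes the proof.
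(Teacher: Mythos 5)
Your argument is correct and is essentially the adaptation of the proof of Proposition \ref{Prop5.1} that the paper has in mind when it says the proof is ``similar \ldots and thus omitted'': the forward direction is the triangle inequality, and the reverse direction telescopes along the dyadic refinements $2^{j}N$, with \eqref{eq5.2.3} identifying the limit of the partial sums as $\E\big[f(S_{T})\big]$ and the condition $\alpha>0$ ensuring the summability of $\sum_{j}2^{-j\alpha}\big(\log(2N)+j\log 2\big)^{\beta}=\mathcal{O}\big((\log(2N))^{\beta}\big)$. You are also right that, because one works with the scalar sequence $a_{N}$ rather than with $L^{p}$ norms, the weighted H\"older step and the auxiliary logarithmic rate \eqref{eq5.1.7} used in Appendix \ref{sec:aux4} are not needed here.
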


In order to improve the weak convergence rate, we employ Brownian bridge interpolation. Given the approximated log-spot process at two subsequent time nodes, $\bar{x}_{t_{n}}$ and $\bar{x}_{t_{n+1}}$, instead of taking the maximum over a piecewise linear interpolation as in \eqref{eq2.2.10b}, we simulate the maximum of the interpolating Brownian bridge, i.e.,
\begin{equation}\label{eq5.2.10}
\hat{m}_{[t_{n},t_{n+1}]} = \frac{1}{2}\Big[\bar{x}_{t_{n+1}} + \bar{x}_{t_{n}} + \sqrt{(\bar{x}_{t_{n+1}}-\bar{x}_{t_{n}})^{2}-2\sigma^{2}\big(t_{n},e^{\bar{x}_{t_{n}}},e^{\bar{m}_{t_{n}}}\big)\bar{v}_{t_{n}}\delta t\log(U_{n})}\hspace{1.5pt}\Big],
\end{equation}
where $(U_{n})_{0\leq n\leq N-1}$ are independent $\mathcal{U}[0,1]$ random variables, and update the running maximum via
\begin{equation}\label{eq5.2.11}
\bar{m}_{t_{n+1}} = \max\big\{\bar{m}_{t_{n}},\hat{m}_{[t_{n},t_{n+1}]}\big\},\hspace{.75em} \bar{m}_{0} = x_{0}.
\end{equation}

Finally, the data in Figure \ref{fig:5} suggest an empirical weak convergence order of 1/2 with piecewise linear interpolation and an order of 1 with Brownian bridge interpolation, as expected.
\begin{figure}[htb]
\begin{center}
\captionsetup{justification=centering}
\includegraphics[width=1.0\linewidth,height=3.4in]{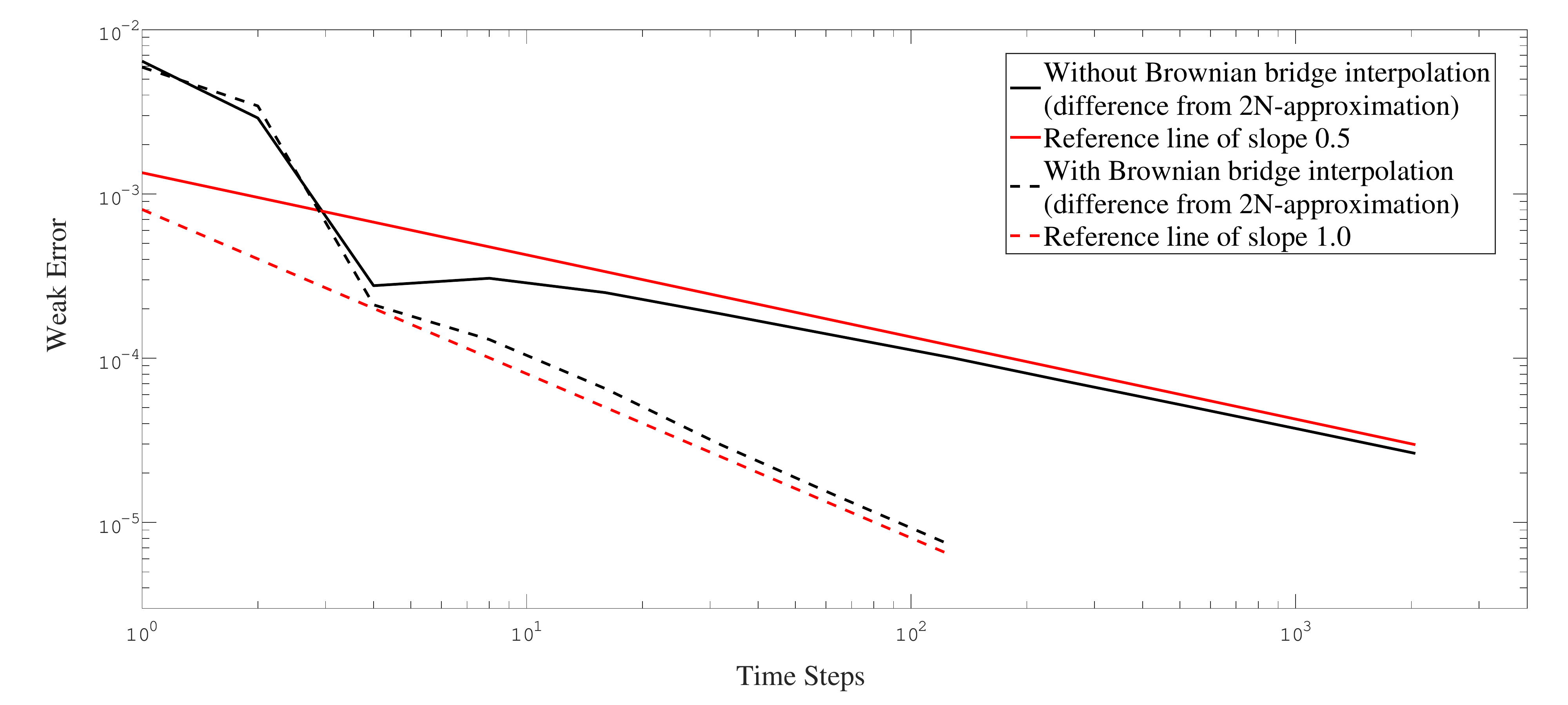}
\end{center}
\caption{The weak errors for a European call payoff (with and without Brownian bridge interpolation) against the number of time steps when the parameters are as defined in \eqref{eq5.1.1} and the strike is $K=0.9$, computed using up to $3.2\!\times\!10^{9}$ Monte Carlo paths (for a relative error less than 1\%).}
\label{fig:5}
\end{figure}

\section{Conclusions}\label{sec:conclusion}

The efficient pricing and hedging of vanilla and exotic options requires an adequate model that takes into account both the local and the stochastic features of the volatility dynamics. In this paper, we have studied a stochastic path-dependent volatility model together with a simple and efficient Monte Carlo simulation scheme. We have made some realistic model assumptions and established, up to a critical time, the strong convergence in $L^{p}$ with order 1/2 up to a logarithmic factor of the Euler approximation. In particular, this enables the use of multilevel simulation, as in \cite{Giles:2009}, with substantial efficiency improvements for the estimation of expected financial payoffs. Inevitably, this work also raises some questions, such as whether we can relax the condition on the stochastic volatility parameters and still deduce similar convergence properties of the scheme, as suggested by our numerical results.

\titleformat{\section}{\large\bfseries}{\appendixname~\thesection .}{0.5em}{}
\begin{appendices}

\section{Proof of Proposition \ref{Prop2.6}}\label{sec:aux1}

First, we show that Assumption \ref{Asm2.1} holds. Using \eqref{eq2.1.4}, \eqref{eq2.1.6} and the triangle inequality, we find that
\begin{align}\label{eq4.1.9}
\left|\mu(t,x,y)\right| &\leq \left|\mu(0,S_{min},S_{min})\right| + \left|\mu(t,S_{min}\vee x\wedge S_{max},S_{min}\vee y\wedge S_{max})-\mu(0,S_{min},S_{min})\right| \nonumber\\[3pt]
&\leq \left|\mu(0,S_{min},S_{min})\right| + C_{\mu,t}\Ind_{t\neq0} +\hspace{2pt} C_{\mu,S}\left|\left(S_{min}\vee x\wedge S_{max}\right)-S_{min}\right| \nonumber\\[2pt]
&+ C_{\mu,M}\left|\left(S_{min}\vee y\wedge S_{max}\right)-S_{min}\right| \nonumber\\[3pt]
&\leq \left|\mu(0,S_{min},S_{min})\right| + C_{\mu,t} + \left(C_{\mu,S}+C_{\mu,M}\right)\left(S_{max}-S_{min}\right).
\end{align}
Similarly, using \eqref{eq2.1.5} and \eqref{eq2.1.7}, we find that
\begin{equation}\label{eq4.1.10}
\sigma(t,x,y) \leq \sigma(0,S_{min},S_{min}) + C_{\sigma,t}\sqrt{T} + \sum_{j=1}^{N_{T}}{C_{\sigma,t,j}} + \left(C_{\sigma,S}+C_{\sigma,M}\right)\left(S_{max}-S_{min}\right).
\end{equation}
Second, we show that Assumption \ref{Asm2.2} holds. Using \eqref{eq2.1.4} and \eqref{eq2.1.6}, we find that
\begin{align}\label{eq4.1.11}
\left|\mu(t_{1},x_{1},y_{1})-\mu(t_{2},x_{2},y_{2})\right| &\leq C_{\mu,t}\Ind_{t_{1}\neq t_{2}} +\hspace{2pt} C_{\mu,S}\left|\left(S_{min}\vee x_{1}\wedge S_{max}\right)-\left(S_{min}\vee x_{2}\wedge S_{max}\right)\right| \nonumber\\[2pt]
&+ C_{\mu,M}\left|\left(S_{min}\vee y_{1}\wedge S_{max}\right)-\left(S_{min}\vee y_{2}\wedge S_{max}\right)\right|.
\end{align}
For convenience, define the function $f_{s}:\RR_{+}\rightarrow\RR_{+}$ given by
\begin{equation}\label{eq4.1.12}
f_{s}(a) = S_{min}\vee a\wedge S_{max}.
\end{equation}
From the Mean-Value Theorem, we know that, for $a_{1,2}\in\{x_{1,2},y_{1,2}\}$,
\begin{equation}\label{eq4.1.13}
\left|f_{s}(a_{1})-f_{s}(a_{2})\right| \leq S_{max}\left|\log\left(f_{s}(a_{1})\right)-\log\left(f_{s}(a_{2})\right)\right|.
\end{equation}
Furthermore, since $f_{s}(a)$ is increasing and $a^{-1}f_{s}(a)$ is decreasing, we have that
\begin{equation}\label{eq4.1.14}
\left|\log\left(f_{s}(a_{1})\right)-\log\left(f_{s}(a_{2})\right)\right| = \log\left(\frac{f_{s}(a_{1}\vee a_{2})}{f_{s}(a_{1}\wedge a_{2})}\right) \leq \log\left(\frac{a_{1}\vee a_{2}}{a_{1}\wedge a_{2}}\right) = \left|\log(a_{1})-\log(a_{2})\right|.
\end{equation}
Combining \eqref{eq4.1.11}, \eqref{eq4.1.13} and \eqref{eq4.1.14}, we deduce that
\begin{align}\label{eq4.1.15}
\left|\mu(t_{1},x_{1},y_{1})-\mu(t_{2},x_{2},y_{2})\right| &\leq C_{\mu,t}\Ind_{t_{1}\neq t_{2}} +\hspace{2pt} C_{\mu,S}S_{max}\left|\log(x_{1})-\log(x_{2})\right| \nonumber\\[2pt]
&+ C_{\mu,M}S_{max}\left|\log(y_{1})-\log(y_{2})\right|.
\end{align}
Similarly, using \eqref{eq2.1.5} and \eqref{eq2.1.7}, we deduce that
\begin{align}\label{eq4.1.16}
\left|\sigma(t_{1},x_{1},y_{1})-\sigma(t_{2},x_{2},y_{2})\right| &\leq C_{\sigma,t}\sqrt{\left|t_{1}-t_{2}\right|} + \sum_{j=1}^{N_{T}}{C_{\sigma,t,j}\Ind_{t_{1}\wedge\hspace{.5pt}t_{2}\hspace{.5pt}<\frac{jT}{N_{T}}\leq\hspace{.5pt}t_{1}\vee\hspace{.5pt}t_{2}}} \nonumber\\[0pt]
&+ C_{\sigma,S}S_{max}\left|\log(x_{1})-\log(x_{2})\right| + C_{\sigma,M}S_{max}\left|\log(y_{1})-\log(y_{2})\right|,
\end{align}
which concludes the proof. \qed

\section{Proof of Lemma \ref{Lem4.3}}\label{sec:aux2}

Since $N$ is a multiple of $N_{T}$, using \eqref{eq2.0.3}, \eqref{eq2.0.4} and the triangle inequality, we get
\begin{align}\label{eq4.0.3}
\left|\mu\big(u,S_{u},M_{u}\big)-\mu\big(u,\bar{S}_{\bar{u}},\bar{M}_{\bar{u}}\big)\right| &\leq C_{\mu,x}\left|x_{u}-x_{\bar{u}}\right| + C_{\mu,x}\left|x_{\bar{u}}-\bar{x}_{\bar{u}}\right| \nonumber\\[3pt]
&+ C_{\mu,m}\left|m_{u}-m_{\bar{u}}\right| + C_{\mu,m}\left|m_{\bar{u}}-\bar{m}_{\bar{u}}\right|
\end{align}
and
\begin{align}\label{eq4.0.4}
\left|\sigma\big(u,S_{u},M_{u}\big)-\sigma\big(\bar{u},\bar{S}_{\bar{u}},\bar{M}_{\bar{u}}\big)\right| &\leq C_{\sigma,t}\sqrt{\delta t} + C_{\sigma,x}\left|x_{u}-x_{\bar{u}}\right| + C_{\sigma,x}\left|x_{\bar{u}}-\bar{x}_{\bar{u}}\right| \nonumber\\[3pt]
&+ C_{\sigma,m}\left|m_{u}-m_{\bar{u}}\right| + C_{\sigma,m}\left|m_{\bar{u}}-\bar{m}_{\bar{u}}\right|.
\end{align}
First, we clearly have that
\begin{equation}\label{eq4.0.5}
\left|x_{u}-x_{\bar{u}}\right| \leq \sup_{t\in[0,u]}\left|x_{t}-x_{\bar{t}}\right| \hspace{1em}\text{ and }\hspace{1em} \left|x_{\bar{u}}-\bar{x}_{\bar{u}}\right| \leq \sup_{t\in[0,u]}\left|x_{t}-\bar{x}_{t}\right|.
\end{equation}
Second, note that
\begin{equation}\label{eq4.0.6}
\left|m_{u}-m_{\bar{u}}\right| = \sup_{t\in[0,u]}x_{t} - \sup_{t\in[0,\bar{u}]}x_{t} \leq \sup_{t\in[0,u]}\left(x_{\bar{t}} + \left|x_{t}-x_{\bar{t}}\right|\right) - \sup_{t\in[0,u]}x_{\bar{t}} \leq \sup_{t\in[0,u]}\left|x_{t}-x_{\bar{t}}\right|.
\end{equation}
Third, note that
\begin{align}\label{eq4.0.7}
\left|m_{\bar{u}}-\bar{m}_{\bar{u}}\right| &= |\hspace{-2pt}\sup_{t\in[0,\bar{u}]}x_{t} - \sup_{t\in[0,\bar{u}]}\bar{x}_{\bar{t}}| \leq \sup_{t\in[0,\bar{u}]}\left|x_{t}-\bar{x}_{\bar{t}}\right| \leq \sup_{t\in[0,\bar{u}]}\left|x_{t}-x_{\bar{t}}\right| + \sup_{t\in[0,\bar{u}]}\left|x_{\bar{t}}-\bar{x}_{\bar{t}}\right| \nonumber\\[2pt]
&\leq \sup_{t\in[0,u]}\left|x_{t}-x_{\bar{t}}\right| + \sup_{t\in[0,u]}\left|x_{t}-\bar{x}_{t}\right|.
\end{align}
Substituting back into \eqref{eq4.0.3} and \eqref{eq4.0.4} with the upper bounds derived in \eqref{eq4.0.5} -- \eqref{eq4.0.7} leads to the conclusion. \qed

\section{Proof of Lemma \ref{Lem4.6}}\label{sec:aux3}

(1) The argument follows that of Proposition 3.12 in \cite{Cozma:2016a}. Fix $p>1$ and note that
\begin{equation}\label{eqC.1.1}
S_{t}^{p} \leq S_{0}^{p}\exp\bigg\{p\mu_{max}t - \frac{p}{2}\int_{0}^{t}{\sigma^{2}\big(u,S_{u},M_{u}\big)v_{u}\,du} + p\int_{0}^{t}{\sigma\big(u,S_{u},M_{u}\big)\sqrt{v_{u}}\,dW^{s}_{u}}\bigg\}.
\end{equation}
Consider the H\"older pair $(q_{1},q_{2})$ given by
\begin{equation}\label{eqC.1.2}
q_{1} = 1+\sqrt{\frac{p-1}{p}} \hspace{5pt}\text{ and }\hspace{5pt} q_{2} = 1+\sqrt{\frac{p}{p-1}}\hspace{1pt}.
\end{equation}
Next, define the stochastic process
\begin{equation}\label{eqC.1.3}
Y_{t} = pq_{1}\int_{0}^{t}{\sigma(u,S_{u},M_{u})\sqrt{v_{u}}\,dW_{u}^{s}}
\end{equation}
with quadratic variation
\begin{equation}
\langle Y\rangle_{t} = p^{2}q_{1}^{2}\int_{0}^{t}{\sigma^{2}(u,S_{u},M_{u})v_{u}\,du}.
\end{equation}
Taking expectations in \eqref{eqC.1.1}, we deduce that
\begin{align}\label{eqC.1.4}
\E\big[S_{t}^{p}\big] \leq S_{0}^{p}e^{p \mu_{max}t}\E\bigg[\exp\bigg\{\frac{1}{q_{1}}\left[Y_{t} - \frac{1}{2}\langle Y\rangle_{t}\right] + \frac{1}{2}\hspace{1pt}p(pq_{1}-1)\int_{0}^{t}{\sigma^{2}(u,S_{u},M_{u})v_{u}\,du}\bigg\}\bigg].
\end{align}
Applying H\"older's inequality with the pair from \eqref{eqC.1.2} and taking the supremum over $[0,T]$ yields
\begin{align}\label{eqC.1.5}
\sup_{t \in [0,T]}\E\big[S_{t}^{p}\big] &\leq S_{0}^{p}e^{p \mu_{max}T}\sup_{t \in [0,T]}\E\bigg[\exp\bigg\{Y_{t} - \frac{1}{2}\langle Y\rangle_{t}\bigg\}\bigg]^{\frac{1}{q_{1}}} \nonumber \\[0pt]
&\times \E\bigg[\exp\bigg\{\frac{1}{2}\hspace{1pt}pq_{2}\big(pq_{1}-1\big)\sigma_{max}^{2}\int_{0}^{T}{v_{u}\,du}\bigg\}\bigg]^{\frac{1}{q_{2}}}.
\end{align}
The stochastic exponential is a martingale if Novikov's condition is satisfied, and hence if
\begin{equation}\label{eqC.1.6}
\E\bigg[\exp\bigg\{\,\frac{1}{2}\langle Y\rangle_{T}\bigg\}\bigg] \leq \E\bigg[\exp\bigg\{\frac{1}{2}\hspace{1pt}p^{2}q_{1}^{2}\sigma_{max}^{2}\int_{0}^{T}{v_{u}\,du}\bigg\}\bigg] < \infty.
\end{equation}
The finiteness of the two expectations in \eqref{eqC.1.5} follows from Lemma \ref{Lem3.2}.

\noindent
(2) The argument follows that of Proposition 3.13 in \cite{Cozma:2016a}. Fix $p>1$ and note that
\begin{equation}\label{eqC.2.1}
\bar{S}_{t}^{p} \leq S_{0}^{p}\exp\bigg\{p\mu_{max}t - \frac{p}{2}\int_{0}^{t}{\sigma^{2}\big(\bar{u},\bar{S}_{\bar{u}},\bar{M}_{\bar{u}}\big)\bar{v}_{u}\,du} + p\int_{0}^{t}{\sigma\big(\bar{u},\bar{S}_{\bar{u}},\bar{M}_{\bar{u}}\big)\sqrt{\bar{v}_{u}}\,dW^{s}_{u}}\bigg\}.
\end{equation}
Henceforth, we argue as before and use Lemmas \ref{Lem3.4} and \ref{Lem3.7}. \qed

\section{Proof of Proposition \ref{Prop5.1}}\label{sec:aux4}

Suppose that there exists a constant $C>0$ such that, for all $N\geq1$,
\begin{equation}\label{eq5.1.9}
\E\Big[\big|S_{T}-\bar{S}_{T\hspace{-1pt},\hspace{1pt}N}\big|^{p}\Big]^{\frac{1}{p}} \leq C\hspace{1pt}\frac{\big(\log(2N)\big)^{\beta}}{N^{\alpha}}\hspace{1pt}.
\end{equation}
Using this upper bound and the triangle inequality yields
\begin{align}\label{eq5.1.10}
\E\Big[\big|\bar{S}_{T\hspace{-1pt},\hspace{1pt}N}-\bar{S}_{T\hspace{-1pt},\hspace{1pt}2N}\big|^{p}\Big]^{\frac{1}{p}} &\leq
2^{1-{\frac{1}{p}}}\E\Big[\big|S_{T}-\bar{S}_{T\hspace{-1pt},\hspace{1pt}N}\big|^{p}\Big]^{\frac{1}{p}} + 2^{1-{\frac{1}{p}}}\E\Big[\big|S_{T}-\bar{S}_{T\hspace{-1pt},\hspace{1pt}2N}\big|^{p}\Big]^{\frac{1}{p}} \nonumber\\[1pt]
&\leq 2^{1-{\frac{1}{p}}}\big(1+2^{\beta-\alpha}\big)C\hspace{1pt}\frac{\big(\log(2N)\big)^{\beta}}{N^{\alpha}}\hspace{1pt}.
\end{align}
Conversely, suppose that there exists a constant $C_{1}>0$ such that, for all $N\geq1$,
\begin{equation}\label{eq5.1.11}
\E\Big[\big|\bar{S}_{T\hspace{-1pt},\hspace{1pt}N}-\bar{S}_{T\hspace{-1pt},\hspace{1pt}2N}\big|^{p}\Big]^{\frac{1}{p}} \leq C_{1}\hspace{1pt}\frac{\big(\log(2N)\big)^{\beta}}{N^{\alpha}}\hspace{1pt}.
\end{equation}
Fix any $1<\gamma<\frac{\eta p}{p-1}$ and define the sequence $(a_{i})_{i\geq0}$ given by
\begin{equation}\label{eq5.1.12}
a_{i} = (i+1)^{-\gamma}.
\end{equation}
For any $l\in\mathbb{N}\cup\{0\}$ and $x_{0},x_{1},\hdots,x_{l}\geq0$, H\"older's inequality yields
\begin{equation}\label{eq5.1.13}
\Bigg(\sum_{i=0}^{l}{x_{i}}\Bigg)^{p} \leq \Bigg(\sum_{i=0}^{l}{a_{i}^{1-p}x_{i}^{p}}\Bigg)\Bigg(\sum_{i=0}^{l}{a_{i}}\Bigg)^{p-1}.
\end{equation}
Furthermore,
\begin{equation}\label{eq5.1.14}
\sum_{i=0}^{l}{a_{i}} < \zeta(\gamma) < \infty,
\end{equation}
where $\zeta$ is the Riemann zeta function. Using the triangle inequality, \eqref{eq5.1.13} and \eqref{eq5.1.14}, and then taking expectations, we get
\begin{equation}\label{eq5.1.15}
\E\Big[\big|S_{T}-\bar{S}_{T\hspace{-1pt},\hspace{1pt}N}\big|^{p}\Big] \leq \zeta(\gamma)^{p-1}\sum_{i=0}^{l-1}{a_{i}^{1-p}\E\Big[\big|\bar{S}_{T\hspace{-1pt},\hspace{1pt}2^{i}N}-\bar{S}_{T\hspace{-1pt},\hspace{1pt}2^{i+1}N}\big|^{p}\Big]} + \zeta(\gamma)^{p-1}a_{l}^{1-p}\E\Big[\big|\bar{S}_{T\hspace{-1pt},\hspace{1pt}2^{l}N}-S_{T}\big|^{p}\Big].
\end{equation}
However, we know from \eqref{eq5.1.7} that there exists a constant $C_{2}>0$ such that, for all $N\geq1$,
\begin{equation}\label{eq5.1.16}
\E\Big[\big|S_{T}-\bar{S}_{T\hspace{-1pt},\hspace{1pt}N}\big|^{p}\Big]^{\frac{1}{p}} \leq C_{2}\big(\log(2N)\big)^{-\eta}.
\end{equation}
Substituting back into \eqref{eq5.1.15} with the upper bounds in \eqref{eq5.1.11} and \eqref{eq5.1.16}, we deduce that
\begin{align}\label{eq5.1.17}
\E\Big[\big|S_{T}-\bar{S}_{T\hspace{-1pt},\hspace{1pt}N}\big|^{p}\Big] &\leq C_{1}^{p}\zeta(\gamma)^{p-1}\hspace{1pt}\frac{\big(\log(2N)\big)^{\beta p}}{N^{\alpha p}}\sum_{i=0}^{l-1}{\frac{(i+1)^{\gamma(p-1)+\beta p}}{2^{i\alpha p}}} \nonumber\\[1pt]
&+ C_{2}^{p}\big(\log(2)\big)^{-\eta p}\zeta(\gamma)^{p-1}(l+1)^{\gamma(p-1)-\eta p},
\end{align}
and taking the limit as $l$ goes to infinity leads to
\begin{equation}\label{eq5.1.18}
\E\Big[\big|S_{T}-\bar{S}_{T\hspace{-1pt},\hspace{1pt}N}\big|^{p}\Big] \leq C_{1}^{p}2^{\alpha p}\zeta(\gamma)^{p-1}\hspace{1pt}\frac{\big(\log(2N)\big)^{\beta p}}{N^{\alpha p}}\sum_{n=1}^{\infty}{\frac{n^{\gamma(p-1)+\beta p}}{2^{n\alpha p}}}\hspace{1pt}.
\end{equation}
The conclusion follows from the fact that the series on the right-hand side converges. \qed

\end{appendices}

\bibliographystyle{abbrv}
\bibliography{references}

\end{document}